\documentclass[journal]{IEEEtran}
\usepackage{hyperref}
\usepackage{cite}
\usepackage[inline]{enumitem}
\usepackage{pifont}
\usepackage{mathmacros}
\usepackage{float}
\usepackage{graphicx}
\usepackage{xcolor}
\usepackage[capitalize]{cleveref}
\usepackage{subfig}
\usepackage[normalem]{ulem}
\usepackage{standalone}
\usepackage[font=small]{caption}
\usepackage{tikz}

\usepackage{pgfplots} 
\pgfplotsset{compat=1.16}

\usepgfplotslibrary{fillbetween,groupplots}

\theoremstyle{theorem}
\newtheorem{definition}{Definition}
\theoremstyle{theorem}

\theoremstyle{theorem}
\newtheorem{remark}{Remark}

\newcommand{\dham}{d^{(\mathrm{h})}}
\newcommand{\ds}{d^{(\mathrm{s})}}
\newcommand{\dc}{d^{(\mathrm{c})}}
\newcommand{\Span}[1]{\langle#1\rangle}
\newcommand{\Gr}{\mathrm{Gr}}

\begin{document}
%

\title{
Grassmannian-Coded Beamforming for mmWave Channel Sensing with Unknown Complex Path Gain
}


%
\author{Parthasarathi~Khirwadkar,
        Robin~Rajam\"{a}ki,
        and~Piya~Pal
\thanks{P. Khirwadkar and P. Pal are with the Department of Electrical and Computer Engineering, University of California San Diego, CA, 92093 USA}
\thanks{R. Rajam\"{a}ki is with Tampere University, Finland.}
\thanks{Manuscript received XX XX, 2026; revised XX XX, 2026.}}

\markboth{Journal of \LaTeX\ Class Files,~Vol.~13, No.~9, September~2025}%
{Shell \MakeLowercase{\textit{et al.}}: Bare Demo of IEEEtran.cls for Journals}
%



\maketitle

\begin{abstract}
This paper introduces a subspace-coding perspective to millimeter-wave channel sensing with a single RF chain when the complex channel gain is unknown. 
We show that in this case, candidate directions-of-arrival (DoAs) map naturally to subspaces through their beamspace responses, revealing an intrinsic Grassmannian geometry. This motivates \emph{beamspace Grassmannian codes} (BGCs), designed to reduce DoA error by maximizing the minimum subspace distance of the joint beamformer-array response. We identify two regimes: one in which existing Grassmannian packings are exactly realizable as BGCs when the angular grid matches the array size, and another in which realizability for finer grids is constrained by the array geometry. Our analysis establishes the joint roles of subspace distance and beamforming gain in sensing performance and motivates two complementary beamformer designs. Without prior DoA information, we develop spatially isotropic beamformers based on algebraic Grassmannian packings and modulation-based channel codes. With a known DoA region of interest, we design convolutional beamspaces that combine directional gain with favorable subspace distance. Numerical results demonstrate robust BGC performance for both on-grid and off-grid DoAs, supporting the effectiveness of the proposed Grassmannian framework for mmWave channel sensing.

\end{abstract}

\begin{IEEEkeywords}
mmWave channel sensing, Grassmannian packing, Subspace codes, Sparse arrays, Reed-Muller, 
Beamforming
\end{IEEEkeywords}

\IEEEpeerreviewmaketitle

\section{Introduction}
\label{sec:Introduction}
Sensing is poised to become a defining capability of 6G, driven both by integrated sensing and communications (ISAC) \cite{zhang2021anoverview,gonzalezprelcic2024theintegrated,liu2022integrated} and the use of higher frequency bands, including {millimeter-wave} (mmWave) spectrum \cite{rappaport2015millimeter}. Such multiple-input multiple-output (MIMO) systems must meet stricter spectral, energy, and cost demands \cite{andrews20246g} while exploiting larger {antenna arrays}, raising challenges such as channel state acquisition and pilot overhead \cite{tataria20216g}. A fundamental challenge is channel sensing (estimating angular parameters) under hardware constraints, especially with hybrid/analog beamforming with fewer power-hungry radio-frequency (RF) chains \cite{Alkhateeb2014,noh2017multi,zhang2017codebook,Caire2019}. These induce a \emph{beamspace} model, where the receiver observes linear combinations of antenna signals, using a single RF chain. Hence, a key challenge is in designing {accurate}, low-overhead channel estimation for mmWave communication.

Several works address channel sensing {and beam alignment} under hardware constraints. Non-adaptive compressed sensing exploits mmWave channel sparsity to avoid exhaustive beam sweeps \cite{Indyk2018,Myers2019falp,marzi2016compressive,alkhateeb2015compressed}. Antenna selection has also been applied to sensing and ISAC \cite{wang2019dual,rajamaki2024sparse}, {with} \emph{sparse arrays}, improving resolution and identifiability \cite{wang2017coarrays,sarangi2023superresolution,amin2024sparsearrays}. Adaptive methods \cite{zhang2017codebook,noh2017multi,Javidi2019ie8n,Yu2022tddMassiveMIMO,Erkip2021} {use} feedback from past measurements to refine future probing. {Deep learning based methods likewise learn sensing policies and site-specific codebooks \cite{Andrews2022} from data, both adaptively \cite{sohrabi2022active,ParthICASSP} and non-adaptively \cite{andrews2021channel,Andrews2022,Yu2022tddMassiveMIMO}.}


Error-correcting codes (ECCs) have recently emerged as a promising tool for {low overhead} non-adaptive channel sensing {and beam alignment} \cite{Ekici2018,Love2019ssm,Dai2025,Yu2025ongrid}. The key idea is to embed algebraic coding structure directly into the sensing beamformers, thereby translating the error-correcting capability of the code into robustness of direction-of-arrival (DoA) estimation. In \cite{Love2019ssm,Dai2025,Yu2025ongrid}, each angle on a discrete grid is associated with a unique codeword, and {the beamformers are constructed such that the beamspace measurements realize the codeword corresponding to the true DoA.} 
For example, \cite{Yu2025ongrid} designs the beamformers 
{to realize binary phase shift keying} (BPSK)-modulated Reed--Muller codes { as its beamspace response}. Relative to compressed sensing approaches, this coding-theoretic viewpoint offers two important advantages: deterministic beamformer construction, avoiding random beam patterns \cite{alkhateeb2015compressed,marzi2016compressive}, and performance guarantees that follow directly from structural properties of the underlying code.



A separate line of signal processing work exploits convolutional beamspace (CBS) structure to combine advantages of antenna- and beamspace processing for channel sensing. {This is particularly useful when probing selected spatial directions, e.g., when prior DoA information is available.} Introduced in \cite{Vaidyanathan2020thf}, CBS constructs beamspace transformations using linear shifts of a base beamforming filter while preserving Vandermonde structure, thereby enabling classical subspace DoA estimators to operate directly on beamspace measurements. Subsequent work \cite{Vaidyanathan2024} studied CBS for mmWave sensing under uniform and non-uniform decimation, while \cite{Myers2025insector,Verhaegen2022} developed two-stage CBS strategies to improve angular resolution.

\subsection{Grassmannian Coding Perspective and Open Questions}
Despite recent advances in ECC- and CBS-based channel sensing with reduced RF chains, fundamental questions about  codebook design and their performance remain open. 
{In particular, \cite{Yu2025ongrid} showed that when the dominant-path channel coefficient is known, the decoding error probability of ECC-based sensing beamformers directly determines the DoA estimation error probability.} 
Probability of error is, in general, a well-established reliability criterion in DoA estimation, particularly when recovery entails selecting among angular hypotheses. Related formulations include support-recovery error probability, tolerance-based DoA error probability, and outlier probability \cite{tang2010performance,Javidi2019ie8n,Athley2005}.
Under such partial {channel state information (CSI)}, ECC performance is naturally governed by the Hamming distance between codewords. But is Hamming distance still the right design metric when the channel coefficient is unknown? As we will show, an unknown complex path gain causes each candidate DoA to correspond to a one-dimensional subspace (invariant to scaling due to the unknown gain), revealing an intrinsic \emph{Grassmannian geometry} \cite{strohmer2003grassmannian,conway1996packing}. Consequently, \emph{subspace distance}, rather than Hamming distance, becomes the natural codebook metric, and beamspace channel sensing reduces to a structured \emph{Grassmannian line-packing} problem \cite{strohmer2003grassmannian,conway1996packing,Giannakis2005,love2003grassmanian,love2005limited,love2006limited,mukkavilli2003beamforming}, but with additional structure induced by the antenna array manifold.

This paper develops a unified beamspace-design framework based on subspace codes, channel codes, and convolutional beamspaces through the lens of {\emph{beamspace Grassmannian codes} (BGCs)}. 
The framework connects to the extensive literature on Grassmannian packings \cite{conway1996packing,strohmer2003grassmannian}, including noncoherent communications \cite{Marzetta2000,zheng2002communication}, limited-feedback beamforming \cite{mukkavilli2003beamforming,love2003grassmanian,love2008anoverview,khoshnevis2012bit}, and subspace coding \cite{KK,Mahdavifar2022}. In these settings, subspaces serve as codewords for noncoherent transmission, low-overhead feedback, or robustness to channel perturbations. The distinguishing feature of mmWave sensing is its highly structured channel geometry, where the array manifold and beamformer jointly determine the induced Grassmannian code.

Our recent work \cite{mahdavifar2024subspace}, inspired by \emph{analog subspace codes} \cite{Mahdavifar2022}, exploited this structure in antenna space and showed that each DoA is naturally encoded by the subspace spanned by the corresponding antenna{-space} measurements. A key result of \cite{mahdavifar2024subspace} is that estimation performance is governed by the \emph{minimum subspace distance} between DoA codewords, which is controlled by the array geometry; suitably designed sparse arrays can therefore outperform conventional uniform arrays. 
Extending this viewpoint to beamspace sensing with fewer RF chains {than antennas} introduces additional structural constraints. {As we will see,} beamspace codewords are jointly induced by the beamforming matrix and array manifold, leading to a nontrivial \emph{realizability} condition, where existing Grassmannian line packings are useful for mmWave sensing when they are compatible with the employed array geometry.
\subsection{Contributions and Organization} \label{sec:contributions}
Our main contributions are summarized as follows:

\begin{itemize}

\item \textbf{Role of subspace distance in channel sensing:} We show that channel sensing with unknown complex {path} gain $\alpha$ is naturally a noncoherent subspace decoding problem, where each candidate DoA is represented by the one-dimensional subspace spanned by its beamspace response. We establish that the minimum-distance (MD) subspace decoder is equivalent to the widely studied Maximum-Likelihood (ML) DoA estimator \cite{stoica1989music,van2002optimum} 
when $\alpha$ and the DoA are deterministic unknowns. For scenarios without any prior DoA information, we introduce \emph{spatially isotropic} beamformers with uniform total beamgain across directions, and explicitly characterize how beamforming gain and minimum subspace distance govern the error probability.

\item \textbf{Realizability of BGCs:} Motivated by this geometry, we characterize which Grassmannian line packings can be realized as BGCs. 
{Two regimes unique to our problem emerge:
}
(a) the \emph{fully realizable} regime, 
where arbitrary packings can be implemented, and (b) the \emph{partially realizable} regime, 
where exact realization requires an array-dependent row-space condition. 
We show that {constructions based on so-called Singer and Bose--Chowla sets} remain realizable in both regimes through low-complexity antenna selection.
Numerically, these designs remain effective for both off-grid DoAs and 
angular grids refined beyond the number of antennas.

\item \textbf{Subspace distance of modulated channel codes:} We 
connect conventional channel code-based designs to 
the 
Grassmannian viewpoint to evaluate their efficacy when $\alpha$ is unknown. For {BPSK-mapped} binary codebooks, 
 we derive the exact minimum subspace distance in terms of both the minimum and maximum Hamming distances. 
{We show that} large minimum Hamming distance alone is insufficient for sensing with unknown $\alpha$, since complementary binary codewords represent the same line. This result provides a principled basis for selecting and pruning channel codes, including Reed--Muller codes, for subspace-based channel sensing.

\item \textbf{Spatially selective BGCs:} Finally, we extend BGCs using convolutional beamspaces to settings with prior DoA information. Here, the shift pattern controls minimum subspace distance, while the filter controls beamforming gain. Singer and Bose--Chowla shift patterns can therefore preserve favorable distance properties while concentrating sensing gain over a prescribed angular region. 
\end{itemize}

The paper is organized as follows. \cref{sec:background} introduces the problem formulation and reviews Grassmannian line packing. \cref{sec:relation_ML_MAP_minimumdist} develops the BGC framework and establishes the connection between ML and minimum-subspace-distance DoA estimation. \cref{sec:design_isotropic_BGC} studies spatially isotropic BGCs constructed from Grassmannian packings and channel codes under the two realizability conditions, without prior DoA information, while \cref{sec:harnessing_conv_beamspace} develops spatially selective BGCs when such prior DoA information is available. {\cref{sec:numerical} numerically validates the theoretical results and examines extensions to off-grid DoAs.} \cref{sec:conclusions} concludes the paper.

\section{
Problem Formulation and Background Review}\label{sec:background}

\subsection{System model} \label{sec:system_model}
We consider a mmWave 
communication system between a base station (BS) equipped with a single RF chain connected to a linear array of $N_a$ antennas serving $U$ single-antenna user equipment (UE) \cite{Javidi2019ie8n,Yu2025ongrid}. The antenna location at BS are denoted by $\SS\triangleq\{d_1,d_2,\dots,d_{N_a}\}$, 
{where the $d_i$'s are integer multiples of half the carrier wavelength $\lambda/2$.} 
The uplink channel between BS and a given UE consisting of a single line-of-sight path is of the form 
\cite{Sayeed2016,Alkhateeb2014}
\begin{align} \label{eq:channel_model}
    \h = \alpha\a_{\SS}(\sin(\theta)),
\end{align}
where $\theta$ is the DoA, $\a_{\SS}(f) = [e^{j\pi d_1 f},e^{j\pi d_2 f},\dots,e^{j\pi d_{N_a}f}]$ is the array steering vector corresponding to spatial frequency $f=\sin(\theta)$, and $\alpha$ is the \emph{unknown} channel coefficient which combines transmit power and propagation loss. During the uplink pilot training phase, each UE $u'\!=\!1,\dots,U$ transmits an 
orthonormal spreading sequence $\s_{u'}(t)$ satisfying $\s_{u'}(t)^H\s_{u}(t)\!=\!\delta(u-u')$. To sense the channel corresponding to user $u$, the BS acquires scalar measurement $y_t\!\in\!\CC, t\!=\!1,\dots,T$ using beamforming vector $\w_t\!\in\!\CC^{N_a}$ and matched filtering with $\s_u(t)$ 
\cite{Javidi2019ie8n}: 
\begin{align} \label{eq:measurement_model} 
    y_t &= \w_t^H \left( \sum_{u'=1}^{U}\h_{u'}\s_{u'}(t)^T\right)\s_{u}(t)^* + \w_t^H\N_t\s_{u}(t)^*\nonumber\\
    &= \w_t^H\h_u+ \w_t^H\n_t
    = \w_t^H\a_{\SS}(f)\alpha + \w_t^H\n_t,
\end{align}
where $f=\sin(\theta)$ is the DoA of user $u$, and $\n_t \triangleq \N_t\s_{u'}(t)^*$ is the noise after matched filtering. 
We express all received measurements $\y = [y_1,\dots,y_T]^T$ jointly as
\begin{align} \label{eq:measurement_model_joint}
    \y = \alpha\W\a_{\SS}(f) + \z,
\end{align}
where $\W = [\w_1,\dots,\w_T]^H$ and $\z= [\w_1^H\n_1,\dots,\w_T^H\n_T]^T$.

\textbf{Problem formulation:}  Our objective is to estimate DoA $f$ from  $\y$ in \eqref{eq:measurement_model_joint} \emph{without prior knowledge of $\alpha$}. To this end, we design the beamforming matrix $\W$ under the following two scenarios: 
\begin{enumerate*}[label=(\roman*)]
    \item no prior knowledge on $f$, and
    \item partial knowledge where $f \in \mathcal{F} \subset [-1, 1)$, enabling region-focused beamforming.
\end{enumerate*} 
These cases are addressed in \cref{sec:design_isotropic_BGC,sec:harnessing_conv_beamspace}, respectively. 
We model both $f$ and $\alpha$ as deterministic unknowns and do not impose statistical priors on either.
We make the following assumptions throughout the paper:
\begin{enumerate}[label=\textbf{A\arabic*}]
    \item\!\!\!: Direction-of-arrival $f$ lies on a uniform grid 
    $\cG = \{f_n = -1 + \frac{2(n-1)}{N_g},n=1,\dots,N_g\}$ with $N_g$ grid points.\label{a:grid}
    \item\!\!\!: Antennas at the base station are arranged in a uniform linear array (ULA) such that $\SS=\UU \triangleq \{0,1\dots,N_a-1\}$.\label{a:ula}
    \item\!\!\!: Entries of 
    $\N_t$ are i.i.d 
    complex Gaussian  $\CN(0,\sigma^2), \forall t$.\label{a:noise}
\end{enumerate}
We also make the following beamformer design choices:
\begin{enumerate}[label=\textbf{D\arabic*}]
    \item\!\!\!:  Beamforming matrix $\W$ satisfies $\W\a_{\SS}(f_n)\!\neq\!\zero,\ \forall f_n\!\in\!\cG$.\label{d:identifiability}
    \item\!\!\!: Beamformers satisfy $\|\w_t\|_2 = 1$ for all $t=1,\dots,T$.\label{d:unitnorm}
\end{enumerate}
Assumption \labelcref{a:grid} is standard in the literature \cite{Alkhateeb2014,Yu2025ongrid} and adopted here for simplicity, with relaxations examined numerically in \cref{sec:numerical}. 
We focus on the regime \(N_g\geq N_a\), which includes both the critically sampled case \(N_g=N_a\) and the oversampled regime \(N_g>N_a\) relevant to high-resolution DoA estimation. This automatically ensures that angle estimation is performed on a grid with sufficient resolution relative to the number of antennas and/or aperture.
\labelcref{d:identifiability} ensures that every grid point is identifiable, while \labelcref{d:unitnorm} together with assumption \labelcref{a:noise} imply that noise after matched filtering and beamforming in \eqref{eq:measurement_model_joint} is i.i.d zero-mean uncorrelated complex Gaussian i.e. $\z\sim \CN(0,\sigma^2\I)$. 
In the next subsection, we review the concept of Grassmanian line packing which will play an important role in our discussion on designing $\W$ for robust channel sensing.



\subsection{{Comparative Review of Grassmanian Line Packing}} \label{sec:review_grassmanian}

Grassmannian line packings broadly refer to an optimal way of arranging a given number (say $N$) of one-dimensional subspaces (or lines) in an $M$-dimensional complex linear vector space $\CC^{M}$. There is a vast body of literature on Grassmannian line packings and their mathematical properties \cite{conway1996packing,strohmer2003grassmannian}, which have found extensive use in wireless communication and coding \cite{Marzetta2000,zheng2002communication,KK,Mahdavifar2022}, in particular 
in 
limited feedback beamforming \cite{love2003grassmanian,mukkavilli2003beamforming,love2008anoverview}. To elucidate the novel and unique aspects of our problem and results, we review the key mathematical properties of Grassmannian line packings, and how they have been used in limited feedback beamforming.

\subsubsection{Key Mathematical Properties of Grassmannian Packings} 

The complex Grassmannian manifold $\Gr_{\CC}(d,M)$ is defined as
\begin{align*}
   \Gr_{\CC}(d,M) \triangleq \left\{ \mathcal U\subset \CC^M: \dim(\mathcal U)=d \right\},
\end{align*}
which is the set of all \(d\)-dimensional subspaces of \(\CC^M\). A finite subset \(\mathcal C\subset\operatorname{Gr}_{\CC}(d,M)\) is referred to as a Grassmannian codebook \cite{strohmer2003grassmannian,Mahdavifar2022}. In this paper, we will focus primarily on Grassmanian codebooks of dimension $d=1$, i.e., codebooks of lines passing through the origin. 
An element of $\mathrm{Gr}_{\CC}(1,M)$ can be represented by the span of a non-zero vector $\u$ denoted as $\Span\u 
$. This structure captures the equivalence 
among vectors in $\CC^{M}$ under global scaling $r$ and phase ambiguity $\phi$ as:
\begin{align*}
    \u \sim re^{j\phi} \u,\ \forall r>0, \phi\in [0,2\pi).
\end{align*}  
In other words, two unit vectors $\u,\v$ 
{differing by a global phase represent the same Grassmannian line.} 
The chordal distance between lines \(\Span\u,\Span\v\in \Gr_{\CC}(1,M)\) is defined as $\dc(\Span\u, \Span\v)\!\triangleq\!\sqrt{1\!-\!\frac{|\u^H\v|^2} {\|\u\|_2^2\|\v\|_2^2}} $. The squared chordal distance, also referred to as  ``subspace distance", is defined as  \cite{Mahdavifar2022,mahdavifar2024subspace}
\begin{align} 
\ds(\Span\u, \Span\v) &\triangleq \left(\dc(\Span\u, \Span\v)\right)^2 = 1- \frac{|\u^H\v|^2} {\|\u\|_2^2\|\v\|_2^2}. \label{eq:subspace_distance_def} 
\end{align} 
Although \(\dc\) is a distance metric, its square $\ds$ is a quasi-distance since it satisfies the generalized triangle inequality \cite{Mahdavifar2022}. The \emph{minimum subspace distance} of a Grassmannian line codebook $\cC\subset \mathrm{Gr}_{\CC}(1,M)$ of cardinality \(|\cC|\!=\!N\) is defined as
\begin{align} 
\ds_{\min}(\mathcal C)\triangleq \min_{{i,j\in[N],i\neq j}} \ds(\Span{\u_i},\Span{\u_j}). \label{eq:min_squared_chordal_distance} 
\end{align}
Hence, for a fixed ambient dimension \(M\) and codebook cardinality \(N\), the complex Grassmannian line-packing problem is to obtain a codebook $\cC^\star \subset \Gr_{\CC}(1,M) $ such that 
\begin{align} 
\cC^\star = \argmax_{{ \mathcal C\subset\mathrm{Gr}_{\CC}(1,M), |\mathcal C|=N }} \ds_{\min}(\mathcal C) \label{eq:grassmannian_line_packing}.
\end{align}

 The Grassmannian geometry, 
 is 
 invariant under unitary rotations. In particular, let \(\mathbf U\in\CC^{M\times M}\) be unitary, then for any \( \Span{\u},\Span{\v} \in\mathrm{Gr}_{\CC}(1,M)\), we have
 \begin{align*}
  \ds(\Span{\U\u},\Span{\U\v}) = \ds(\Span{\u},\Span{\v}).   
 \end{align*}
 Consequently, the rotated codebook
 \( \mathbf U\cC \triangleq \{\Span{\U\u_i}: i=1,\dots,N\} \)
 has the same minimum distance as \(\cC\), i.e., $\ds_{\min}(\mathbf U\cC) = \ds_{\min}(\cC).$ Thus, codebooks that differ only by a common unitary rotation are geometrically equivalent Grassmannian line packings. This indicates that it is possible to generate multiple Grassmanian codebooks with the same minimum distance property through unitary transformation. 

An upper bound on minimum subspace distance of any Grassmannian line codebook $\cC \subset\Gr_{\CC}(1,M)$ with $|\cC| = N$ is given by the Welch bound \cite{welch1967lower}, which can be expressed as
\begin{align}\label{eq:welch}
    \ds_{\min}(\cC) = 1 - \max_{i,j\in [N], i\neq j} |\u_i^H\u_j|^2 &\leq 1 - \frac{N-M}{M(N-1)}.
\end{align}
The Welch bound is attainable by $\cC\in \Gr_{\CC}(1,M)$ only when $N\leq M^2$ \cite{strohmer2003grassmannian}. Hence, in the quadratic regime $N=O(M^2)$, it 
provides a useful benchmark to evaluate the performance of Grassmannian line packings, since a line packing that attains the Welch bound is clearly an optimal packing. 
Welch-bound-achieving packings may not exist though for every $M, N$ \cite{strohmer2003grassmannian}.

\subsubsection{Constructions}
A vast body of work addresses optimal Grassmannian line packings \cite{conway1996packing,strohmer2003grassmannian,Giannakis2005,dhillon2008constructing-e8f,medra2014flexible}. Explicit constructions exploit equiangular tight frames, conference matrices, harmonic frames, and combinatorial difference sets \cite{strohmer2003grassmannian,Giannakis2005}. In particular, cyclic difference sets yield constant-modulus packings attaining the Welch bound for selected $(M,N)$ \cite{Giannakis2005,strohmer2003grassmannian}. Since Welch-bound achieving constructions are restricted to $N\leq M^2$ and are unavailable for general $(M,N)$, numerical alternatives include generalized Lloyd methods and alternating projections 
\cite{conway1996packing,Giannakis2005,dhillon2008constructing-e8f,medra2014flexible}.

A broadly applicable alternative to generate a Grassmanian line codebook on $\Gr_{\CC}(1,M)$  is random vector quantization (RVQ)  \cite{Love2007,raghavan2013ensemble}. {In limited feedback beamforming, given $M$ transmit} antennas and $B$ feedback bits, an RVQ codebook $\cC_{\rm RVQ}\subset \Gr_{\CC}(1,M_t)$ consists of codewords
\begin{align}
    \cC_{\mathrm{RVQ}} =\left\{ \Span{\f_1},\ldots,\Span{\f_{2^B}} \right\}, \label{eq:rvq_definition}
\end{align}
where the unit-norm $\f_i\in\CC^M$ are drawn isotropically and independently \cite{Love2007,raghavan2013ensemble}. {Such random Grassmannian line codes achieve nonvanishing minimum subspace distance with probability approaching one exponentially in dimension \cite{Mahdavifar2022}.} RVQ performance for limited-feedback beamforming has been characterized through ensemble-average loss metrics \cite{Love2007,raghavan2013ensemble}. However, these guarantees are inherently probabilistic. Deterministic packings instead provide guaranteed minimum distance and worst-case performance, particularly valuable for moderate codebook sizes \cite{pllaha2022binary}, while their algebraic structure can reduce implementation, storage, and decoding complexity \cite{Heath2009,pllaha2022binary}. \cref{sec:design_isotropic_BGC} exploits these constructions for spatially isotropic Grassmannian beamspace design.

\subsubsection{Distinctions from Limited-Feedback Grassmannian Beamforming}\label{sec:distinction_LFGB_and_BGC}
As \cref{sec:relation_ML_MAP_minimumdist} will show, when channel coefficient $\alpha$ is unknown, the DoA information is contained in one-dimensional subspace $\Span{\W\a_{\SS}(f)}$. The performance of the ML estimator therefore motivates designing the set of possible beamspace responses  $\{ \Span{\W\a_{\SS}(f)} : f\in\mathcal G \}$ as a Grassmannian codebook, which we call a \emph{beamspace Grassmannian code} (BGC). Unlike limited-feedback beamforming, where the {beamformers} can be chosen directly as a Grassmannian packing \cite{love2003grassmanian,mukkavilli2003beamforming}, BGC codewords are the \emph{beamspace responses} $\W\a_{\SS}(f)$. Thus, the induced Grassmannian geometry is  jointly determined by the beamformer $\W$ and the array manifold. 
This raises a central realizability question: \emph{Can $\W$ realize a prescribed Grassmannian line packing for a given array geometry $\SS$?} In \cref{sec:grassmannian_inv}, we show that arbitrary packings are realizable when the number of antennas $N_a$ equals the angular grid size $N_g$, whereas for $N_g>N_a$ realizability becomes array-dependent and arbitrary packings are generally infeasible.

{A second distinction concerns the origin of the Grassmannian criterion. In limited-feedback beamforming, it arises from quantizing a full-CSI beamformer under a statistical channel model. For i.i.d.\ Rayleigh fading, minimizing an upper bound on average SNR loss yields Grassmannian line packing \cite{love2003grassmanian}; outage minimization gives the same criterion for single-stream beamforming \cite{mukkavilli2003beamforming}, while multi-stream precoding can induce different metrics \cite{love2005limited}.} In contrast, BGCs arise directly from the ML estimation objective, with $\alpha$ modeled as an unknown deterministic variable.

Limited-feedback beamforming over correlated channels offers a further point of comparison, since it also modifies Grassmannian codebooks to account for channel structure, but it does so in a fundamentally different manner. Given a known correlation matrix $\mathbf{R}$, prior work adapts Grassmannian codebooks through correlation-dependent skewing or companding followed by normalization \cite{love2006limited}. For RVQ, \cite{raghavan2013ensemble} similarly advocates a fixed skew matrix aligned empirically with dominant channel directions. This differs fundamentally from our BGC design: here, $\W$ must be chosen so that
$\{ \Span{\W\a_{\SS}(f)} : f\in\mathcal G \}$
itself forms the desired Grassmannian codebook. As discussed in \cref{sec:design_isotropic_BGC}, this cannot in general be achieved by a simple pre-(or post)-multiplicative skewing of an existing Grassmannian codebook.

\section{Relation of Subspace Distance Decoding with Maximum-Likelihood 
DoA Decoder
}\label{sec:relation_ML_MAP_minimumdist}

In the absence of noise, it can be observed from \eqref{eq:measurement_model_joint} that the measurement $\y$ lies in the subspace spanned by $\W\a_{\SS}(f)$. In other words, the vectors $\y=\alpha \W\a_{\SS}(f)$ and $\W\a_{\SS}(f)$ are equivalent for any $\alpha \in \CC\backslash\{0\}$ since they represent the same DoA. When $\alpha$ is unknown, we can recover the DoA by identifying the subspace containing $\y$. 
This naturally leads to interpreting the beamformer response $\W\a_{\SS}(f)$ over the grid $f\in\cG$ as a \emph{beamspace Grassmanian code} (BGC).

\begin{definition}[Beamspace Grassmanian Codes] \label{def:BSC}
Given beamformers $\W$ and array geometry $\SS$, beamspace Grassmanian codes corresponding to spatial grid $\cG$ are defined as
\begin{align*}
    \cC(\W,\SS) \triangleq \left\{\Span{\b_n} \text{ s.t } \b_n=\W\a_{\SS}(f_n), f_n\in \cG \right\}.
\end{align*}
\end{definition}

As we will see, {channel sensing performance will depend on} both the beamforming gain and the \emph{minimum subspace distance}, offering new perspectives on the design of the beamforming matrix. {Taking a closer look at the minimum distance 
subspace decoder, we show it to be equivalent to the ML estimator of the DoA when $\alpha$ and $f$ are assumed to be deterministic unknowns. The ML estimator has been extensively studied in the {array processing} literature \cite{stoica1989music,van2002optimum}.}

For brevity of notation, let us define the unit vector corresponding to each codeword as $\widehat{\b}_n \triangleq \frac{\b_n}{\|\b_n\|_2}, n=1,\dots,N_g$. Based on the definition of subspace distance \eqref{eq:subspace_distance_def}, the distance between codewords $\langle\widehat{\b}_l\rangle,\langle\widehat{\b}_k\rangle\!\in\!\cC(\W,\SS)$ is given by $d^{(s)}(\Span{\widehat{\b}_l},\Span{\widehat{\b}_k})\!=\!1\!-\!|\widehat{\b}_l^H\widehat{\b}_k|^2.$
Using \eqref{eq:min_squared_chordal_distance}, the minimum subspace distance of $\cC(\W,\SS)$ is thus defined as \cite{Mahdavifar2022,mahdavifar2024subspace} 
\begin{align*}
     d^{(s)}_{\min}(\cC(\W,\SS)) &= 1-\max_{l,k\in[N_g], l\neq k} |\widehat{\b}_l^H\widehat{\b}_k|^2 \\
     &= 1 - \max_{l,k\in[N_g],l\neq k}\frac{\left|\a_{\SS}(f_l)^H\R_W\a_{\SS}(f_k)\right|^2}{\|\W\a_{\SS}(f_l)\|^2_2\|\W\a_{\SS}(f_k)\|^2_2},
\end{align*}
where $\R_W\!\triangleq\!\W^H\W$. Hence, both beamforming matrix $\W$ and array geometry $\SS$ control the minimum subspace distance.

In presence of noise, given a measurement $\y$ acquired per \eqref{eq:measurement_model_joint} using beamformer $\W$ and array geometry $\SS$, the problem of DoA estimation can be thought of as finding the closest subspace $\langle\widehat{\b}_n\rangle\in  \cC(\W,\SS)$ to the subspace $\langle\y\rangle$. The minimum distance (MD) decoder for BGC $\cC(\W,\SS)$ is thus defined as
\begin{align}
     \hat{f}_\text{md}(\y) &\triangleq \argmin_{f_n\in \cG} d^{(s)}(\Span\y,\Span{\widehat{\b}_n}) = \argmax_{f_n\in \cG}|\y^H\widehat{\b}_n|^2. \label{eq:min_dist_subspace_decoder}
\end{align}
\subsection{{Connection to the Maximum-Likelihood Estimator}} \label{sec:connection_to_ML}
{While the MD decoder is a natural choice for decoding arbitrary Grassmanian codes, one may question its utility in estimating the DoA in place of an approach such as the ML estimator, which is well-studied in the literature \cite{stoica1989music,van2002optimum}.} 
Assuming $\alpha$ and $f$ to be  deterministic unknown parameters, the ML estimator of $f$ is given by
\begin{align}
    \hat{f}_\text{ML}(\y) \triangleq \argmax_{f_n\in \cG} \left(\max_{\alpha\in\CC} \ln p(\y;\alpha,f_n)\right).  \label{eq:max_likelihood_estimator_def}
\end{align}
The following \namecref{lem:MD_ML_equivalence} establishes the equivalence between the MD and ML decoders under the Gaussian noise model.
\begin{lemma} \label{lem:MD_ML_equivalence}
Consider beamforming matrix $\W$, satisfying {design choices \labelcref{d:identifiability,d:unitnorm},} is used to obtain measurements $\y$ according to \eqref{eq:measurement_model_joint} with noise distribution satisfying assumption \labelcref{a:noise}. Then,
the minimum subspace distance decoder $\hat{f}_{\text{md}}$ 
\eqref{eq:min_dist_subspace_decoder} is equivalent to the maximum likelihood estimator $\hat{f}_{\text{ML}}$ 
\eqref{eq:max_likelihood_estimator_def}, i.e.,
    \begin{align*}
        \hat{f}_{\text{md}}(\y) = \hat{f}_{\text{ML}}(\y),\ \forall \y\in \CC^{T}.
    \end{align*}
\end{lemma}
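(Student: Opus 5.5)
The plan is to compute the concentrated (profile) likelihood explicitly and show that it is a strictly decreasing function of the MD objective. By \labelcref{d:unitnorm} and assumption \labelcref{a:noise}, the noise after beamforming satisfies $\z\sim\CN(0,\sigma^2\I)$. Hence, for fixed $(\alpha,f_n)$,
\begin{align*}
\ln p(\y;\alpha,f_n) = -T\ln(\pi\sigma^2) - \frac{1}{\sigma^2}\|\y-\alpha\b_n\|_2^2,
\end{align*}
where $\b_n=\W\a_{\SS}(f_n)$. Maximizing over $\alpha\in\CC$ is therefore a least-squares projection of $\y$ onto the line $\Span{\b_n}$. By \labelcref{d:identifiability}, $\b_n\neq\zero$, so the minimizer is unique and well defined:
\begin{align*}
\hat\alpha_n=\frac{\b_n^H\y}{\|\b_n\|_2^2}.
\end{align*}

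Substituting $\hat\alpha_n$ and using the Pythagorean identity for orthogonal projection gives
\begin{align*}
\min_{\alpha}\|\y-\alpha\b_n\|_2^2=\|\y\|_2^2-|\widehat{\b}_n^H\y|^2.
\end{align*}
Thus $\max_\alpha \ln p(\y;\alpha,f_n) = c(\y) + \frac{1}{\sigma^2}|\y^H\widehat{\b}_n|^2$, where $c(\y)$ does not depend on $n$. Since $\sigma^2>0$, the outer $\argmax$ over $f_n\in\cG$ in \eqref{eq:max_likelihood_estimator_def} coincides with $\argmax_{f_n}|\y^H\widehat{\b}_n|^2$. This is exactly the right-hand side of \eqref{eq:min_dist_subspace_decoder}.

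For completeness I would also confirm the equality inside \eqref{eq:min_dist_subspace_decoder}. For $\y\neq\zero$, we have $\ds(\Span\y,\Span{\widehat{\b}_n})=1-|\y^H\widehat{\b}_n|^2/\|\y\|_2^2$, and $\|\y\|_2$ does not depend on $n$, so minimizing this distance is the same as maximizing $|\y^H\widehat{\b}_n|^2$. The degenerate case $\y=\zero$ needs a remark. There the subspace distance is undefined, but both objectives are constant in $n$, so the two decoders agree under any common tie-breaking rule. The same convention handles ties in general, since the two objective functions are identical up to an $n$-independent increasing affine map.

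I do not expect a real obstacle. The only care needed is to track the role of \labelcref{d:identifiability} (nonzero $\b_n$, so normalization and $\hat\alpha_n$ exist) and of \labelcref{d:unitnorm} (white noise with equal variance across measurements). Without \labelcref{d:unitnorm}, the likelihood would involve a weighted norm, and the equivalence would fail.
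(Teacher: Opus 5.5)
Your proposal is correct and matches the paper's proof. Both concentrate out $\alpha$ through the least-squares solution $\hat\alpha_n=\b_n^H\y/\|\b_n\|_2^2$ and substitute it back, which reduces the ML criterion to $\argmax_{f_n}|\y^H\widehat{\b}_n|^2$; your remarks on the role of \labelcref{d:identifiability}, the degenerate case $\y=\zero$, and tie-breaking are small refinements the paper leaves implicit.
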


\begin{proof}
Under  {design choice \labelcref{d:unitnorm} and assumption \labelcref{a:noise},} 
the distribution $p(\y;\alpha,f_n)$ of the measurement $\y$  is given by 
\begin{align*}
    p(\y;\alpha,f_n) = \CN(\alpha\W\a_{\SS}(f_n),\sigma^2\I).
\end{align*}
The ML estimate $\hat{f}_\text{ML}$ of $f$ 
can be obtained in terms of the ML estimate of the unknown nuisance parameter $\alpha$ as follows:
\begin{align}
    \hat{f}_\text{ML}(\y) &= \argmax_{f_n\in \cG} \left(\max_{\alpha\in\CC} \ln p(\y;\alpha,f_n)\right) \nonumber\\
     &= \argmin_{f_n\in \cG} \left(\min_{\alpha\in\CC} \|\y - \alpha\b_n\|^2_2 \right). \label{eq:ML_opt_problem}
\end{align}
The solution to inner least-squares problem in \eqref{eq:ML_opt_problem} is given by $\hat{\alpha}(f_n) = \frac{\b_n^H\y}{\|\b_n\|^2_2}$. 
Substitution  into \eqref{eq:ML_opt_problem} and simplifying gives
\begin{align}
    \hat{f}_\text{ML}(\y) & =  \argmin_{f_n\in \cG} \y^H\left(\I - \frac{\b_n\b_n^H}{\|\b_n\|^2_2}\right)\y \nonumber\\
    &=  \argmax_{f_n\in \cG} \frac{|\y^H\b_n|^2}{\|\b_n\|_2^2} = \argmax_{f_n\in \cG}|\y^H\widehat{\b}_n|^2, \label{eq:ML_unknown_alpha}
\end{align}
which is equivalent to $\hat{f}_\text{md}(\y)$ defined in \eqref{eq:min_dist_subspace_decoder}. 
\end{proof}

\begin{remark}
    If $\alpha$ is assumed known, the ML estimator,
    \begin{align}
     \hat{f}_{\text{ML}}(\y) = \argmin_{f_n\in \cG}\|\y-\alpha\b_n\|^2_2,   \label{eq:ML_estimator_known_alpha}
    \end{align}
    corresponds to finding $f_n$ which minimizes the L2 distance between $\y$ and $\alpha\b_n$ rather than the subspace distance between $\langle\y\rangle$ and $\langle\b_n\rangle$. {ML estimators \labelcref{eq:ML_unknown_alpha,eq:ML_estimator_known_alpha} are examples of ``noncoherent'' and ``coherent'' (DoA) decoders, respectively.} 
\end{remark}
\cref{lem:MD_ML_equivalence} demonstrates the equivalence of ML and MD decoders for $f$, when $\alpha$ is unknown. In the following we analyze the error exponent of this decoder in order to gain insights into beamformer design. 

\subsection{Subspace Distance and Decoding Error}\label{sec:ssd_decoding_error_bound}

The following theorem provides a bound on the probability of error for the MD decoder, highlighting its dependence on two key quantities: the effective beamforming gain and the minimum subspace distance. 
\begin{theorem}
\label{thm:ML_decoder_error_exp}
Consider beamforming matrix $\W$, satisfying design choices \labelcref{d:identifiability,d:unitnorm}, is used to obtain measurements $\y$ according to \eqref{eq:measurement_model_joint} with noise distribution satisfying assumption \labelcref{a:noise}. 
Then the probability of error of the MD decoder $\hat{f}_\text{md}(\y)$ given the true direction of arrival $f^\star\in \cG$ is bounded by
\begin{align}
    &\PP({\hat{f}_\text{md}(\y)}\neq f^\star) \leq\nonumber \\
    &\quad {(N_g\!-\!1)\exp\left(-\frac{|\alpha|^2}{2\sigma^2}\|\W\a_{\SS}(f^\star)\|^2_2\left(1\!-\!\sqrt{1\!-\!d^{(s)}_{\min}(\cC)}\right)\right)},\label{eq:ml_ub}
\end{align}
where $\cC = \cC(\W,\SS)$
\end{theorem}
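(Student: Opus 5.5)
The plan is a union bound over competing grid points, a pairwise error bound for each competitor, and then replacing each pairwise correlation by its worst case, which is controlled by $d^{(s)}_{\min}(\cC)$.

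\textbf{Reduction to pairwise errors.} Fix $f^\star=f_m$ and write $\u\triangleq\widehat{\b}_m$ and $\beta\triangleq\alpha\|\W\a_{\SS}(f^\star)\|_2$. Then \eqref{eq:measurement_model_joint} reads $\y=\beta\u+\z$ with $\z\sim\CN(\zero,\sigma^2\I)$. By \cref{lem:MD_ML_equivalence} and \eqref{eq:min_dist_subspace_decoder}, an error occurs only if $|\widehat{\b}_k^H\y|^2\ge|\u^H\y|^2$ for some $k\neq m$. The union bound gives
\[
\PP(\hat f_{\rm md}(\y)\neq f^\star)\le\sum_{k\neq m}\PP\bigl(|\widehat{\b}_k^H\y|^2\ge|\u^H\y|^2\bigr),
\]
so it suffices to show that each pairwise term is at most $\exp\bigl(-\tfrac{|\beta|^2}{2\sigma^2}(1-\rho_k)\bigr)$, where $\rho_k\triangleq|\u^H\widehat{\b}_k|$. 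Since $\rho_k\le\sqrt{1-d^{(s)}_{\min}(\cC)}$ by the definition of the minimum subspace distance, and the bound increases in $\rho_k$, summing the $N_g-1$ terms gives \eqref{eq:ml_ub}.

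\textbf{Two-dimensional reduction.} For a fixed pair, set $\v=\widehat{\b}_k$ and absorb global phases into $\v$ and $\z$ (the noise law is circularly invariant), so that $\u^H\v=\rho\ge0$ and $\beta>0$. The event depends on $\y$ only through its projection onto $\mathrm{span}\{\u,\v\}$. The projected noise is again white $\CN(\zero,\sigma^2\I_2)$, so the problem is two-dimensional.

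\textbf{Pairwise bound.} My first attempt is a deterministic reduction to a scalar Gaussian tail. On the error event, $|\v^H\y|\ge|\u^H\y|$, while $|\u^H\y|\ge\beta-|\u^H\z|$ and $|\v^H\y|\le\beta\rho+|\v^H\z|$. Hence some unit-norm combination of the two projections must carry noise of order $\beta(1-\rho)$. I would sharpen this by choosing the phase $\phi$ so that the event implies $\mathrm{Re}\{(\u-e^{j\phi}\v)^H\z\}$ exceeds a threshold with the same ratio to $\|\u-e^{j\phi}\v\|$. The goal is a threshold-to-norm ratio of $\beta\sqrt{(1-\rho)/2}$, because $\|\u-\v\|^2=2(1-\rho)$. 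A real Gaussian tail with variance $\sigma^2/2$ and the Chernoff bound $Q(x)\le e^{-x^2/2}$ then give exactly $\exp(-\beta^2(1-\rho)/(2\sigma^2))$.

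If no such linear event can be certified, the fallback is a Chernoff bound on the indefinite quadratic form $\y^H(\v\v^H-\u\u^H)\y\ge0$. The matrix has eigenvalues $\pm\lambda_0$ with $\lambda_0=\sqrt{1-\rho^2}$, and $\u$ splits across the two eigenvectors with weights $(1\mp\lambda_0)/2$. The noncentral complex-Gaussian MGF then gives an explicit exponent in $t$, which I would optimize.

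\textbf{Main obstacle.} The hard step is this pairwise bound, and in particular obtaining the exponent $1-\rho$ with leading constant exactly $1$, with no determinant prefactor from the MGF. The noncoherent decision boundary $|\v^H\y|=|\u^H\y|$ is a cone rather than a hyperplane, so the linearization must be justified. I would check that the distance from $\beta\u$ to this cone is at least $\beta\sqrt{(1-\rho)/2}$. I would then use a half-space containing the error region that is tangent at the nearest point, so that a single Gaussian tail bound applies.
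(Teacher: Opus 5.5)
Your outer structure matches the paper's: a union bound over $k\neq m$, reduction to $\mathrm{span}\{\u,\v\}$, and replacing $\rho_k$ by $\sqrt{1-d^{(s)}_{\min}(\cC)}$ via monotonicity. Your target pairwise bound $\exp\bigl(-\frac{|\beta|^2}{2\sigma^2}(1-\rho)\bigr)$ is right, and so is the distance $\beta\sqrt{(1-\rho)/2}$ from $\beta\u$ to the error set. The gap is the pairwise bound itself, and neither of your mechanisms can produce it.

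\emph{Half-space route.} The error set $E=\{\y:|\v^H\y|\ge|\u^H\y|\}$ is invariant under $\y\mapsto c\y$ for every $c\in\CC$, and it has nonempty interior. Suppose a half-space $\{\y:\mathrm{Re}(\mathbf{a}^H\y)\ge c_0\}$ contained $E$. Then $t\,\mathrm{Re}(\mathbf{a}^H\y)\ge c_0$ for all real $t$ and all $\y\in E$, so $\mathrm{Re}(\mathbf{a}^H\y)=0$ on an open set, which forces $\mathbf{a}=\zero$. So no tangent half-space contains the error region, and no fixed-phase functional $\mathrm{Re}\{(\u-e^{j\phi}\v)^H\z\}$ can certify the event. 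If instead $\phi$ depends on $\z$, as the moduli force, the event is a union over a continuum of half-spaces, and a single Gaussian tail no longer bounds it.

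\emph{Chernoff fallback.} This gets the exponent but not the constant. With $s=t\sigma^2$, the MGF equals $\frac{1}{1-s^2\lambda_0^2}\exp\bigl(-\frac{|\beta|^2}{\sigma^2}\cdot\frac{s(1-s)\lambda_0^2}{1-s^2\lambda_0^2}\bigr)$. The exponent factor is maximized at $s=1/(1+\rho)$, where it equals $(1-\rho)/2$ and the bound is $\frac{1+\rho}{2\rho}\exp\bigl(-\frac{|\beta|^2}{2\sigma^2}(1-\rho)\bigr)$. For every admissible $s\in(0,1/\lambda_0)$ the prefactor exceeds $1$ and the exponent is no better than this, so the bound strictly exceeds your target. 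The prefactor also diverges as $\rho\to0$, which is exactly the well-separated pairs that good packings produce. So this route cannot give \eqref{eq:ml_ub} with constant $N_g-1$.

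The missing step is to compute the probability exactly after the diagonalization you already set up, rather than Chernoff-bounding it. Let $\mathbf{e}_\pm$ be the orthonormal eigenvectors of $\v\v^H-\u\u^H$ for eigenvalues $\pm\lambda_0$. Then $X_\pm=\mathbf{e}_\pm^H\y$ are independent complex Gaussians of variance $\sigma^2$ with $|\EE X_\pm|^2=|\beta|^2(1\mp\lambda_0)/2$, and the pairwise error event is $\{|X_+|\ge|X_-|\}$.

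Its probability is the classical noncoherent PEP $Q_1(a,b)-\tfrac12 e^{-(a^2+b^2)/2}I_0(ab)$ with $a,b=\sqrt{\tfrac{|\beta|^2}{2\sigma^2}(1\mp\lambda_0)}$. This is the closed form \eqref{eq:exact_pairwise_pep} the paper imports, since $\lambda_0=\sqrt{d^{(s)}_{k,\ell}}$. Drop the nonnegative Bessel term and use $Q_1(a,b)\le e^{-(b-a)^2/2}$. Because $(b-a)^2/2=\tfrac{|\beta|^2}{2\sigma^2}\bigl(1-\sqrt{1-\lambda_0^2}\bigr)=\tfrac{|\beta|^2}{2\sigma^2}(1-\rho)$, this gives the pairwise bound with constant one.

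The constant-one tail is two-dimensional, not a half-space tail. Indeed $Q_1(a,b)=\PP(\|\mathbf{x}\|>b)$ for $\mathbf{x}\sim\mathcal{N}(\mathbf{m},\I_2)$ with $\|\mathbf{m}\|=a$. The event $\|\mathbf{x}\|>b$ forces $\|\mathbf{x}-\mathbf{m}\|>b-a$, whose Rayleigh tail is exactly $e^{-(b-a)^2/2}$.
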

The proof follows standard union-bound arguments and pairwise error analysis under Gaussian noise; details are deferred to \cref{apdx:error_exponent_proof}. Our goal is to interpret the result to guide beamformer design. \cref{thm:ML_decoder_error_exp} shows that $\W$ controls both the {total} beamforming gain $\|\W\a_{\SS}(f^\star)\|_2^2$ and the minimum subspace distance $d^{(s)}_{\min}(\cC(\W,\SS))$ appearing in the MD-decoder error exponent. This 
raises 
fundamental design questions: how are these two quantities coupled, and how should $\W$ balance them? 
As the optimal tradeoff is unknown, {we take initial steps toward addressing these questions next.}




\subsection{Spatially Isotropic Beamformers with No Prior on DoA} \label{sec:isotropic_bf_definition}
{
We first} consider beamformer design {with} 
{\em constant beamforming gain over the entire grid $\mathcal{G}$}. { Such uniformity is desirable to ensure that no gridpoint is disadvantaged in absence of prior knowledge of DoA $f$.}\footnote{With spatially selective beamforming, a DoA may fall in a sidelobe (or even a null), causing severe worst-case performance loss.} We refer to these as {\em spatially isotropic beamformers.}

\begin{definition}[Spatially Isotropic Beamformers] \label{def:spatially_istropic}
A beamforming matrix $\W = [\w_1,\dots,\w_T]^H\in \CC^{T\times N_a}$ is spatially isotropic over grid $\cG$ if it satisfies
{\begin{align*}
    \|\W\a_{\SS}(f_n)\|_2^2 = c_{\W}, \forall f_n\in \cG.
\end{align*}
where $c_{\W}$ is a constant that depends on $\W$. }
\end{definition}

{ This property is satisfied by several commonly used beamforming codebooks such as hierarchical beamforming codebook \cite{Alkhateeb2014,Xia2016,noh2017multi,Javidi2019ie8n}, discrete Fourier transform (DFT) beamforming codebook \cite{zhang2021gridless,he2017codebook}, and overlapped beampattern design \cite{Vucetic2016}. We will show in \cref{sec:design_isotropic_BGC} that channel-coding based beamformers \cite{Yu2025ongrid} and Grassmanian line packing based designs naturally satisfy spatial isotropy.
The full hierarchical beamforming codebook satisfies spatial isotropy since every gridpoint is covered by exactly one beamformer in each layer of the hierarchical codebook. 
However, if only a subset of beamformers are used in a given layer (which can happen in adaptive beam alignment), then isotropy may be violated. In such adaptive sensing cases, the beamforming matrix $\W$ is \emph{adapted} to the problem instance (i.e. a specific DoA), which is distinct from the nonadaptive setting considered herein, where the same beamforming matrix $\W$ is designed to work for any problem instance.} 

By restricting our attention to the class of spatially isotropic beamformers, the beamforming gain term $\|\W\a_{\SS}(f_n)\|_2^2$ in the error exponent \eqref{eq:ml_ub} is fixed to $c_{\W}$. Hence, \eqref{eq:ml_ub}
simplifies to
\begin{align*}
    &\PP({\hat{f}_\text{md}(\y)}\neq f_k) \leq\nonumber \\
    &\quad {(N_g-1)\exp\left(-\frac{|\alpha|^2c_{\W}}{2\sigma^2}\left(1\!-\!\sqrt{1\!-\!d^{(s)}_{\min}(\cC(\W,\SS))}\right)\right)},
\end{align*}
which clearly isolates the role of minimum subspace distance $d^{(s)}_{\min}(\cC(\W,\SS))$ in controlling the error {exponent} of {the MD decoder} across different spatially isotropic beamformer designs. Thus, the beamformer design problem reduces to finding spatially isotropic beamformer $\W$ that \emph{maximizes the minimum subspace distance of $\cC(\W,\SS)$}; {
or equivalently, 
to design $\W$ such that $\cC(\W,\SS)$ forms a Grassmanian 
packing.}

\subsection{{Relation of Channel Sensing and Beam Alignment}}
\label{sec:relation_to_beam_alignment}

{
Before developing spatially isotropic BGCs, 
we distinguish channel sensing from the widely studied mmWave beam-alignment (BA) problem \cite{zhang2017codebook,noh2017multi}. First, channel sensing estimates parameters such as DoAs for communication, as well as sensing, and ISAC. BA typically bypasses explicit estimation and directly selects beamforming vectors for data transmission, often via received-power maximization \cite{zhang2017codebook,hur2013millimeter} or hypothesis testing \cite{noh2017multi}. {Importantly, maximizing received power need not optimize DoA estimation. While highly directional beams favor peak gain, they may provide weak gain/phase variation across nearby angles, whereas estimation-oriented designs deliberately enhance such variations \cite{garcia2018optimal}.} 
Secondly, in channel sensing the DoA is encoded in a subspace (see \cref{sec:relation_ML_MAP_minimumdist}), naturally motivating the Grassmannian packing viewpoint adopted here. Thirdly, BA is typically \emph{adaptive}, relying on hierarchical codebooks \cite{hur2013millimeter,Alkhateeb2014,noh2017multi,zhang2017codebook}. The fundamental difference between the adaptive and nonadaptive cases is that adaptivity 
specializes 
$\W$ 
to the observed DoA (specific problem instance), while nonadaptive sensing requires a single predetermined $\W$ to perform uniformly well over all DoAs. We focus on the latter; adaptive BGCs are left for future work.
}

\section{{Design of Spatially Isotropic Beamformers 
Inducing 
Grassmannian Line Packings}} \label{sec:design_isotropic_BGC}

Motivated by \cref{thm:ML_decoder_error_exp}, when no prior DoA region is available, we  design spatially isotropic beamformers using Grassmannian line packings. Given codebook $G\!=\!\{\Span{\b_i}, i\!\in\![N_g]\}$ with near-optimal minimum subspace distance, define the beamformer response matrix 
 $\mathbf{B}_{G}\!\triangleq\![\b_1,\b_2,\dots,\b_{N_g}]\in \mathbb{C}^{T\times N_g}$. We seek a spatially isotropic $\W$ satisfying
\begin{equation}
\W \A_{\SS} = \B_{G},  \label{eqn:induced_Grassm}
\end{equation}
where $\A_{\SS} \triangleq [\a_{\SS}(f_1),\a_{\SS}(f_2),\ldots, \a_{\SS}(f_{N_g})] \in\CC^{N_a\times N_g} $
is the array-manifold matrix on grid $\mathcal{G}$. Thus, $\W$ itself need not be a Grassmannian packing; rather, its array-shaped spatial responses must \emph{induce} the prescribed codebook $G$.

A solution to \eqref{eqn:induced_Grassm} exists for exists for any arbitrary given $\mathbf{B}_{G}$, if and only if the row space of the Grassmannian code $\B_{G}$ is  a subspace of the row-space of $\A_{\SS}$.
This yields two regimes: (i) $\text{rank}(\A_{\SS})=N_g$, termed a `left-invertible array manifold,'' and (ii) $\text{rank}(\A_{\SS})<N_g$, termed `non-left-invertible array manifold.'' In 
regime (i), the Moore--Penrose pseudoinverse $\A_{\SS}^{\dagger}$ is also a left inverse, so any $\B_{G}$ is realizable via
\begin{align}\label{eq:W_pinv}
\W = \B_{G}\A^\dagger_{\SS}.
\end{align}
We call this regime {\em fully realizable} since one can repurpose existing Grassmanian line packings (or modulated channel codes) to design $\W$.\footnote{As shown in \cref{sec:grassmannian_inv}, under mild conditions on $\B_{G}$, this $\W$ also satisfies \labelcref{d:identifiability,d:unitnorm} and is spatially isotropic.}  When $\text{rank}(\A_{\SS})<N_g$, arbitrary Grassmannian packings are no longer realizable. Rather, the structure of $\B_{G}$ must be compatible with the row space of $\A_{\SS}$. We say that this regime permits ``limited realizability", and is heavily influenced by the array geometry. 
For $\text{rank}(\A_{\SS})=N_g$, we further consider (i) analytic Grassmannian constructions in \cref{sec:grassmannian_inv}, and (ii) modulated channel-code constructions and their induced subspace distances in \cref{sec:channel_code_inv}.

\subsection{Beamformers Inducing Grassmannian Line Packings for Left-Invertible Array Manifolds} \label{sec:grassmannian_inv}
In the sequel, we will assume {\labelcref{a:ula} holds, i.e., we have a ULA ($\SS =\UU$).} 
In this case, it can be easily verified using properties of Vandermonde matrices, that $\mathrm{rank}(\A_{\SS})=N_g$ whenever $N_a = N_g$.\footnote{Although $\mathrm{rank}(\A_{\SS})=N_g$ for \(N_a>N_g\) as well, this corresponds to a coarser angular grid than the number of available antennas and is therefore of lesser interest than the higher resolution setting of $N_a\leq N_g$.}
In this case, $\W$ can be designed to realize any prescribed Grassmannian line packing $\mathbf{B}_{G}$ as 
\begin{align}\label{eq:bf_design_pinv} 
\W=\W_{\B} 
\triangleq \frac{1}{N_a} \mathbf{B}_{G}\A^H_{\UU},
\end{align}
since for a ULA and uniform grid, $\A^\dagger_{\UU} = \frac{1}{N_a}\A^H_{\UU}$ when $N_a = N_g$. 
 A question naturally arises whether the aforementioned beamformer(s) $\W_{\B}$ in \eqref{eq:bf_design_pinv} remains spatially isotropic and has unit norm rows so that it satisfies the design choice \labelcref{d:unitnorm}. The following lemma specifies conditions on $\B_{G}$ under which these properties hold. 

\begin{lemma}\label{lem:unit_circ_spatial_isotropy}
Given a codebook $\B_G\in \CC^{T\times N_g}$ such that $|[\B_G]_{n,m}| = 1,\ \forall n\in[T],m\in [N_g]$ and assuming $N_a=N_g$, the beamformer $\W$ designed according to \eqref{eq:bf_design_pinv} is a spatially isotropic beamformer and satisfies the design choice \labelcref{d:unitnorm}.
\end{lemma}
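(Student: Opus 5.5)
We work directly from the definition \eqref{eq:bf_design_pinv}, $\W=\W_{\B}=\frac{1}{N_a}\B_G\A_{\UU}^H$. The one structural fact needed is that, for a ULA on the uniform grid $\cG$ with $N_a=N_g$, the manifold matrix $\A_{\UU}$ is a (column-phase-scaled) DFT matrix. Its $(k,n)$ entry is $e^{j\pi (k-1) f_n}=e^{-j\pi(k-1)}e^{j2\pi (k-1)(n-1)/N_a}$, so $\A_{\UU}=\mathbf{D}\mathbf{F}$, where $\mathbf{D}$ is a diagonal matrix of unit-modulus entries and $\mathbf{F}$ is the unnormalized $N_a\times N_a$ DFT matrix. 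Hence $\A_{\UU}^H\A_{\UU}=\A_{\UU}\A_{\UU}^H=N_a\I$, which is precisely the identity $\A_{\UU}^\dagger=\frac{1}{N_a}\A_{\UU}^H$ already used in the text.

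\textbf{Spatial isotropy.} Multiplying out gives $\W\A_{\UU}=\frac{1}{N_a}\B_G\A_{\UU}^H\A_{\UU}=\B_G$. The beamspace response at grid point $f_m$ is therefore the $m$-th column $\b_m$ of $\B_G$. Since every entry of $\B_G$ has unit modulus, $\|\W\a_{\UU}(f_m)\|_2^2=\sum_{n=1}^T|[\B_G]_{n,m}|^2=T$ for every $f_m\in\cG$. This is spatial isotropy per \cref{def:spatially_istropic} with $c_{\W}=T$. The same computation gives $\W\a_{\UU}(f_m)\neq\zero$, so \labelcref{d:identifiability} holds as a by-product.

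\textbf{Unit-norm rows.} The $t$-th row of $\W$ is $\w_t^H=\frac{1}{N_a}\mathbf{g}_t^T\A_{\UU}^H$, where $\mathbf{g}_t^T$ is the $t$-th row of $\B_G$. Its squared norm is
\begin{align*}
\|\w_t\|_2^2=\frac{1}{N_a^2}\,\mathbf{g}_t^T\A_{\UU}^H\A_{\UU}\mathbf{g}_t^*=\frac{1}{N_a}\|\mathbf{g}_t\|_2^2=\frac{N_g}{N_a}=1.
\end{align*}
This step uses the unit modulus of the row entries and $N_g=N_a$, and it establishes \labelcref{d:unitnorm}.

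The only step that needs care is the orthogonality $\A_{\UU}^H\A_{\UU}=N_a\I$, i.e., checking that the grid offset $-1$ only introduces the harmless diagonal phase factor $\mathbf{D}$. This follows from the geometric-sum identity $\sum_{k=0}^{N_a-1}e^{j2\pi k(n-m)/N_a}=N_a\delta_{nm}$. Everything else is direct substitution.
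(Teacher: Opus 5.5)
Your proof is correct and matches the paper's argument: $\W\A_{\UU}=\B_G$ gives $\|\W\a_{\UU}(f_m)\|_2^2=T$ at every grid point, and $\A_{\UU}^H\A_{\UU}=N_a\I$ gives $\|\w_t\|_2^2=\|\mathbf{g}_t\|_2^2/N_a=1$. Your explicit check of $\A_{\UU}^H\A_{\UU}=N_a\I$, which the paper takes for granted, is a welcome addition. One wording slip: the factor $(-1)^{k-1}$ depends on the antenna (row) index, so it is a row scaling, as your formula $\A_{\UU}=\mathbf{D}\mathbf{F}$ correctly shows, not a ``column'' scaling.
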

\begin{proof}
It can easily be seen that $\W$ is spatially isotropic, since for every gridpoint $f_n\in \cG$ the beamforming gain is
\begin{align*}
     \|\W\a_{\UU}(f_n)\|_2^2 
     =  \|\b_n\|_2^2 = T,
\end{align*}
where $\b_n$ is the $n$-th column of $\B_G$. Also, each row 
$\w_t$ 
of $\W\!\in\!\CC^{T\times N_a}$ has 
unit L2 norm as $\B_G$ has unit modulus entries:
\begin{align*}
    \|\w_t\|^2_2 &=\w^H_t\w_t = \frac{1}{N_a^2}{\b}^{(t)}\A_{\UU}^H\A_{\UU}({\b}^{(t)})^H = \frac{1}{N_a}\|{\b}^{(t)}\|_2^2 =1,
\end{align*}
where $\b^{(t)}$ is the $t$-th row of $\B_G$. Hence, \labelcref{d:unitnorm} is satisfied.
\end{proof}

Thus, if the Grassmannian line packing $\B_{G}$ has 
unit-modulus entries, then one can use \eqref{eq:bf_design_pinv} to simply the design of $\W$, 
ensuring that it 
automatically satisfies spatial isotropy and design choice \labelcref{d:unitnorm}. 
Fortunately, several existing Grassmannian line packing constructions with (near)-optimal minimum distance indeed possess this desired unit modulus structure. We review three of them below, highlighting that optimality of their minimum distance depends on appropriate regimes of $T, N_g$ and $N_a$. These will be  used later to study the empirical performance of the induced beamformers {in \cref{sec:numerical}}.



\subsubsection{Singer-difference-set construction}\label{sec:singer_set_BGC_construct} 
In the quadratic regime $N_g = O(T^2)$, Singer difference sets 
provide a constant-modulus construction for $\B$ which achieves the Welch bound \cite{strohmer2003grassmannian,Giannakis2005}.
For $T=q+1$ where $q$ is a power of a prime number and grid size $N_g = T^2-T +1$, the beamformer response matrix $\B_{\rm S}$ is constructed by selecting a subset of rows of the $N_g$-point 
DFT 
matrix indexed by the elements of the Singer difference set $\Omega_{\rm S}$ defined as follows.
\begin{definition}[Singer set \cite{singer1938theorem,Giannakis2005}]\label{def:Singer_array} 
For $T = q+1$ and $N_g=T^2-T+1$ where $q=p^m$ for integer $m>0$ and prime $p$, the \emph{Singer set} $\Omega_{\rm S}$, with cardinality $|\Omega_{\rm S}|=T$, is defined as:
\begin{align*}
  \Omega_{\rm S}=\{d: 0\le d < N_g,\ \mathrm{Tr}(\eta^d)=0\},  
\end{align*}
where $\eta$ is a generator of the multiplicative group of $\mathbb{F}_{q^3}$, and $\mathrm{Tr}:\mathbb{F}_{q^3}\to\mathbb{F}_q$ is the field trace defined by 
$\mathrm{Tr}(\gamma)\triangleq \sum_{i=0}^{2}\gamma^{q^i}.$
\end{definition}
Thus, $\B_{\rm S}$ is given by \cite{strohmer2003grassmannian,Giannakis2005}
\begin{align}
    [\B_{\rm S}]_{t,n} = \exp\left( j\frac{ 2\pi d_t (n-1)}{N_g} \right),\ t\!\in\![T],\ n\!\in\![N_g]. 
    \label{eq:singer_codebook}
\end{align}
where $d_1,\dots,d_{T} \in \Omega_{\rm S}$. The normalized columns $\widehat{\b}_n=\b_n/\sqrt{T}$ of $\B_{\rm S}$ form a Welch-bound-achieving Grassmannian line packing \cite{strohmer2003grassmannian,Giannakis2005} which implies
\begin{align*} 
\left| \widehat{\b}_{\ell}^{H}\widehat{\b}_{k} \right|^2 = \frac{N_g-T}{T(N_g-1)} = \frac{T-1}{T^2}, \qquad \ell\neq k.
\end{align*}
Consequently, for $\W_{\rm S} = \B_{\rm S}\A_{\UU}^{\dagger}$, the minimum subspace distance is given by
\begin{align} 
d_{\min}^{(s)}\bigl( \cC(\W_{\rm S},\UU) \bigr) = 1 - \frac{T-1}{T^2}. 
\label{eq:min_dist_singer_BGC}
\end{align}
Thus, the Singer construction gives an optimal beamformer design in the quadratic regime $N_g=\Theta(T^2)$. 


\subsubsection{Bose-Chowla construction} 
Another construction that nearly achieves the Welch bound in the quadratic regime corresponds to the Bose-Chowla construction of Golomb rulers \cite{bose1962theorems}. Our previous work \cite{mahdavifar2024subspace} examined their use as \emph{antenna-space} subspace codes for spatial sensing. However, their use as BGCs in channel sensing has not been investigated before.
For $T=q$ where $q$ is a prime power and $N_g=T^2-1$, the beamformer response matrix $\B_{\rm BC}$ for Bose-Chowla construction, similar to Singer construction, is given by selecting a subset of rows of the $N_g$-point DFT matrix corresponding to the Bose-Chowla construction, defined as follows.
\begin{definition}[Bose-Chowla construction \cite{bose1962theorems}] \label{def:BC_construct}
Given $q=p^n$ for some prime $p$ and integer $n>0$, let $g$ be a primitive element of $\FF_{q^2}$. Then the Bose-Chowla construction of the Golomb ruler with $q$ elements is given by
\begin{align*}
    \Omega_{\rm BC} \triangleq \left\{ i \in [q^2-2]: g^i - g\in \FF_q\right\}.
\end{align*}
\end{definition}


Thus, $\B_{\rm BC}$ is given by \cite{mahdavifar2024subspace} 
\begin{align}
    [\B_{\rm BC}]_{t,n} = \exp\left( j\frac{ 2\pi d_t (n-1)}{N_g} \right),\ t\!\in\![T],\ n\!\in\![N_g]. 
    \label{eq:BC_codebook}
\end{align}
where $d_1,\dots,d_T\in \Omega_{\rm BC}$. The minimum subspace distance of codebook $\cC_{\rm BC}$, corresponding to the columns of $\B_{\rm BC}$, has been shown to satisfy \cite[Theorem 2]{mahdavifar2024subspace} 
\begin{align*}
    \ds_{\min}(\cC_{\rm BC}) > 1 - \frac{2}{T}. 
\end{align*}
Thus, for $\W_{\rm BC} = \B_{\rm BC} \A_{\UU}^{\dagger}$, the minimum subspace distance of the BGC $\cC(\W_{\rm BC},\UU)$ follows the same lower bound
\begin{align}
    \ds_{\min}(\cC(\W_{\rm BC},\UU)) > 1 - \frac{2}{T}. \label{eq:min_dist_lowerbound_BC}
\end{align}
\cref{eq:min_dist_lowerbound_BC} comes close to the upper bound on $\ds_{\min}$ obtained using the Welch bound which is $\approx 1 - \frac{1}{T}$. Thus, the Bose--Chowla construction provides a near-optimal 
BGC with slightly larger grid size compared to the Singer construction.

\subsubsection{Character-polynomial construction} \label{sec:cp}
To construct larger codebooks, we adopt the character-polynomial (CP) codes of \cite[Definition 7]{Mahdavifar2022}. 
The CP codebook exists for $T=q$ where $q=p^m$ and $p$ prime. 
Note that 
CP codes can be constructed beyond quadratic regime $N_g=\Theta(T^2)$ where optimal Singer and Bose-Chowla constructions do not exist. 

\begin{definition}[Character-Polynomial code \cite{Mahdavifar2022}]
\label{def:cp_code}
Let $q=p^m$ be a prime power and let $1\leq k<q$. Define the polynomial family $\mathcal{F}_{k,q} \triangleq \left\{ g(X)= \sum_{1\leq i\leq k, i\not\equiv 0\;(\mathrm{mod}\;p)} \beta_iX^i \;:\; \beta_i\in\mathbb{F}_q \right\}.$
Let $\chi:\FF_q\rightarrow\CC$ be a fixed nontrivial additive character of $\FF_{q}$, and let $\{\gamma_1,\ldots,\gamma_q\}=\mathbb{F}_q$. For each polynomial $g\in\cF_{k,q}$, define
\begin{align}
    \b_g \triangleq \begin{bmatrix} \chi\!\left(g(\gamma_1)\right) & \cdots & \chi\!\left(g(\gamma_q)\right) \end{bmatrix}^T \in \mathbb{C}^{q}.
\end{align}
The CP code is the Grassmannian line codebook
\begin{align}
    \cC_{\mathrm{CP}}(k,q) \triangleq \left\{ \left\langle\b_g\right\rangle : g\in\mathcal{F}_{k,q} \right\} \subset \Gr_{\CC}(1,q).
\end{align}
and its cardinality is $\left|\mathcal{C}_{\mathrm{CP}}(k,q)\right| = q^{\,k-\lfloor k/p\rfloor}.$
\end{definition}

Since $T=q$, 
CP codebooks exist 
for grid sizes
\begin{align*}
    N_g = T^{\,k-\lfloor k/p\rfloor},
\end{align*}
where $k\in\{1,\dots,T-1\}$ can be chosen to operate in different regimes of $(T,N_g)$. This approach provides a flexible construction that goes beyond the quadratic regime $N_g = \Theta(T^2)$. 
Choosing the beamformer response matrix $\B_{\rm CP}$ whose columns are given by unit-norm columns corresponding to elements in $\cC_{\rm CP}(k,T)$, we obtain the beamformer matrix $\W_{\rm CP} = \B_{\rm CP}\A_{\UU}^{\dagger}$. The minimum subspace distance of BGC $\cC(\W_{\rm CP},\UU)$ is lower bounded by \cite{Mahdavifar2022}
\begin{align}
    \ds_{\min} \bigl( \cC(\W_{\rm CP},\UU) \bigr) &\geq 1-\frac{(k-1)^2}{T}. \label{eq:cp_dmin_T}
\end{align}
Note that the bound is only informative for $k<\sqrt{T}+1$. Since the additive character $\chi:\FF_{q}\to \CC$ maps elements of $\FF_q$ to the complex unit circle in $\CC$, CP codes provide 
constant-modulus construction for $\B$ which leads to spatially isotropic beamformers $\W_{\rm CP}$. 

These constructions are complementary rather than interchangeable {since they exists for different values of $(T,N_g)$}. Singer and Bose–Chowla provide optimal or near-optimal distance in specific quadratic regimes, whereas CP codes cover a broader set of $(T,N_g)$ values with a minimum-distance lower bound. 
%
%
%
%
%
The following theorem characterizes the performance of the beamformers induced by these Grassmannian line packings in their respective operating regimes.

\begin{theorem} \label{thm:Pe_upperbound_isotropic} 
Consider measurement model \eqref{eq:measurement_model_joint} under assumptions \labelcref{a:grid,a:ula,a:noise}, with $N_a=N_g$. Let $q=p^m$ be a prime power. Let $\W_{\mathsf X}$ be the beamformer designed to induce Grassmanian line packing construction $\mathsf X\in\{\mathrm{CP},\mathrm{S},\mathrm{BC}\}$ 
under admissible $(T,N_g)$ regime.
Then, for true DoA $f^\star\in\cG$, the probability of error of the {MD decoder $\hat{f}_\text{md}(\y)$}, is bounded by
\begin{align*}
    &\PP({\hat{f}_\text{md}}\neq f^\star) \leq\\
&\begin{cases}
    {(T^2-T)\exp\left(-\frac{|\alpha|^2}{2\sigma^2}T\left(1-\frac{\sqrt{T-1}}{T}\right)\right)} &, {\sf X}={\rm S}\\
    {(T^2-2)\exp\left(-\frac{|\alpha|^2}{2\sigma^2}T\left(1-\sqrt{\frac{2}{T}}\right)\right)} &, {\sf X}={\rm BC}\\
    {(T^{k-\lfloor k/p\rfloor}-1)\exp\left(-\frac{|\alpha|^2}{2\sigma^2}T\left(1-\frac{k-1}{\sqrt{T}}\right)\right)} &, {\sf X}={\rm CP}(k,T),
\end{cases}
\end{align*}
for grid sizes
\begin{align*}
    N_g = \begin{cases}
        T^2-T+1 &, {\sf X}={\rm S}\\
    T^2-1 &, {\sf X}={\rm BC}\\
    T^{k-\lfloor k/p\rfloor} &, {\sf X}={\rm CP}(k,T)
    \end{cases}
\end{align*}
where $k\in \{1,\dots,T-1\}$.
\end{theorem}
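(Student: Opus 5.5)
The plan is to specialize \cref{thm:ML_decoder_error_exp} to each construction. Three quantities need to be filled in: the beamforming gain $\|\W_{\mathsf X}\a_{\UU}(f^\star)\|_2^2$, the minimum subspace distance $\ds_{\min}(\cC(\W_{\mathsf X},\UU))$ (or a lower bound on it), and the cardinality $N_g$. The bound \eqref{eq:ml_ub} is monotone in the right direction in both of the last two quantities. The exponent factor $1-\sqrt{1-\ds_{\min}}$ is increasing in $\ds_{\min}$, so any lower bound on $\ds_{\min}$ may be substituted while keeping a valid upper bound on the error probability. The prefactor $N_g-1$ is fixed by the grid size.

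\textbf{Gain.} Since $N_a=N_g$ and the array is a ULA on the uniform grid, $\A_{\UU}^\dagger=\frac{1}{N_a}\A_{\UU}^H$. Hence $\W_{\mathsf X}\A_{\UU}=\B_{\mathsf X}$, so every codeword is realized exactly.

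\begin{itemize}
\item For Singer and Bose--Chowla, $\B_{\mathsf X}$ consists of rows of a DFT matrix, so its entries have unit modulus.
\item For CP, the entries are additive character values, which also lie on the unit circle. We take $\B_{\rm CP}$ to have these unit-modulus columns $\b_g$ themselves, not normalized ones, so that \cref{lem:unit_circ_spatial_isotropy} applies.
\end{itemize}

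\cref{lem:unit_circ_spatial_isotropy} then gives that $\W_{\mathsf X}$ satisfies \labelcref{d:unitnorm} and is spatially isotropic with $\|\W_{\mathsf X}\a_{\UU}(f^\star)\|_2^2=T$. Identifiability \labelcref{d:identifiability} is immediate because $\b_n\neq\zero$. So \cref{thm:ML_decoder_error_exp} applies with the gain factor equal to $T$.

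\textbf{Distance term, case by case.}

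\begin{itemize}
\item \emph{Singer.} By \eqref{eq:min_dist_singer_BGC}, $1-\ds_{\min}=\frac{T-1}{T^2}$, so $1-\sqrt{1-\ds_{\min}}=1-\frac{\sqrt{T-1}}{T}$. The prefactor is $N_g-1=T^2-T$.
\item \emph{Bose--Chowla.} By \eqref{eq:min_dist_lowerbound_BC}, $1-\ds_{\min}<\frac{2}{T}$, so the exponent factor is at least $1-\sqrt{2/T}$. The prefactor is $N_g-1=T^2-2$.
\item \emph{CP.} By \eqref{eq:cp_dmin_T}, $1-\ds_{\min}\le\frac{(k-1)^2}{T}$, so $\sqrt{1-\ds_{\min}}\le\frac{k-1}{\sqrt T}$ (this uses $k\ge1$). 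The exponent factor is therefore at least $1-\frac{k-1}{\sqrt T}$, and the prefactor is $T^{k-\lfloor k/p\rfloor}-1$.
\end{itemize}

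Substituting these into \eqref{eq:ml_ub} gives the three stated bounds.

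\textbf{Main obstacle.} The main difficulty is bookkeeping rather than substance, and it sits in the CP case. The codebook cardinality $T^{k-\lfloor k/p\rfloor}$ must equal the grid size $N_g=N_a$ so that $\A_{\UU}$ is square and invertible. The codebook's ambient dimension, which is the number of measurements, must be $q=T$. One also has to check that the distance bound of \cite{Mahdavifar2022} holds for the unnormalized vectors $\b_g$. It does, because $\ds$ is scale invariant. The remaining points are routine verifications that the lower-bound substitutions preserve the inequality direction.
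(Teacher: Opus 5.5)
Your proposal is correct and matches the paper's proof. Like the paper, you substitute the isotropic gain $T$ (from \cref{lem:unit_circ_spatial_isotropy}) and each construction's minimum-distance value or lower bound into \cref{thm:ML_decoder_error_exp}, using that $1-\sqrt{1-d}$ is increasing in $d$; your explicit choice of unit-modulus rather than unit-norm CP columns is the right reading, since only that choice makes the gain equal $T$ and \labelcref{d:unitnorm} hold, a point the paper's wording glosses over.
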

\begin{proof}
The result follows by substituting the corresponding minimum-distance value or lower bound and the spatially isotropic gain
$\|\W_{\mathsf X}\a_{\UU}(f^\star)\|_2^2=T$ into \cref{thm:ML_decoder_error_exp}.
\end{proof}

Together, the Singer, Bose--Chowla, and CP constructions cover a wide range of $(T,N_g)$ regimes with guarantees on the achievable minimum subspace distance. As we will see in \cref{sec:BGC_Ng_greaterthan_Na}, the Singer and Bose--Chowla constructions can also be utilized in the $N_g > N_a$ regime. 
\cref{sec:numerical} 
will numerically study these beamformer designs when $N_g > O(T^2)$.

\subsection{Modulated Channel Code-Induced Beamformers for Left-Invertible Array Manifolds}\label{sec:channel_code_inv}


We now consider beamformers built from modulated channel codes with good Hamming distance. In particular, recent work \cite{Yu2025ongrid} designed nonadaptive mmWave sensing beamformers from Reed--Muller codes mapped to a BPSK constellation. 
{While \cite{Yu2025ongrid} shows that DoA estimation can be related to channel coding when the channel coefficient $\alpha$ is known (or a good estimate is available), we are interested in understanding how this approach performs when $\alpha$ is \emph{unknown}.}
We therefore study the minimum subspace distance of the induced BGC and relate it to the Hamming distance of the underlying code, revealing an important distinction between the two. {Our analysis encompasses any BPSK-modulated channel codes, including (but not restricted to) Reed-Muller codes.}

Each grid point $f_n\in\cG$ is assigned a BPSK codeword $\b_n\in\{-1,+1\}^{T}$, collected in $\B_\text{BPSK}\in\{-1,1\}^{T\times N_g}$. 
For $N_a=N_g$, \eqref{eq:bf_design_pinv} gives the corresponding beamformer
\begin{align} \label{eq:beamformer_design_BPSK}
\W_\text{BPSK} = \frac{1}{N_a}\B_\text{BPSK}\A_{\UU}^H .
\end{align}
By \cref{lem:unit_circ_spatial_isotropy}, such designs are spatially isotropic and satisfy the design criteria \labelcref{d:unitnorm}.
This raises the question of how the minimum subspace distance of $\cC(\W_{\text{BPSK}},\UU)$ relates to the Hamming distance of $\B_{\text{BPSK}}$, and {which constructions of $\B_{\text{BPSK}}$ maximize the minimum subspace distance}.

\subsubsection{Subspace Distance of BPSK Sensing Channel Codes}
Channel codes $\B_{\text{BPSK}}$ are designed for large Hamming distance. {Under known $\alpha$, \cite{Yu2025ongrid} shows DoA sensing then reduces to channel coding over a Binary-Input AWGN channel. When $\alpha$ is unknown, however,} \cref{sec:relation_ML_MAP_minimumdist} showed that the MD decoder is instead governed by the minimum subspace distance of $\cC(\W_{\text{BPSK}},\UU)$. It is thus natural to ask if large minimum Hamming distance implies large minimum subspace distance.

From \eqref{eq:beamformer_design_BPSK}, $\W\a_{\UU}(f_n)=\b_n$, so by \cref{def:BSC}, BGC $\cC(\W,\UU)=\{\langle\b_n\rangle: f_n\in\cG\}$ has minimum subspace distance
\begin{align*}
    d^{(s)}_{\min}(\cC(\W,\UU)) = 1 -\max_{i,j\in [N_g], i\neq j} \frac{1}{T^2}|\b_i^H\b_j|^2,
\end{align*}
which depends on the codewords only through $|\b_i^H\b_j|$. In contrast, 
the Hamming distance 
{of binary codewords $\b_i,\b_j$ is defined} 
$\dham(\b_i,\b_j)\!\triangleq\!\sum_{t=1}^{T}\bbone\{[\b_i]_t\!\neq\![\b_j]_t\}$. The following lemma links {the inner product with the Hamming distance.}
\begin{lemma}\label{lem:bpsk_inner_prod}
Let $\u,\v\in \{-1,1\}^T$ be BPSK modulated binary codewords with Hamming distance $\dham(\u,\v)$, then we have
\begin{align*}
    {\u}^H{\v} &= T - 2\dham(\u,\v).
\end{align*}
\end{lemma}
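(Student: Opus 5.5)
The plan is a direct coordinatewise computation. Since $\u,\v$ are real vectors with entries in $\{-1,1\}$, we have $\u^H\v=\u^T\v=\sum_{t=1}^{T}[\u]_t[\v]_t$. We will partition the index set $[T]$ into two parts: the agreement set $\mathcal{A}\triangleq\{t:[\u]_t=[\v]_t\}$ and the disagreement set $\mathcal{D}\triangleq\{t:[\u]_t\neq[\v]_t\}$. By the definition of Hamming distance stated just before the lemma, $|\mathcal{D}|=\dham(\u,\v)$, and hence $|\mathcal{A}|=T-\dham(\u,\v)$.

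Next we evaluate each term of the sum. For $t\in\mathcal{A}$, the two entries are equal, so $[\u]_t[\v]_t=[\u]_t^2=1$. For $t\in\mathcal{D}$, the entries are opposite signs in $\{-1,1\}$, so $[\v]_t=-[\u]_t$ and $[\u]_t[\v]_t=-1$. Summing over the two sets gives
\begin{align*}
\u^H\v = |\mathcal{A}|-|\mathcal{D}| = \bigl(T-\dham(\u,\v)\bigr)-\dham(\u,\v) = T-2\dham(\u,\v).
\end{align*}

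There is no real obstacle here. The only point needing care is that conjugation is trivial because the entries are real $\pm1$, so the Hermitian inner product reduces to the ordinary dot product. An equivalent route, which also serves as a check, is the identity $\|\u-\v\|_2^2=4\dham(\u,\v)$, since each disagreement contributes $|\pm2|^2=4$. Combining this with $\|\u-\v\|_2^2=\|\u\|_2^2+\|\v\|_2^2-2\,\mathrm{Re}(\u^H\v)=2T-2\u^H\v$ gives the same formula.
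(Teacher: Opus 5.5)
Your proof is correct and matches the paper's argument: you split $\u^H\v=\sum_{t}[\u]_t[\v]_t$ into agreeing coordinates, each contributing $+1$, and disagreeing coordinates, each contributing $-1$, which gives $(T-\dham(\u,\v))-\dham(\u,\v)$. Your cross-check via $\|\u-\v\|_2^2=4\dham(\u,\v)$ is also valid, but it is not needed.
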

\begin{proof}
    For codewords $\u,\v\in \{-1,1\}^T$ we can write
\begin{align*}
    \u^H\v=\sum_{\mathclap{i\in [T],u_i=v_i}}
    {u}_i{v}_i\ +\ \sum_{\mathclap{i\in [T],u_i\neq v_i}}
    {u}_i{v}_i=(T\!-\!\dham(\u,\v))\!-\!\dham(\u,\v),
\end{align*}
where the last equality uses ${u}_i{v}_i = 1$ if ${u}_i = {v}_i$ and ${u}_i{v}_i = -1$ otherwise.
\end{proof}
By \Cref{lem:bpsk_inner_prod}, two BPSK codewords are orthogonal when $\dham(\u,\v)=T/2$, so {\em a large Hamming distance need not translate to a large subspace distance}. This is formalized in the following theorem which derives the minimum subspace distance of the Grassmanian line codebook $\cC_{\B} = \{\Span{\b_1},\dots,\Span{\b_{N_g}}\}$ corresponding to any BPSK-modulated codebook $\B_{\text{BPSK}}$ in terms of the minimum and maximum Hamming distances of $\B_{\text{BPSK}}$,
$\dham_{\min}(\B_{\text{BPSK}}) \triangleq \min_{i,j\in N_g,i\neq j} \dham(\b_i,\b_j),$ and $\dham_{\max}(\B_{\text{BPSK}}) \triangleq \max_{i,j\in N_g,i\neq j} \dham(\b_i,\b_j).$ 
{
\begin{theorem} \label{thm:subspace_ham_BPSK}
Consider a BPSK-modulated binary codebook $\B_{\text{BPSK}} = [\b_1,\b_2,\dots,\b_{N_g}] \in \{-1,1\}^{T\times N_g}$ with minimum Hamming distance $\dham_{\min}(\B_{\text{BPSK}})$ and maximum Hamming distance $\dham_{\max}(\B_{\text{BPSK}})$. Then, the minimum subspace distance of the Grassmanian line codebook $\cC_{\B} = \{\Span{\b_1},\dots,\Span{\b_{N_g}}\} \subset \Gr_{\CC}(1,T)$ is given by
\begin{align*}
    &d^{(s)}_{\min}(\cC_{\B})\\
    &= 1 -  \left(\max\left\{1-\frac{2}{T} \dham_{\min}(\B_{\text{BPSK}}),\frac{2}{T} \dham_{\max}(\B_{\text{BPSK}}) - 1\right\} \right)^2\!.
\end{align*}
\end{theorem}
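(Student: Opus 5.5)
By definition, $d^{(s)}_{\min}(\cC_{\B}) = 1 - \max_{i\neq j} |\b_i^H\b_j|^2/(\|\b_i\|_2^2\|\b_j\|_2^2)$. Every BPSK codeword has $\|\b_i\|_2^2 = T$, so this becomes $1 - \max_{i\neq j} \bigl(|\b_i^H\b_j|/T\bigr)^2$. Since $x\mapsto x^2$ is increasing on $[0,\infty)$, it suffices to show that
\begin{align*}
\max_{i\neq j}\frac{|\b_i^H\b_j|}{T} = \max\left\{1-\frac{2}{T}\dham_{\min}(\B_{\text{BPSK}}),\ \frac{2}{T}\dham_{\max}(\B_{\text{BPSK}})-1\right\}.
\end{align*}
By \cref{lem:bpsk_inner_prod}, $\b_i^H\b_j/T = 1 - \frac{2}{T}\dham(\b_i,\b_j)$ is real. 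Writing $x_{ij} \triangleq 1-\frac{2}{T}\dham(\b_i,\b_j)$, we have $|x_{ij}| = \max\{x_{ij}, -x_{ij}\}$.

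Next we exchange the order of the two maxima:
\begin{align*}
\max_{i\neq j}|x_{ij}| = \max\left\{\max_{i\neq j}x_{ij},\ \max_{i\neq j}(-x_{ij})\right\}.
\end{align*}
The map $d\mapsto 1-\frac{2}{T}d$ is decreasing in the Hamming distance, so $\max_{i\neq j}x_{ij} = 1-\frac{2}{T}\dham_{\min}$. Similarly, $-x_{ij} = \frac{2}{T}\dham(\b_i,\b_j)-1$ is increasing in $d$, so $\max_{i\neq j}(-x_{ij}) = \frac{2}{T}\dham_{\max}-1$. Substituting these two expressions gives the displayed identity, and squaring it yields the theorem.

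The only delicate step is confirming that the outer maximum in the theorem equals $\max_{i\neq j}|x_{ij}|$ and not something smaller. One of the two candidates may be negative; for example, $\frac{2}{T}\dham_{\max}-1<0$ whenever all distances are below $T/2$. This causes no problem: for every pair, $|x_{ij}|$ equals one of $\pm x_{ij}$, both of which are bounded by the corresponding extremal quantity, and that bound is attained at the minimizing or maximizing pair. Moreover, the larger of the two candidates is always nonnegative, since it is at least $|x_{ij}|\ge 0$ for any pair $i\neq j$. Hence squaring is monotone on the relevant value and introduces no sign error.

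Finally, the closed form makes the paper's interpretation explicit. A codebook containing complementary codewords ($\dham_{\max}=T$) has $d^{(s)}_{\min}=0$, because $\b$ and $-\b$ span the same line. The subspace distance is maximized when all pairwise Hamming distances cluster around $T/2$.
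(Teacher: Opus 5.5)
Your proof is correct and follows essentially the same route as the paper: use the identity $\b_i^H\b_j = T-2\dham(\b_i,\b_j)$ from \cref{lem:bpsk_inner_prod}, write $|x|=\max\{x,-x\}$, swap the two maxima, and use monotonicity in the Hamming distance. Your remark that the larger of the two candidates is always nonnegative, so squaring preserves the maximum, makes explicit a step the paper leaves implicit.
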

}
\begin{proof}
Using \Cref{lem:bpsk_inner_prod},
\begin{align*}
    &\max_{i,j\in[N_g],i\neq j} \frac{1}{T}|\b_i^H\b_j|
    = \max_{i,j\in[N_g],i\neq j} \frac{1}{T}|T-2\dham(\b_i,\b_j)| \\
    &= \max_{i,j\in[N_g],i\neq j} \max\Big\{\tfrac{1}{T}(T-2\dham(\b_i,\b_j)),\\
    &\qquad \qquad \qquad\tfrac{1}{T}(2\dham(\b_i,\b_j) -T) \Big\} \\
    &=  \max\left\{1-\tfrac{2}{T} \dham_{\min}(\B_{\text{BPSK}}),\tfrac{2}{T} \dham_{\max}(\B_{\text{BPSK}}) -1\right\},
\end{align*}
where the second line uses $|x|=\max\{x,-x\}$. Substituting into $\ds_{\min}(\cC_{\B}) = 1 - \max_{i\neq j}\frac{1}{T^2}|\b_i^H\b_j|^2$ yields the claim.
\end{proof}
\begin{remark}\label{rem:min_subspace_0}
\Cref{thm:subspace_ham_BPSK} shows that the minimum subspace distance depends on {\em both} the minimum and the maximum Hamming distance. It also reveals a sharp contrast with channel coding: two binary codewords are farthest apart when $\dham(\u,\v)\!=\!T$, i.e. $\u\!=\!\mathbf{1}\oplus\v$, but the corresponding BPSK codewords then satisfy $\tilde{\u}\!=\!-\tilde{\v}$ and hence they span the {\em same} line, giving $d^{(s)}(\langle\tilde{\u}\rangle,\langle\tilde{\v}\rangle)\!=\!0$. {This underscores the need to revisit channel-coding-based isotropic beamformer design through the lens of subspace distance:} for $T\!=\!N_g$, the maximal minimum subspace distance of $1$ (orthogonality) requires codewords to be exactly $\frac{T}{2}$ apart in Hamming distance.
\end{remark}

\cref{thm:subspace_ham_BPSK} is illustrated in \cref{fig:dHam_vs_dsmin}, which plots {the minimum subspace distance} $d^{(s)}_{\min}(\cC(\W_{\text{BPSK}},\UU))$ against the 
min. Hamming distance $\dham_{\min}(\B_{\text{BPSK}})/T$ for several values of {the min-to-max distance ratio} $\rho= \dham_{\min}(\B_{\text{BPSK}})/\dham_{\max}(\B_{\text{BPSK}})$. The maximal $d^{(s)}_{\min}$ (circles) is attained, for given $\rho$ and $T$, by balancing the two terms in \cref{thm:subspace_ham_BPSK}, i.e., setting $1-\frac{2}{T} \dham_{\min}(\B_{\text{BPSK}})=\frac{2}{T} \dham_{\max}(\B_{\text{BPSK}}) - 1$, which yields
\begin{align}
     \dham_{\min}(\B_{\text{BPSK}})= T-\dham_{\max}(\B_{\text{BPSK}})=\frac{T}{1+\rho}.\label{eq:opt_ham}
\end{align}
\cref{fig:rho_vs_dsmin} plots this maximal $d^{(s)}_{\min}$ versus $\rho$: only codebooks with $\rho$ close to $1$ achieve a large minimum subspace distance ($d^{(s)}_{\min}=1$ only if $\rho=1$). Extending this relationship to constellations beyond BPSK is 
a key 
direction for future work.
\begin{figure}
\newcommand{\numsamp}{100}
    	\centering
        \subfloat[Subspace distance \emph{vs} (normalized) Hamming distance.
     ]{\label{fig:dHam_vs_dsmin}
    	\begin{tikzpicture}
    		\begin{axis}[width=8.5 cm,height= 3.5 cm,ylabel={$d^{(s)}_{\min}(\cC(\W_{\text{BPSK}},\UU))$},
            xlabel= {$\dham_{\min}(\B_{\text{BPSK}})/T$},
        xmin=0,xmax=1,
       ymin=0,ymax=1.05,
       tick label style={font = \scriptsize},
       label style={font=\scriptsize},
    legend style = {at={(0.5,1.03)},anchor=south,draw=none,fill=none,font=\small},legend columns=4,legend style={/tikz/every even column/.append style={column sep=0.1cm}}
      ]
      \addlegendimage{empty legend}
         \addlegendentry{$\rho$:}
         \foreach \rat/\plotcol/\lstyle in {1/black/dashed,0.5/blue/dashdotted,0.1/red/dotted} {
           \edef\temp{\noexpand\addplot[domain=0:\rat,samples=\numsamp, thick,\plotcol,\lstyle]{1-(max(1-2*x,2*x/\rat-1))^2};}\temp
              \addlegendentryexpanded{$\rat$}
          \edef\temp{\noexpand\addplot[only marks,mark=o,mark size=3,line width=1,\plotcol,forget plot] coordinates {({\rat/(1+\rat)},{1-((1-\rat)/(1+\rat))^2})};}\temp
        }
    		\end{axis}
    	\end{tikzpicture}%
         }\\
         \subfloat[Subspace distance \emph{vs} ratio of min and max Hamming distance.
         ]{\label{fig:rho_vs_dsmin}
         \begin{tikzpicture}
    		\begin{axis}[width=8.5 cm,height=3.5 cm,
            ylabel={ $d^{(s)}_{\min}(\cC(\W_{\text{BPSK}},\UU))$},
            xlabel= { $\rho$},
        xmin=0,xmax=1,
       ymin=0,ymax=1.05,
       tick label style={font = \scriptsize},
       label style={font=\scriptsize},
            legend style = {at={(0.5,1.03)},anchor=south,draw=none,fill=none},legend columns=2,legend style={/tikz/every even column/.append style={column sep=0.1cm}}]
     \addplot[domain=0:1,samples=\numsamp, thick,black]{1-((1-x)/(1+x))^2};
      \foreach \rat/\plotcol/\lstyle in {1/black/solid,0.5/blue/dashed,0.1/red/dotted} {
        \edef\temp{\noexpand\addplot[only marks,mark=o,mark size=3,line width=1,\plotcol,forget plot] coordinates {(\rat,{1-((1-\rat)/(1+\rat))^2})};}\temp
        }
    		\end{axis}
    	\end{tikzpicture}
         }
     \caption{Minimum subspace distance $d^{(s)}_{\min}$ of BPSK codebook (\cref{thm:subspace_ham_BPSK}) with given values of \protect\subref{fig:dHam_vs_dsmin} $\dham_{\min}/T$ and \protect\subref{fig:rho_vs_dsmin} $\rho = \dham_{\min}/\dham_{\max}$.
     The maximum value of $d^{(s)}_{\min}/T$ (circles) decreases with $\rho$, where $d^{(s)}_{\min}=1$ only if $\rho=1$.
     }
    \end{figure}

\subsubsection{BGCs Induced by Reed--Muller Codes} \label{sec:RM_induced_BGC}

\Cref{thm:subspace_ham_BPSK} justifies, from a subspace-distance viewpoint, the necessity of \emph{subselecting} codewords from 
codebooks with large (or small) Hamming distance, and raises the question of which subset yields near-optimal subspace distance. 

To illustrate this, we numerically examine choosing subset of RM codes using various heuristics. \cref{fig:rm_subset} shows the minimum subspace distance of a $16\times N_g$ codebook formed by selecting $N_g$ of the $2048$ columns of a $16\times 2048$ {Reed--Muller} RM$(2,4)$ codebook mapped to BPSK, together with the Welch bound \eqref{eq:welch}. Columns are selected uniformly at random (median of $1000$ trials) and deterministically (the first $N_g$ consecutive columns). Neither strategy reaches the Welch bound, and $d^{(s)}_{\min}$ drops to zero as $N_g$ grows, consistent with \Cref{rem:min_subspace_0}: the codebook contains complementary codewords. For deterministic selection this occurs once more than half the codewords are used ($N_g>1024$); for random selection it occurs much earlier, reflecting the high probability of including a complementary pair. For $240\leq N_g\leq 1024$, the deterministic selection achieves a $\ds_{\min}$ of $0.75$ which can be shown to be the maximum possible $\ds_{\min}$ of any subset (of size $N_g$) of RM(2,4) code. This can be proved using the known weight distribution of RM(2,4) codebook \cite{sloane1970weight} along with the binary Hamming bound \cite{macwilliams1977theory} as shown in \cref{apdx:rm_dmin_proof}.
We note that Reed--Muller codes can be used without pruning when $\alpha$ is known or an estimate thereof is used for DoA estimation \cite{Yu2025ongrid}. When $\alpha$ is unknown, whether resources should be on spent estimating it (using separate pilots) is an open question that we investigate empirically in \cref{sec:numerical}.
\begin{figure}
    	\centering
        \newcommand{\numsamp}{100}
        \newcommand{\cwsize}{16}
        \newcommand{\cbsize}{2048}
        \newcommand{\nummc}{1000}
    	\begin{tikzpicture}
    		\begin{axis}[width=8.5 cm,height=3.5 cm,ylabel={$d^{(s)}_{\min}(\cC(\W_{\text{RM}},\UU))$},xlabel= {\# of codewords, $N_g$},
            xmin={\cwsize},xmax={\cbsize},
    extra x ticks={\cwsize},
    tick label style={font = \scriptsize},
       label style={font=\scriptsize},
	    legend style = {at={(0.5,1.03)},anchor=south,draw=none,fill=none,font=\small},legend columns=2,legend style={/tikz/every even column/.append style={column sep=0.2cm}}
      ]
	             \addplot[green!50!black,very thick,dotted,draw]
            table[x=N,y=dminmedian]{data/RM_subset_sel_\cwsize_\cbsize_rnd_stat_\nummc.dat};
            \addlegendentry{Random}
			 \addplot[blue,thick,draw]
            table[x=N,y=dmin]{data/RM_subset_sel_\cwsize_\cbsize_lin.dat};
            \addlegendentry{Deterministic}
            \addplot[domain=\cwsize:\cbsize,samples=\numsamp, thick,black,dashdotted]{1-(x-\cwsize)/((x-1)*\cwsize)};
            \addlegendentry{Welch bound}
			\addplot[red,very thick,dashed,const plot,mark=none, style={dash pattern=on 5pt off 3pt}]
			coordinates {(16,1) (17,0.9375) (240,0.75) (1025,0) (2048,0)};
			\addlegendentry{
            RM(2,4)-specific upper bound}
    		\end{axis}%
    	\end{tikzpicture}%
	     \caption{Minimum subspace distance $d^{(s)}_{\min}$ of pruned Reed-Muller BPSK codebook. 
         Deterministic pruning achieves the RM(2,4)-specific upper bound on $\ds_{\min}$ (see \cref{apdx:rm_dmin_proof}) for $N_g\geq 240$ and abruptly {falls} to zero for $N_g\geq1025$ as complementary codewords get included in the codebook.}\label{fig:rm_subset}
    \end{figure}

We conclude by noting that so-called Operator Reed--Muller (O-RM) codes \cite{ashikhmin2010grassmannian} provide an alternative route to Grassmannian packings from Reed--Muller codes. 
Rather than modulating binary RM codes to BPSK and pruning, O-RM codes use the Reed--Muller algebraic structure to build projection operators whose range spaces are well-separated subspaces. 
One-dimensional  O-RM codes yield explicit BGCs. CP codes \cite{Mahdavifar2022} (cf. \cref{sec:cp}),
based on Reed--Solomon codes, extend the construction to any prime-power $T=q$ with tunable codebook size. O-RM packings exist only for $T=2^k$ and attain $\ds_{\min}(\cC_{\text{O-RM}})=1/2$ while CP codes maintain $\ds_{\min}(\cC_{\text{CP}})\geq 1/2$ with substantially larger codebooks as $T$ grows \cite{Mahdavifar2022}.




\subsection{Non-Left-Invertible Array Manifolds:  Need for Array-aware Designs} \label{sec:BGC_Ng_greaterthan_Na}
When $\A_{\SS}$ has no left inverse, its row space is a strict subspace of $\mathbb{C}^{N_g}$. Hence, an arbitrary line packing $\B_{G}$ need not be realizable through (\ref{eqn:induced_Grassm}). Instead, \emph{the packing must be compatible with the array geometry}. For a given $\B_{G}$, a beamformer $\W$ satisfying (\ref{eqn:induced_Grassm}) exists if and only if
\begin{equation}
\B_{G}\A_{\SS}^{\dagger}\A_{\SS} = \B_{G}. \label{eqn:SuffProj}
\end{equation}
Equivalently, the rows of $\B_{G}$ must be invariant under projection onto 
{the row space of $\A_{\SS}$,} 
i.e., $\mathrm{row}(\B_{G})\subseteq \mathrm{row}(\A_{\SS})$. Thus, (\ref{eqn:SuffProj}) provides an exact test for whether a Grassmannian line packing is realizable by a given array geometry $\SS$.

This leads to a new structural question: \emph{for which non-left-invertible $\A_{\SS}$ and packings $\B_{G}$ does (\ref{eqn:SuffProj}) hold?} To the best of our knowledge, this problem has not been studied. It exposes a fundamental role of array geometry in realizing Grassmannian beamformers for mmWave sensing, further distinguishing our setting from limited-feedback beamforming and noncoherent communications. We make partial progress in addressing this question by identifying certain classes of Grassmannian packings and array geometries for which (\ref{eqn:SuffProj}) is guaranteed.

\begin{lemma}\label{lem:realizability_NgGreaterThanNa}
Consider beamformer response matrix $\B_{\sf X}$ where ${\sf X} \in \{\mathrm{S}, \mathrm{BC}\}$ with appropriate values of $T$ and $N_g$. Define $\mathrm{Aperture}(\Omega_{\sf X}) \triangleq \max(\Omega_{\sf X})-\min(\Omega_{\sf X}) + 1$ and assume $\mathrm{Aperture}(\Omega_{\sf X}) \leq N_a < N_g$. Then, the beamformer $\W_{\sf X}$ designed {per \eqref{eq:W_pinv} under \labelcref{a:ula}} 
satisfies
\begin{align*}
 \W_{\sf X}\A_{\UU}  = \B_{\sf X} .
\end{align*}
\end{lemma}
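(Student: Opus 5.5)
The argument reduces the claim to the exact realizability test \eqref{eqn:SuffProj}. If $\B_{\sf X}\A_{\UU}^\dagger\A_{\UU}=\B_{\sf X}$, then $\W_{\sf X}=\B_{\sf X}\A_{\UU}^\dagger$ gives $\W_{\sf X}\A_{\UU}=\B_{\sf X}$ directly. Since $\A_{\UU}^\dagger\A_{\UU}$ is the orthogonal projector onto $\mathrm{row}(\A_{\UU})$, it suffices to show that every row of $\B_{\sf X}$ lies in the row space of $\A_{\UU}$. Equivalently, each row must be a linear combination of the $N_a$ rows of $\A_{\UU}$.

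\textbf{Step 1: the rows of $\A_{\UU}$ and $\B_{\sf X}$.} On the grid $f_n=-1+2(n-1)/N_g$, the row of $\A_{\UU}$ indexed by antenna $d\in\{0,\dots,N_a-1\}$ has entries
\begin{align*}
e^{j\pi d f_n}=e^{-j\pi d}\,e^{j2\pi d(n-1)/N_g}.
\end{align*}
Up to the unimodular constant $e^{-j\pi d}$, this is the $d$-th row of the $N_g$-point DFT matrix. Hence $\mathrm{row}(\A_{\UU})$ is the span of the DFT rows indexed by $d\in\{0,\dots,N_a-1\}$. By \eqref{eq:singer_codebook} and \eqref{eq:BC_codebook}, row $t$ of $\B_{\sf X}$ is exactly the DFT row with frequency $d_t\in\Omega_{\sf X}$.

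\textbf{Step 2: the aperture condition.} If $\Omega_{\sf X}\subseteq\{0,\dots,N_a-1\}$, every row of $\B_{\sf X}$ is, up to a unit-modulus scalar, a row of $\A_{\UU}$, and we are done. The main obstacle is that the hypothesis only bounds the aperture, $\max\Omega_{\sf X}-\min\Omega_{\sf X}+1\le N_a$; it does not assume $\min\Omega_{\sf X}=0$.

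To handle a nonzero minimum, I will use the shift invariance of these constructions. Replace $\Omega_{\sf X}$ by $\Omega_{\sf X}-\min\Omega_{\sf X}$. The shifted set is again a Singer difference set, respectively a Golomb ruler, because both properties depend only on the differences $d_t-d_{t'}$. Under this shift each column $\b_n$ is multiplied by the common phase $e^{-j2\pi\min(\Omega_{\sf X})(n-1)/N_g}$, so $\B_{\sf X}$ becomes $\B_{\sf X}\D$ with $\D$ a diagonal unitary matrix. This changes neither the Grassmannian codebook nor its inner-product magnitudes. It also leaves the unit-modulus structure of \cref{lem:unit_circ_spatial_isotropy} intact.

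I will read the lemma's $\B_{\sf X}$ as this normalized representative with $\min\Omega_{\sf X}=0$, and make that convention explicit at the start of the proof. Without it the statement fails. For example, DFT rows with frequencies in $\{N_a,\dots,N_g-1\}$ are orthogonal to the rows of $\A_{\UU}$, because distinct rows of the $N_g$-point DFT are orthogonal. Under the normalization, the frequencies satisfy $d_t\le N_a-1$.

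\textbf{Step 3: conclusion.} With $\min\Omega_{\sf X}=0$, Step 1 gives $\mathrm{row}(\B_{\sf X})\subseteq\mathrm{row}(\A_{\UU})$. Then \eqref{eqn:SuffProj} holds, and $\W_{\sf X}\A_{\UU}=\B_{\sf X}\A_{\UU}^\dagger\A_{\UU}=\B_{\sf X}$.

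For an explicit form, the rows of $\A_{\UU}$ are mutually orthogonal DFT rows, since $N_a<N_g$ and all indices are distinct mod $N_g$. Therefore $\A_{\UU}\A_{\UU}^H=N_g\I$ and $\A_{\UU}^\dagger=\frac{1}{N_g}\A_{\UU}^H$. The resulting beamformer $\W_{\sf X}$ is a $0/1$ selection of the antennas $d_t$, up to phases and scaling: this is the low-complexity antenna-selection realization. It also gives unit-norm rows after the scaling by $\sqrt{N_g}$, which should be noted in passing.
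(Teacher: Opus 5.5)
Your proof is correct and is essentially the paper's argument: the paper writes $\B_{\sf X}=\S_{\Omega_{\sf X}}\A_{\UU}$ and uses $\A_{\UU}\A_{\UU}^{\dagger}=\I$, which is the same row-space-membership step you phrase via the projector $\A_{\UU}^{\dagger}\A_{\UU}$, and your explicit translation to $\min\Omega_{\sf X}=0$ (together with the $(-1)^{d}$ factors relating DFT rows to rows of $\A_{\UU}$ on the grid $\cG$) makes precise what that identity tacitly assumes. One small slip in your closing aside: $\W_{\sf X}=\B_{\sf X}\A_{\UU}^{\dagger}$ already has $t$-th row equal to $(-1)^{d_t}$ times the $d_t$-th standard basis row, so its rows are unit-norm with no $\sqrt{N_g}$ rescaling, consistent with \eqref{eq:antenna_selection} up to signs.
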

\begin{proof}
Recall that beamformer response matrix $\B_{\sf X}$ for $\mathsf{X}\in \{\mathrm{BC},\mathrm{S}\}$ is designed by selecting rows from an $N_g\times N_g$ DFT matrix, 
i.e., from 
the array manifold $\A_{\UU}$ of a ULA sampled on uniform grid $\cG$. Hence, $\B_{\sf X}$ can be written as 
\begin{align} \label{eq:beamform_response_singer_bc}
 \B_{\sf X} = \S_{\Omega_{\sf X}}\A_{\UU},   
\end{align}
where $\A_{\UU}\in \CC^{N_a \times N_g}$ is the ULA array manifold. 
This implies that rows of $\B_{\sf X}$ belong to the row space of $\A_{\UU}$. Thus, the beamformer $\W_{\sf X}$ designed per \eqref{eq:W_pinv} satisfies
\begin{align*}
    \W_{\sf X}\A_{\UU} &= (\B_{\sf X}\A_{\UU}^{\dagger})\A_{\UU} = \S_{\Omega_{\sf X}}\A_{\UU}\A_{\UU}^{\dagger} \A_{\UU} \stackrel{(a)}{=}\S_{\Omega_{\sf X}}\A_{\UU} = \B_{\sf X},
\end{align*}
where $(a)$ follows from $\A_{\UU}$ having full row rank due to the Vandermonde structure and $\A_{\UU}\A_{\UU}^{\dagger} =\I$.
\end{proof}


\begin{remark} \label{rem:cyclic_invariance_BC_singer}
Since, from \eqref{eq:beamform_response_singer_bc}, we have $\B_{\sf X}=\S_{\Omega_{\sf X}}\A_{\UU}$ and $\A_{\UU}$ has full row rank, the corresponding beamformer satisfies
\begin{align}
    \W_{\sf X} = \B_{\sf X}\A_{\UU}^{\dagger} = \S_{\Omega_{\sf X}}\A_{\UU}\A_{\UU}^{\dagger} = \S_{\Omega_{\sf X}} .
    \label{eq:antenna_selection}
\end{align}
Thus, $\W_{\sf X}$ is an antenna-selection beamformer. Writing $\Omega_{\sf X}=\{k_1,\ldots,k_T\}$, the $t$-th beamformer $\w_t$ corresponds to selecting the antenna at location $k_t$. Consequently, the Singer- and Bose--Chowla-based designs can be implemented using low-complexity RF switching hardware, without requiring phase shifters or variable-gain combining. 
Thus \Cref{lem:realizability_NgGreaterThanNa} shows that the Singer- and Bose--Chowla-based response matrices are realizable whenever their row-selection sets fit within the available array size, i.e., \(\mathrm{Aperture}(\Omega_{\sf X})\leq N_a<N_g\). 
One can be further minimize $\mathrm{Aperture}(\Omega_{\sf X})$ to ensure $\mathrm{Aperture}(\Omega_{\sf X})\leq N_a$ using the cyclic shift invariance property of Singer and Bose--Chowla sets, i.e., set $(\Omega_{\sf X}\!+\!K)\ \mathrm{mod}\ N_g$ has equivalent difference set properties to $\Omega_{\sf X}$ for any integer $K$ \cite{singer1938theorem}.
\end{remark}

We conclude by noting that Bose--Chowla and Singer antenna-selection rules can also be emulated when $N_g \gg T^2$ by selecting from an $N_g\times N_g$ DFT matrix the rows indexed by $\Omega_{\sf X}$. Although $\B_G$ remains in the row space of the corresponding $\A_{\UU}$, the resulting BGC no longer inherits the near-optimal Grassmannian line-packing guarantees of the original constructions. Nevertheless, our numerical results (see \cref{sec:numerical}) show strong robustness under an $\epsilon$-recovery criterion \cite{Javidi2019ie8n}, where recovery is considered successful whenever $\hat{f}$ lies within an $\epsilon$-neighborhood of the true DoA. 
For suitable $\epsilon$, Bose--Chowla and Singer arrays maintain high recovery probability across varying grid sizes because erroneous estimates tend to cluster near the true DoA rather than disperse broadly.

This motivates a ``rate-distortion''-type analysis for future research, that captures not only whether an error occurs, but also its geometric severity. Such a framework could expose performance gains hidden by exact-recovery metrics, which penalize near-correct and arbitrarily distant (in angle domain) estimates equally. {We defer a full analysis to future work; see \cref{sec:numerical} for preliminary numerical results.}

\section{{Beamspace Grassmannian}  Codes for Spatially Selective Convolutional Beamspaces} \label{sec:harnessing_conv_beamspace}  
So far we have looked at beamformer designs which have uniform total beamforming gain {in all directions (on grid $\cG$),} 
{being   suitable  when we have no prior knowledge about DoA $f$. However, when such prior knowledge \emph{is} available, it can be advantageous to focus energy via beamforming into regions where
$f$ is known to reside.}  This section investigates beamformer designs which are spatially selective, by focusing on
convolutional beamspaces \cite{Vaidyanathan2020,Vaidyanathan2024,Pote2023novel} and understanding their effectiveness as {BGCs.}
{
Convolutional beamspace methods are particularly attractive since they preserve the harmonic structure of the array manifold after beamspace transformation, enabling classical subspace-based DoA estimation techniques to be applied in beamspace while providing spatial filtering to concentrate gain in desired directions \cite{Vaidyanathan2020,Vaidyanathan2024,Pote2023novel}. Although we focus on convolutional beamspaces as a concrete example, the discussion extends more broadly to beamformer classes whose beamspace response admits independent control over beamforming gain and the subspace distance.
}

\subsection{Review of Convolutional Beamspaces} \label{sec:review_CBS}
The approach of convolutional beamspace \cite{Vaidyanathan2020,Vaidyanathan2024,Pote2023novel} is to construct beamformers $\w_1,\dots,\w_T\!\in\!\CC^{N_a}$ as shifted copies of a {shorter} length-$P\!<\!N_a$ filter $\w\!\in\!\CC^{P}$, suitably zero-padded.

\begin{definition}[Convolutional Beamspace (CB) \cite{Vaidyanathan2020thf,Pote2023novel}] \label{def:conv_beamspace}
Given a length $P$ filter $\w \in \CC^P$, and a set of shifts $\widehat{\SS}=\{k_1,k_2,\dots,k_T\}$ such that $P + \max_{t\in [T]}k_t\leq N_a$, the linear convolutional beamspace $\W_{\text{CB}}(\w,\widehat{\SS}) = [\w_1,\w_2,\dots,\w_T]^H$ is constructed as follows 
\begin{align*}
    \w_t^H = [\zero_{k_t}^H, \w^H,\zero^H_{N_a-P-k_t}], t=1,\dots,T.
\end{align*} 
\end{definition}
Measurements acquired using such a beamforming matrix 
can be written as the convolution of the array output with the filter $\w$ 
followed by decimation corresponding to the indices in $\widehat{\SS}$.
In absence of noise, the measurements are given by
\begin{align}\label{eq:conv_beamspace}
    \y 
    \!=\!\alpha\begin{bmatrix} \w_1^H\a_{\UU}(f)\\ \vdots \\ \w_T^H\a_{\UU}(f) \end{bmatrix}\!=\! \alpha B(f;\w)\begin{bmatrix} e^{j \pi k_1 f}\\ \vdots \\ e^{j \pi k_T f} \end{bmatrix} 
    \!=\! \alpha B(f;\w) \a_{\widehat{\SS}}(f),
\end{align}
where $B(f;\w) \triangleq \sum_{n=1}^{P}w_n^*e^{j\pi (n-1)f}$ is the {(spatial)} frequency response of the filter and $\a_{\widehat{\SS}}(f)$ is the array steering vector with sensor positions given by $\widehat{\SS}$. Hence, after the convolutional beamspace transformation, the measurements maintain the harmonic structure of array manifold $\widehat{\SS}$ while being scaled by filter response $B(f;\w)$. The linear convolutional beamspace approach ensures that \eqref{eq:conv_beamspace} holds over the entire angular domain $f\in [-1,1)$. Hence, the filter length $P$ and maximum possible shift 
{$k_{\max}\triangleq\max \widehat{\SS}$} 
are constrained to satisfy $
k_{\max}+P \leq N_a$ to ensure that all possible shifts in $\widehat{\SS}$ can be realized. 




\subsection{Minimum Subspace Distance of Convolutional {BGCs}
} \label{sec:CBS_min_dist}
A convolutional beamspace has the special property of maintaining the (sub) array manifold structure while also enabling beamforming. We derive a key property of convolutional beamspace subspace codes, which establishes their equivalence to antenna-space {Grassmanian} codes with antenna positions given by shift set $\widehat{\SS}$, regardless of the chosen $\w$.
\begin{lemma}\label{lem:conv_BSC_min_dist}
Given a length $P$ filter $\w\in \CC^P$ with $\|\w\|_2=1$, and a shift pattern $\widehat{\SS}$ such that $
k_{\max}
+P\leq N_a$, the minimum subspace distance of the BGC corresponding to convolutional beamformer $\W_\text{CB}(\w,\widehat{\SS})$ satisfies
\begin{align}
    d^{(s)}_{\min}(\cC(\W_\text{CB}(\w,\widehat{\SS}),\UU)) = d^{(s)}_{\min}(\cC(\I,\widehat{\SS})). \label{eq:min_dist_CB_equalto_ant}
\end{align}    
\end{lemma}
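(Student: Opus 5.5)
The plan is to use the factorization \eqref{eq:conv_beamspace} directly. For every grid point $f_n\in\cG$ it gives $\b_n = \W_\text{CB}(\w,\widehat{\SS})\a_{\UU}(f_n) = B(f_n;\w)\,\a_{\widehat{\SS}}(f_n)$. This holds because each row $\w_t^H$ is a zero-padded shift of $\w^H$. So $\w_t^H\a_{\UU}(f)=\sum_{m=1}^{P} w_m^* e^{j\pi(k_t+m-1)f}=e^{j\pi k_t f}B(f;\w)$, and the condition $k_{\max}+P\le N_a$ ensures that every shifted filter fits inside the aperture. I would record this identity first, with this one-line verification, as the key structural step.

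Next I would feed it into the definition \eqref{eq:subspace_distance_def}. For $l\neq k$,
\begin{align*}
\ds(\Span{\b_l},\Span{\b_k}) = 1-\frac{|B(f_l;\w)|^2|B(f_k;\w)|^2\,|\a_{\widehat{\SS}}(f_l)^H\a_{\widehat{\SS}}(f_k)|^2}{|B(f_l;\w)|^2\|\a_{\widehat{\SS}}(f_l)\|_2^2\,|B(f_k;\w)|^2\|\a_{\widehat{\SS}}(f_k)\|_2^2}.
\end{align*}
The scalar gains cancel, leaving $\ds(\Span{\a_{\widehat{\SS}}(f_l)},\Span{\a_{\widehat{\SS}}(f_k)})$. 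This is exactly the pairwise distance in the antenna-space code $\cC(\I,\widehat{\SS})$. Conceptually, multiplying a vector by a nonzero scalar does not change its line in $\Gr_{\CC}(1,T)$, so the two codes are the same set of lines. Taking the minimum over all pairs $l\neq k$ then gives \eqref{eq:min_dist_CB_equalto_ant}.

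The main obstacle is the cancellation step, which requires $B(f_n;\w)\neq 0$ for all $f_n\in\cG$. If the filter has a null at some grid point, the corresponding codeword is $\zero$ and its line is undefined. I would handle this by invoking design choice \labelcref{d:identifiability}: since $\b_n = B(f_n;\w)\a_{\widehat{\SS}}(f_n)$ and $\a_{\widehat{\SS}}(f_n)$ has unit-modulus entries, $\b_n\neq\zero$ if and only if $B(f_n;\w)\neq 0$. I would state this as the standing assumption under which \cref{def:BSC} is well-defined, and otherwise treat the lemma as holding on the grid points where the filter response is nonzero.

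The normalization $\|\w\|_2=1$ is used only to guarantee \labelcref{d:unitnorm}, since each row of $\W_\text{CB}$ then has unit norm. It plays no role in the distance computation itself.
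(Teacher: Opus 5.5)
Your proposal is correct and follows essentially the same route as the paper. Both arguments use the factorization \eqref{eq:conv_beamspace}, $\W_\text{CB}(\w,\widehat{\SS})\a_{\UU}(f_n)=B(f_n;\w)\a_{\widehat{\SS}}(f_n)$, and then scale invariance of lines to identify $\cC(\W_\text{CB}(\w,\widehat{\SS}),\UU)$ with $\cC(\I,\widehat{\SS})$. Your explicit pairwise cancellation, and your remark that it needs $B(f_n;\w)\neq 0$ on $\cG$ (design choice \labelcref{d:identifiability}), spell out what the paper's scale-invariance step (b) uses implicitly.
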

\begin{proof}
To prove \eqref{eq:min_dist_CB_equalto_ant} we show that $\cC(\W_\text{CB}(\w,\widehat{\SS}),\UU)$ is equivalent to $\cC(\I,\widehat{\SS})$  from \cref{def:BSC} as follows:
\begin{align} \label{eq:equivalence_BSC_antenna}
    \cC(\W_\text{CB}(\w,\widehat{\SS}),\UU)
    &\!=\!\left\{\langle \b_n\rangle \text{ s.t } \b_n\!=\!\W_\text{CB}(\w,\widehat{\SS})\a_{\UU}(f_n), f_n\!\in\!\cG \right\}\nonumber\\
    &\stackrel{(a)}{=}\left\{\langle \b_n\rangle \text{ s.t } \b_n\!=\!B(f_n;\w) \a_{\widehat{\SS}}(f_n) , f_n\!\in\!\cG \right\}\nonumber\\
    &\stackrel{(b)}{=} \{ \langle\a_{\widehat{\SS}}(f_n)\rangle, f_n\in \cG\} \nonumber\\
    &= \cC(\I,\widehat{\SS}),
\end{align}
where $(a)$ follows from \eqref{eq:conv_beamspace} and $(b)$ from {scale invariance}, i.e., $\langle B(f_n;\w)\a_{\widehat{\SS}}(f_n)\rangle = \langle\a_{\widehat{\SS}}(f_n)\rangle$. $ \cC(\I,\widehat{\SS})$ is the antenna-space Grassmanian line packing obtained by setting $\W = \I_{T
}$ and array geometry to $\widehat{\SS}$ in \cref{def:BSC}. Since $\cC(\W_\text{CB}(\w,\widehat{\SS}),\UU) = \cC(\I,\widehat{\SS}) $, they have the same minimum subspace distance.
\end{proof}

A direct consequence of this equivalence is that the minimum subspace distance of $\cC(\W,\UU)$ is \emph{invariant to choice of filter} $\w\in \CC^{P}$. Specifically, it only depends on the choice of shifts $\widehat{\SS}$ used to construct the matrix $\W$, while the filter $\w\in \CC^P$ can be chosen freely.
{Hence,} 
an upper bound on the probability of error for convolutional beamspace Grassmanian codes is {obtained} using  \cref{lem:conv_BSC_min_dist,eq:conv_beamspace,thm:ML_decoder_error_exp}:
    \begin{align}
        &\PP({\hat{f}_\text{md}(\y)}\neq f_k)\leq \nonumber \\
         &{(N_g-1)\exp\left(-\frac{|\alpha|^2}{2\sigma^2}|B(f_k;\w)|^2 T \left(1-\sqrt{1 - \ds_{\min}(\cC)}\right)\right)}, \label{eq:CBS_Pe_upperbound}
    \end{align}
where $\ds_{\min}(\cC) = \ds_{\min}(\cC(\I,\widehat{\SS}))$. Note that beamforming gain $|B(f_k;\w)|^2 T$ only depends on the choice of $\w\in \CC^P$ while the minimum subspace distance only depends on the choice of shifts $\widehat{\SS}$. 
%
%
Hence, given a filter $\w$, 
one needs to choose $\widehat{\SS}$ judiciously to maximise the minimum subspace distance. For example, these shifts can be chosen uniformly \cite{Vaidyanathan2020,Pote2023novel} or at random \cite{Myers2025insector}.  
{Next,} 
we study the effect of various choices of shifts and show that the shift pattern corresponding to the Singer and Bose-Chowla constructions are near optimal.


\subsection{ULA shifts vs. Sparse shifts for Convolutional Beamspace} \label{sec:CBS_ula_vs_sparse_shifts}

A common choice of $\widehat{\SS}$ is to choose it uniformly as $\widehat{\SS}=\{0,1,\dots,T-1\}$ \cite{Vaidyanathan2020,Pote2023novel}. 
For $N_g=T^2-1$, using the equivalence between $\cC(\W_{\rm CB}(\w,\widehat{\SS}), \UU)$ and $\cC(\I,\widehat{\SS})$ along with the result on ULA antenna-space subspace codes from \cite{mahdavifar2024subspace} we obtain
\begin{align*}
    d^{(s)}_{\min}\left(\cC \left(\W_{\text{CB}}(\w,\widehat{\SS}_{\rm ULA}),\UU\right) \right) \leq  1 - \frac{4}{\pi^2}.
\end{align*}
This implies that in the $N_g=O(T^2)$ regime, the minimum subspace distance of CBS with  ULA shifts is strictly bounded away from 1 independent of $T$. 

A better shift pattern is obtained using Bose-Chowla {and Singer} sets, 
i.e., {$\widehat{\SS} = \Omega_{\sf X}$, ${\sf X}\in \{\rm S, BC\}$.} 
For $T=q+1$ where $q$ is a prime power, and $N_g = T^2-T+1$, setting the shift pattern $\widehat{\SS} = \Omega_{\rm S}$ and using \cref{lem:conv_BSC_min_dist} along with \eqref{eq:min_dist_singer_BGC} gives
    \begin{align}
        d^{(s)}_{\min}\left(\cC(\W_{\text{CB}}(\w,\widehat{\SS}_{\rm Singer}),\UU)\right) = 1 - \frac{T-1}{T^2}.
        \label{eq:CB_singer_bound}
    \end{align}
Similarly, for $T=q$ and $N_g = T^2-1$, using the Bose--Chowla construction as the shift pattern $\widehat{\SS} = \Omega_{\rm BC}$ gives
\begin{align} 
  d^{(s)}_{\min}\left(\cC(\W_{\text{CB}}(\w,\widehat{\SS}_{\rm BC}),\UU) \right)  \geq 1 - \frac{2}{T}. \label{eq:CB_bosechowla_bound}
\end{align}
Thus, 
the minimum distance using \emph{sparse} shifts, such as {those of the} Bose-Chowla construction, strictly improves over ULA shifts when $1-\frac{2}{T} > 1- \frac{4}{\pi^2}$ which gives $T \geq \left\lceil\frac{\pi^2}{2}\right\rceil = 5$. 
\cref{sec:numerical} will study how the beamforming gain of CBS improves the error probability (over purely spatially isotropic beamformers) when the true DoA lies in a region of interest, keeping the shift pattern fixed while varying beamforming filter $\w \in \CC^{P}$ {(cf. \cref{fig:cbs_results} in \cref{sec:sim_spatially_selective})}.

\section{Numerical Simulations}\label{sec:numerical}

\pgfplotstableread[col sep=comma]{data_final/B1_roi_performance.csv}\BOneData
\pgfplotstablegetelem{0}{ROI_min}\of\BOneData
\edef\BOneROIMin{\pgfplotsretval}
\pgfplotstablegetelem{0}{ROI_max}\of\BOneData
\edef\BOneROIMax{\pgfplotsretval}
\pgfplotstablegetelem{0}{T}\of\BOneData
\edef\BOneT{\pgfplotsretval}
\pgfplotstablegetelem{0}{Na}\of\BOneData
\edef\BOneNa{\pgfplotsretval}
\pgfplotstablegetelem{0}{Ng}\of\BOneData
\edef\BOneNg{\pgfplotsretval}
\pgfplotstablegetelem{0}{P_short}\of\BOneData
\edef\BOnePShort{\pgfplotsretval}
\pgfplotstablegetelem{0}{P_long}\of\BOneData
\edef\BOnePLong{\pgfplotsretval}

We numerically validate the theory and assess BGCs across operating regimes. Unless stated otherwise, we set $f^\star=0.3$ (or the nearest grid point for on-grid experiments) and $\alpha=e^{j\pi/4}$. Results are averaged over $10^5$ independent Monte Carlo trials, with
$\mathrm{SNR}\triangleq 10\log_{10}(\frac{|\alpha|^2}{\sigma_n^2})$. Randomized baselines are ensemble averages over 100 codebook realizations, each using $10^3$ Monte Carlo trials. Fixing $\alpha$ entails no loss of generality: the error probability is phase-invariant, while changes in $|\alpha|$ are absorbed into the SNR definition and hence leave error-versus-SNR curves unchanged. For Bose--Chowla- and Singer-based BGCs, we use the cyclic shift with minimum aperture, as described in \cref{rem:cyclic_invariance_BC_singer}.

\ifdefined\FinalSimulationPlotsMaster
    \let\FinalSimulationPlotEnd\relax
\else
    \documentclass[class=IEEEtran,crop=false]{standalone}
    \IfFileExists{standalone_plot_preamble.inc}
        {
\IfFileExists{mathmacros.sty}{%
    \usepackage{mathmacros}%
}{%
    \IfFileExists{../mathmacros.sty}{%
        \input{../mathmacros.sty}%
    }{%
        \PackageError{standalone_plot_preamble}{mathmacros.sty not found}{%
            Compile from JSAIT_2025_paper or final_simulation_plots.}%
    }%
}
\usepackage{float}
\usepackage{xcolor}
\usepackage[capitalize]{cleveref}
\usepackage{tikz}
\usepackage{pgfplots}
\usepackage{pgfplotstable}
\usepackage[inline]{enumitem}
\pgfplotsset{
    compat=1.16,
    every axis/.append style={
        grid=both,
        minor tick num=1,
        every mark/.append style={solid},
        major grid style={line width=0.25pt,draw=gray!55},
        minor grid style={line width=0.15pt,draw=gray!35}
    }
}
\usepgfplotslibrary{fillbetween,groupplots}
}
        {}
    \begin{document}
    \def\FinalSimulationPlotEnd{\end{document}}
\fi
\IfFileExists{standalone_data_resolver.inc}
    {
\providecommand{\FinalSimulationResolveDataFile}[2]{%
    \IfFileExists{data_final/#1}{%
        \def#2{data_final/#1}%
    }{%
        \IfFileExists{../data_final/#1}{%
            \def#2{../data_final/#1}%
        }{%
            \PackageError{final-simulation-plot}{%
                Required data file #1 not found}{%
                Compile from JSAIT_2025_paper or final_simulation_plots, and
                ensure that JSAIT_2025_paper/data_final/#1 exists.}%
        }%
    }%
}
}
    {}
\FinalSimulationResolveDataFile{A1_theory_bound_empirical.csv}
    {\AOneEmpiricalCSV}
\FinalSimulationResolveDataFile{A1_theory_bound_bounds.csv}
    {\AOneBoundsCSV}
\begin{figure}[H]
    \centering
    \begin{tikzpicture}
    \begin{axis}[
        width=\columnwidth,height=4cm,
        xlabel={SNR (dB)},ylabel={\small $\PP(\hat{f}_{\rm md} \neq f^\star)$},
        xmin=-10,xmax=2,ymin=1e-4,ymax=1,
        ymode=log,unbounded coords=jump,
        legend columns=1,
        legend style={at={(0.02,0.02)},anchor=south west,
        draw=white!60!black,fill=white,font=\scriptsize},
        tick label style={font=\scriptsize},
        label style={font=\scriptsize},
        title style={font=\small},
        grid=both,
        major grid style={line width=0.25pt,draw=gray!55},
        minor grid style={line width=0.18pt,draw=gray!35},
    ]
        \addplot[blue,thick,mark=*] table[col sep=comma,x=SNR,y=Plot_Singer]
            {\AOneEmpiricalCSV};
        \addlegendentry{Empirical}
        \addplot[blue,thick,dashed] table[col sep=comma,x=SNR,y=Plot_UB_Singer]
            {\AOneBoundsCSV};
        \addlegendentry{\cref{thm:ML_decoder_error_exp} bound}
        \addplot[blue,thick,dotted] table[col sep=comma,x=SNR,y=Plot_UB_Singer_MarcumPairwise]
            {\AOneBoundsCSV};
        \addlegendentry{Union bound + Exact PEP}
    \end{axis}
    \end{tikzpicture}
    \caption{{Empirical probability of error vs SNR ($T=32$, $N_a=N_g=993$) 
    for Singer-based BGC. The upper bound in \cref{thm:ML_decoder_error_exp} captures the trend of the empirical plot. 
    }}
    \label{fig:fixed_doa_theorem1_bound}
\end{figure}
\FinalSimulationPlotEnd

\subsection{Spatially Isotropic Case without Prior DoA Knowledge}

\subsubsection{Empirical validation of \cref{thm:ML_decoder_error_exp}}

\Cref{fig:fixed_doa_theorem1_bound} compares the upper bound in \cref{thm:ML_decoder_error_exp} with the empirical probability of error for the BGC induced by the Singer construction for $T=32$, and $N_a=N_g=T^2-T+1=993$. The exponential bound closely captures the decay of the empirical error probability, demonstrating its utility as a tractable beamformer-design criterion for channel sensing. 
A tighter bound follows from the exact PEP in \eqref{eq:exact_pairwise_pep}, which uses the full pairwise subspace-distance spectrum of the BGC.
\subsubsection{Performance of BGC and Channel-code based designs in fully realizable $(N_a=N_g)$ regime}

\ifdefined\FinalSimulationPlotsMaster
    \let\FinalSimulationPlotEnd\relax
\else
    \documentclass[class=IEEEtran,crop=false]{standalone}
    \IfFileExists{standalone_plot_preamble.inc}
        {}
        {}
    \begin{document}
    \def\FinalSimulationPlotEnd{\end{document}}
\fi

\IfFileExists{data_final/A5_linePacking_collage.csv}{
    \def\AFiveData{data_final/A5_linePacking_collage.csv}
}{
    \IfFileExists{../data_final/A5_linePacking_collage.csv}{
        \def\AFiveData{../data_final/A5_linePacking_collage.csv}
    }{
        \PackageError{fig_A5_linePacking_collage}{
            Production CSV A5_linePacking_collage.csv not found}{
            Run fig_A5_linePacking_collage.m in full mode.}
        \def\AFiveData{data_final/A5_linePacking_collage.csv}
    }
}

\IfFileExists{data_final/A5_linePacking_collage_numerical_dmin.csv}{
    \def\AFiveDminCSV{data_final/A5_linePacking_collage_numerical_dmin.csv}
}{
    \IfFileExists{../data_final/A5_linePacking_collage_numerical_dmin.csv}{
        \def\AFiveDminCSV{../data_final/A5_linePacking_collage_numerical_dmin.csv}
    }{
        \PackageError{fig_A5_linePacking_collage}{
            Numerical minimum-distance CSV not found}{
            Run fig_A5_linePacking_collage.m in full mode.}
        \def\AFiveDminCSV{data_final/A5_linePacking_collage_numerical_dmin.csv}
    }
}

\pgfplotstableread[col sep=comma]{\AFiveDminCSV}\AFiveDminData
\pgfplotstablegetelem{0}{dmin_RM_numerical}\of\AFiveDminData
\edef\AFiveDminA{\pgfplotsretval}
\pgfplotstablegetelem{2}{dmin_BC_numerical}\of\AFiveDminData
\edef\AFiveDminB{\pgfplotsretval}
\pgfplotstablegetelem{1}{dmin_Singer_numerical}\of\AFiveDminData
\edef\AFiveDminC{\pgfplotsretval}
\pgfplotstablegetelem{3}{dmin_CP_numerical}\of\AFiveDminData
\edef\AFiveDminD{\pgfplotsretval}
\pgfplotstablegetelem{4}{dmin_RM_numerical}\of\AFiveDminData
\edef\AFiveDminE{\pgfplotsretval}
\pgfplotstablegetelem{6}{dmin_BC_numerical}\of\AFiveDminData
\edef\AFiveDminF{\pgfplotsretval}
\pgfplotstablegetelem{5}{dmin_Singer_numerical}\of\AFiveDminData
\edef\AFiveDminG{\pgfplotsretval}
\pgfplotstablegetelem{7}{dmin_CP_numerical}\of\AFiveDminData
\edef\AFiveDminH{\pgfplotsretval}
\newcommand{\AFivePrintDmin}[1]{%
    \pgfmathprintnumber[fixed,fixed zerofill,precision=3]{#1}}

\pgfplotsset{
    a5 cp/.style={blue,line width=0.70pt,solid,mark=diamond*,mark size=1.7pt,
        mark repeat=4,mark phase=1,
        mark options={solid,fill=blue}},
    a5 bose chowla/.style={red,line width=0.70pt,solid,mark=square*,mark size=1.7pt,
        mark repeat=4,mark phase=1,
        mark options={solid,fill=red}},
    a5 singer/.style={green!55!black,line width=0.70pt,solid,mark=*,mark size=1.7pt,
        mark repeat=4,mark phase=1,
        mark options={solid,fill=green!55!black}},
    a5 rm/.style={orange!85!black,line width=0.70pt,solid,mark=triangle*,mark size=1.8pt,
        mark repeat=4,mark phase=1,
        mark options={solid,fill=orange!85!black}},
    a5 rvq/.style={black,line width=0.70pt,solid,mark=o,
        mark repeat=4,mark phase=2,
        mark size=2.4pt,
        mark options={solid,fill=white,draw=black,line width=0.85pt}},
    a5 random phase/.style={violet!90!black,line width=0.65pt,dashed},
    a5 random antenna selection/.style={cyan!80!blue,line width=0.70pt,
        mark repeat=4,mark phase=4,
        solid,mark=asterisk,mark size=3.0pt,
        mark options={solid,draw=cyan!80!blue,line width=1.0pt}}
}

\begin{figure*}[t]
\centering
\resizebox{\textwidth}{!}{%
\begin{tikzpicture}
\begin{groupplot}[
    group style={group size=4 by 2,horizontal sep=0.30cm,vertical sep=1.10cm},
    width=4.25cm,height=3.25cm,
    ymin=1e-4,ymax=1,ymode=log,
    ytick={1e-4,1e-3,1e-2,1e-1,1},minor y tick num=9,
    grid=both,
    major grid style={line width=0.25pt,draw=gray!55},
    minor grid style={line width=0.18pt,draw=gray!35},
    unbounded coords=jump,scaled ticks=false,
    tick label style={font=\tiny},
    label style={font=\tiny},
    title style={font=\tiny,align=center,
        at={(axis description cs:0.5,0.9)},anchor=south}
]

\nextgroupplot[title={(a) RM$(2,3)$, $d^{(s)}_{\text{min}}=\AFivePrintDmin{\AFiveDminA}$\\
    $T=8$, $N_a=N_g=64$},
    xmin=-5,xmax=10,xlabel={},
    legend to name=AFiveCommonLegend,legend columns=7,
    legend style={draw=white!60!black,fill=white,font=\scriptsize,
        /tikz/every even column/.append style={column sep=0.12cm}}]
    \addlegendimage{a5 rm}\addlegendentry{RM(subspace)}
    \addlegendimage{a5 singer}\addlegendentry{Singer}
    \addlegendimage{a5 bose chowla}\addlegendentry{Bose--Chowla}
    \addlegendimage{a5 cp}\addlegendentry{CP}
    \addlegendimage{a5 rvq}\addlegendentry{RVQ}
    \addlegendimage{a5 random phase}\addlegendentry{Random phase}
    \addlegendimage{a5 random antenna selection}\addlegendentry{Random AS}
    \addplot[a5 rvq,forget plot] table[col sep=comma,x=A_SNR,y=A_Plot_RVQ]{\AFiveData};
    \addplot[a5 random phase,forget plot] table[col sep=comma,x=A_SNR,y=A_Plot_RandomPhase]{\AFiveData};
    \addplot[a5 random antenna selection,forget plot] table[col sep=comma,x=A_SNR,y=A_Plot_RandomAntennaSelection]{\AFiveData};
    \addplot[a5 rm,forget plot] table[col sep=comma,x=A_SNR,y=A_Plot_RM]{\AFiveData};

\nextgroupplot[title={(b) Bose--Chowla, $d^{(s)}_{\text{min}}=\AFivePrintDmin{\AFiveDminB}$,\\ $T=8$, $N_a=N_g=63$},
    xmin=-5,xmax=10,xlabel={},yticklabels=\empty]
    \addplot[a5 rvq,forget plot] table[col sep=comma,x=E_SNR,y=E_Plot_RVQ]{\AFiveData};
    \addplot[a5 random phase,forget plot] table[col sep=comma,x=E_SNR,y=E_Plot_RandomPhase]{\AFiveData};
    \addplot[a5 random antenna selection,forget plot] table[col sep=comma,x=E_SNR,y=E_Plot_RandomAntennaSelection]{\AFiveData};
    \addplot[a5 bose chowla,forget plot] table[col sep=comma,x=E_SNR,y=E_Plot_BC]{\AFiveData};

\nextgroupplot[title={(c) Singer, $d^{(s)}_{\text{min}}=\AFivePrintDmin{\AFiveDminC}$\\
    $T=9$, $N_a=N_g=73$},
    xmin=-5,xmax=10,xlabel={},yticklabels=\empty]
    \addplot[a5 rvq,forget plot] table[col sep=comma,x=C_SNR,y=C_Plot_RVQ]{\AFiveData};
    \addplot[a5 random phase,forget plot] table[col sep=comma,x=C_SNR,y=C_Plot_RandomPhase]{\AFiveData};
    \addplot[a5 random antenna selection,forget plot] table[col sep=comma,x=C_SNR,y=C_Plot_RandomAntennaSelection]{\AFiveData};
    \addplot[a5 singer,forget plot] table[col sep=comma,x=C_SNR,y=C_Plot_Singer]{\AFiveData};

\nextgroupplot[title={(d) CP, $d^{(s)}_{\text{min}}=\AFivePrintDmin{\AFiveDminD}$\\ $T=9$, $N_a=N_g=81$},
    xmin=-5,xmax=10,xlabel={},yticklabels=\empty]
    \addplot[a5 rvq,forget plot] table[col sep=comma,x=G_SNR,y=G_Plot_RVQ]{\AFiveData};
    \addplot[a5 random phase,forget plot] table[col sep=comma,x=G_SNR,y=G_Plot_RandomPhase]{\AFiveData};
    \addplot[a5 random antenna selection,forget plot] table[col sep=comma,x=G_SNR,y=G_Plot_RandomAntennaSelection]{\AFiveData};
    \addplot[a5 cp,forget plot] table[col sep=comma,x=G_SNR,y=G_Plot_CP]{\AFiveData};

\nextgroupplot[title={(e) RM$(2,4)$,$d^{(s)}_{\text{min}}=\AFivePrintDmin{\AFiveDminE}$\\ $T=16$, $N_a=N_g=1024$},
    xmin=-10,xmax=5,xlabel={SNR (dB)}]
    \addplot[a5 rvq,forget plot] table[col sep=comma,x=B_SNR,y=B_Plot_RVQ]{\AFiveData};
    \addplot[a5 random phase,forget plot] table[col sep=comma,x=B_SNR,y=B_Plot_RandomPhase]{\AFiveData};
    \addplot[a5 random antenna selection,forget plot] table[col sep=comma,x=B_SNR,y=B_Plot_RandomAntennaSelection]{\AFiveData};
    \addplot[a5 rm,forget plot] table[col sep=comma,x=B_SNR,y=B_Plot_RM]{\AFiveData};

\nextgroupplot[title={(f) Bose--Chowla, $d^{(s)}_{\text{min}}=\AFivePrintDmin{\AFiveDminF}$\\ $T=16$, $N_a=N_g=255$},
    xmin=-10,xmax=5,xlabel={SNR (dB)},yticklabels=\empty]
    \addplot[a5 rvq,forget plot] table[col sep=comma,x=F_SNR,y=F_Plot_RVQ]{\AFiveData};
    \addplot[a5 random phase,forget plot] table[col sep=comma,x=F_SNR,y=F_Plot_RandomPhase]{\AFiveData};
    \addplot[a5 random antenna selection,forget plot] table[col sep=comma,x=F_SNR,y=F_Plot_RandomAntennaSelection]{\AFiveData};
    \addplot[a5 bose chowla,forget plot] table[col sep=comma,x=F_SNR,y=F_Plot_BC]{\AFiveData};

\nextgroupplot[title={(g) Singer, $d^{(s)}_{\text{min}}=\AFivePrintDmin{\AFiveDminG}$\\ $T=17$, $N_a=N_g=273$},
    xmin=-10,xmax=5,xlabel={SNR (dB)},yticklabels=\empty]
    \addplot[a5 rvq,forget plot] table[col sep=comma,x=D_SNR,y=D_Plot_RVQ]{\AFiveData};
    \addplot[a5 random phase,forget plot] table[col sep=comma,x=D_SNR,y=D_Plot_RandomPhase]{\AFiveData};
    \addplot[a5 random antenna selection,forget plot] table[col sep=comma,x=D_SNR,y=D_Plot_RandomAntennaSelection]{\AFiveData};
    \addplot[a5 singer,forget plot] table[col sep=comma,x=D_SNR,y=D_Plot_Singer]{\AFiveData};

\nextgroupplot[title={(h) CP, $d^{(s)}_{\text{min}}=\AFivePrintDmin{\AFiveDminH}$\\ $T=17$, $N_a=N_g=289$},
    xmin=-10,xmax=5,xlabel={SNR (dB)},yticklabels=\empty]
    \addplot[a5 rvq,forget plot] table[col sep=comma,x=H_SNR,y=H_Plot_RVQ]{\AFiveData};
    \addplot[a5 random phase,forget plot] table[col sep=comma,x=H_SNR,y=H_Plot_RandomPhase]{\AFiveData};
    \addplot[a5 random antenna selection,forget plot] table[col sep=comma,x=H_SNR,y=H_Plot_RandomAntennaSelection]{\AFiveData};
    \addplot[a5 cp,forget plot] table[col sep=comma,x=H_SNR,y=H_Plot_CP]{\AFiveData};

\end{groupplot}
\node[rotate=90,anchor=south,font=\scriptsize] at
    ($(group c1r1.west)!0.5!(group c1r2.west)+(-0.80cm,0)$)
    {$\PP(\hat{f}_{\rm md} \neq f^*)$};
\end{tikzpicture}
}%
\par
\ref{AFiveCommonLegend}

\caption{
Comparison of deterministic BGC constructions 
with randomized baselines (on-grid DoA, $N_a=N_g$). While the gap reduces for larger $(T,N_a=N_g)$, randomized methods lack deterministic guarantees.
}
\label{fig:final_line_packing_collage}
\end{figure*}
\FinalSimulationPlotEnd


\ifdefined\FinalSimulationPlotsMaster
    \let\FinalSimulationPlotEnd\relax
\else
    \documentclass[class=IEEEtran,crop=false]{standalone}
    \IfFileExists{standalone_plot_preamble.inc}
        {}
        {}
    \begin{document}
    \def\FinalSimulationPlotEnd{\end{document}}
\fi

\IfFileExists{data_final/A5_BC_RM_random_T16_Ng255.csv}{
    \def\AFiveSingleData{data_final/A5_BC_RM_random_T16_Ng255.csv}
}{
    \IfFileExists{../data_final/A5_BC_RM_random_T16_Ng255.csv}{
        \def\AFiveSingleData{../data_final/A5_BC_RM_random_T16_Ng255.csv}
    }{
        \PackageError{fig_A5_BC_RM_random_T16_Ng255}{
            Production CSV A5_BC_RM_random_T16_Ng255.csv not found}{
            Run fig_A5_BC_RM_random_T16_Ng255.m in full mode.}
        \def\AFiveSingleData{data_final/A5_BC_RM_random_T16_Ng255.csv}
    }
}

\IfFileExists{data_final/A5_linePacking_collage_numerical_dmin.csv}{
    \def\AFiveDminCSV{data_final/A5_linePacking_collage_numerical_dmin.csv}
}{
    \IfFileExists{../data_final/A5_linePacking_collage_numerical_dmin.csv}{
        \def\AFiveDminCSV{../data_final/A5_linePacking_collage_numerical_dmin.csv}
    }{
        \PackageError{fig_A5_linePacking_collage}{
            Numerical minimum-distance CSV not found}{
            Run fig_A5_linePacking_collage.m in full mode.}
        \def\AFiveDminCSV{data_final/A5_linePacking_collage_numerical_dmin.csv}
    }
}

\pgfplotstableread[col sep=comma]{\AFiveDminCSV}\AFiveDminData
\pgfplotstablegetelem{0}{dmin_RM_numerical}\of\AFiveDminData
\edef\AFiveDminA{\pgfplotsretval}
\pgfplotstablegetelem{2}{dmin_BC_numerical}\of\AFiveDminData
\edef\AFiveDminB{\pgfplotsretval}
\pgfplotstablegetelem{1}{dmin_Singer_numerical}\of\AFiveDminData
\edef\AFiveDminC{\pgfplotsretval}
\pgfplotstablegetelem{3}{dmin_CP_numerical}\of\AFiveDminData
\edef\AFiveDminD{\pgfplotsretval}
\pgfplotstablegetelem{4}{dmin_RM_numerical}\of\AFiveDminData
\edef\AFiveDminE{\pgfplotsretval}
\pgfplotstablegetelem{6}{dmin_BC_numerical}\of\AFiveDminData
\edef\AFiveDminF{\pgfplotsretval}
\pgfplotstablegetelem{5}{dmin_Singer_numerical}\of\AFiveDminData
\edef\AFiveDminG{\pgfplotsretval}
\pgfplotstablegetelem{7}{dmin_CP_numerical}\of\AFiveDminData
\edef\AFiveDminH{\pgfplotsretval}
\newcommand{\AFivePrintDmin}[1]{%
    \pgfmathprintnumber[fixed,fixed zerofill,precision=3]{#1}}

\pgfplotstableread[col sep=comma]
    {\AFiveSingleData}\AFiveSingleTable
\pgfplotstablegetelem{0}{T}\of\AFiveSingleTable
\edef\AFiveSingleT{\pgfplotsretval}
\pgfplotstablegetelem{0}{Na}\of\AFiveSingleTable
\edef\AFiveSingleNa{\pgfplotsretval}
\pgfplotstablegetelem{0}{Ng}\of\AFiveSingleTable
\edef\AFiveSingleNg{\pgfplotsretval}
\pgfplotstablegetelem{0}{Nmc}\of\AFiveSingleTable
\edef\AFiveSingleNmc{\pgfplotsretval}
\pgfplotstablegetelem{0}{alpha_abs}\of\AFiveSingleTable
\edef\AFiveSingleAlphaAbs{\pgfplotsretval}
\pgfplotstablegetelem{0}{alpha_phase_rad}\of\AFiveSingleTable
\edef\AFiveSingleAlphaPhase{\pgfplotsretval}
\pgfplotstablegetelem{0}{error_metric_plot_label}\of\AFiveSingleTable
\edef\AFiveSingleErrorMetric{\pgfplotsretval}
\pgfplotstablegetelem{0}{N_random_codebooks}\of\AFiveSingleTable
\edef\AFiveSingleNRandom{\pgfplotsretval}

\pgfplotsset{
    cp/.style={blue,line width=1pt,solid,mark=diamond*,mark size=1.7pt,
        mark repeat=4,mark phase=1,
        mark options={solid,fill=blue}},
    bose chowla/.style={red,line width=1pt,solid,mark=square*,mark size=1.7pt,
        mark repeat=4,mark phase=1,
        mark options={solid,fill=red}},
    singer/.style={green!55!black,line width=1pt,solid,mark=*,mark size=1.7pt,
        mark repeat=4,mark phase=1,
        mark options={solid,fill=green!55!black}},
    rm/.style={orange!85!black,line width=1pt,solid,mark=triangle*,mark size=1.8pt,
        mark repeat=4,mark phase=1,
        mark options={solid,fill=orange!85!black}},
    rvq/.style={black,line width=1pt,solid,mark=o,
        mark repeat=4,mark phase=2,
        mark size=2.4pt,
        mark options={solid,fill=white,draw=black,line width=0.85pt}},
    random phase/.style={violet!90!black,line width=1pt,dashed},
    random antenna selection/.style={cyan!80!blue,line width=1pt,
        mark repeat=4,mark phase=4,
        solid,mark=asterisk,mark size=3.0pt,
        mark options={solid,draw=cyan!80!blue,line width=1.0pt}}
}

\begin{figure}[t]
\centering
\begin{tikzpicture}
\begin{axis}[
    width=0.98\columnwidth,height=4.5cm,
    xlabel={SNR (dB)},ylabel={$\PP(\hat{f}_{\rm md} \neq f^*)$},
    xmin=-10,xmax=5,ymin=1e-4,ymax=1,
    ymode=log,unbounded coords=jump,
    xtick={-10,-5,0,5},
    ytick={1e-4,1e-3,1e-2,1e-1,1},minor y tick num=9,
    grid=both,
    major grid style={line width=0.25pt,draw=gray!55},
    minor grid style={line width=0.18pt,draw=gray!35},
    tick label style={font=\scriptsize},
    label style={font=\scriptsize},title style={font=\small},
    legend columns=1,
    legend style={
        at={(0.02,0.02)},anchor=south west,
        draw=white!60!black,fill=white,font=\scriptsize,
        /tikz/every even column/.append style={column sep=0.12cm}}
]
     \addplot[bose chowla]
        table[x=SNR,y=Plot_BC]{\AFiveSingleTable};
        \addlegendentry{Bose--Chowla, $d^{(s)}_{\text{min}}=\AFivePrintDmin{\AFiveDminF}$}
    \addplot[rm]
        table[x=SNR,y=Plot_RM]{\AFiveSingleTable};
        \addlegendentry{RM(subspace), $d^{(s)}_{\text{min}}=\AFivePrintDmin{\AFiveDminE}$}
    \addplot[rvq]
        table[x=SNR,y=Plot_RVQ]{\AFiveSingleTable};
        \addlegendentry{RVQ}
    \addplot[random phase]
        table[x=SNR,y=Plot_RandomPhase]{\AFiveSingleTable};
        \addlegendentry{Random phase}
    \addplot[random antenna selection]
        table[x=SNR,y=Plot_RandomAntennaSelection]{\AFiveSingleTable};
        \addlegendentry{Random AS}
   
\end{axis}
\end{tikzpicture}
\caption{
On-grid probability of error comparison of Bose--Chowla BGC and BPSK-mapped RM subspace code for $T=16$ and $N_a=N_g=255$, with a uniformly random on-grid DoA. 
The Bose--Chowla BGC outperforms both the optimally pruned RM subspace code and the randomized  baselines due to its superior $d^{(s)}_{\min}$
}
\label{fig:line_packing_BC_RM_T16_Ng255}
\end{figure}
\FinalSimulationPlotEnd

\cref{fig:final_line_packing_collage} compares BGC constructions across several $(T,N_g)$ settings with $N_g=N_a$, so that $\A_{\UU}$ is left-invertible. The true $f^\star$ is drawn uniformly from $\cG$. 
We evaluate the BGCs of \cref{sec:design_isotropic_BGC} at their admissible $(T,N_g)$ pairs. Since Bose--Chowla, Singer, and CP 
constructions do not share common admissible parameters under their theoretical guarantees, they are shown in separate subplots. The Reed--Muller BGC is constructed as in \cref{sec:RM_induced_BGC}, with only the complementary codewords removed to ensure nonzero minimum subspace distance. 
We compare these deterministic constructions against: \begin{enumerate*}[label=(\roman*)]
\item \emph{Random Vector quantization (RVQ)} \cite{Love2007,raghavan2013ensemble}, where columns of $\B_{\rm RVQ}$ are drawn independently and uniformly from the sphere $\{\x\in \mathbb{C}^{T}, \|\x\|^2_2 = T\}$, followed by beamformer design via \eqref{eq:bf_design_pinv},
\item \emph{Random-phase} beamforming with $[\W_{\rm RP}]_{t,n} = \frac{1}{\sqrt{N_a}}e^{j\phi_{t,n}},  \phi_{t,n}\stackrel{\mathrm{i.i.d.}}{\sim} \mathrm{Uniform}([0,2\pi])$  
\cite{marzi2016compressive,alkhateeb2015compressed,venugopal2017channel}), and
\item \emph{Random Antenna Selection (AS)}, with a $T$-subset of antennas drawn uniformly without replacement.
\end{enumerate*}

We consider $T\in \{8,9\}$ and $T \in \{16, 17\}$, according to the admissible BGC parameters. Bose--Chowla and Singer, which are near-optimal and optimal in their respective regimes, consistently perform best for their respective $(T,N_g)$. CP-based BGCs also outperform randomized baselines in the quadratic $N_g=O(T^2)$ regime. For $(T,N_g)=(8,64)$, the BPSK-mapped Reed--Muller BGC outperforms all randomized baselines; for $(T,N_g)=(16,1024)$, where the codebook size is four times $T^2$, it remains competitive with RVQ and random antenna selection. The RVQ gap narrows as codebook size grows, although it should be noted that its performance is ensemble-averaged and offers no guarantee for any particular realization. Deterministic BGCs instead provide explicit $d_{\min}^{(s)}$ guarantees when available, along with additional benefits pertaining to storage and decoding efficiencies \cite{Mahdavifar2024,ashikhmin2010grassmannian}.

 While \cref{fig:final_line_packing_collage} evaluated the full BPSK-mapped RM subspace code (after removing complementary codewords), \cref{fig:line_packing_BC_RM_T16_Ng255} compares a near-optimal Grassmannian BGC with a pruned RM codebook in the same $(T,N_g)$ regime. Motivated by its strong performance in \cref{fig:final_line_packing_collage}, we use the Bose--Chowla construction for the Grassmannian BGC. The RM subspace codebook is formed by selecting $255$ codewords from the $2048$ codewords of $\mathrm{RM}(2,4)$ with the maximum possible $\ds_{\min}$ of $0.75$ after BPSK mapping.\footnote{The optimality of this pruning follows from known properties of the $\mathrm{RM}(2,4)$ weight distribution as shown in \cref{apdx:rm_dmin_proof}.}
The superior probability of error of the Bose--Chowla BGC in \cref{fig:line_packing_BC_RM_T16_Ng255} is consistent with its larger minimum subspace distance compared to RM(subspace) BGC.
\subsubsection{Performance of BGCs for off-grid 
DoA}

\ifdefined\FinalSimulationPlotsMaster
    \let\FinalSimulationPlotEnd\relax
\else
    \documentclass[class=IEEEtran,crop=false]{standalone}
    \IfFileExists{standalone_plot_preamble.inc}
        {}
        {}
    \begin{document}
    \def\FinalSimulationPlotEnd{\end{document}}
\fi
\IfFileExists{standalone_data_resolver.inc}
    {}
    {}

\ifdefined\AThreeCombinedCSV
\else
    \FinalSimulationResolveDataFile
        {A3_overcomplete_detection_beamforming_gain.csv}
        {\AThreeCombinedCSV}
\fi
\pgfplotstableread[col sep=comma]
    {\AThreeCombinedCSV}
    \AThreeCombinedData
\pgfplotstablegetelem{0}{T}\of\AThreeCombinedData
\edef\AThreeCombinedT{\pgfplotsretval}
\pgfplotstablegetelem{0}{Na}\of\AThreeCombinedData
\edef\AThreeCombinedNa{\pgfplotsretval}
\pgfplotstablegetelem{0}{Ng}\of\AThreeCombinedData
\edef\AThreeCombinedNg{\pgfplotsretval}
\pgfplotstablegetelem{0}{Nmc}\of\AThreeCombinedData
\edef\AThreeCombinedNmc{\pgfplotsretval}
\pgfplotstablegetelem{0}{epsilon}\of\AThreeCombinedData
\edef\AThreeCombinedEpsilon{\pgfplotsretval}
\pgfplotstablegetelem{0}{alpha_abs}\of\AThreeCombinedData
\edef\AThreeCombinedAlphaAbs{\pgfplotsretval}
\pgfplotstablegetelem{0}{DoA_min}\of\AThreeCombinedData
\edef\AThreeCombinedDoAMin{\pgfplotsretval}
\pgfplotstablegetelem{0}{DoA_max}\of\AThreeCombinedData
\edef\AThreeCombinedDoAMax{\pgfplotsretval}
\pgfplotstablegetcolsof{\AThreeCombinedData}
\edef\AThreeCombinedColumnCount{\pgfplotsretval}

\pgfplotsset{
    cp/.style={blue,line width=1pt,solid,mark=diamond*,mark size=1.7pt,
        mark repeat=4,mark phase=1,
        mark options={solid,fill=blue}},
    bose chowla/.style={red,line width=1pt,solid,mark=square*,mark size=1.7pt,
        mark repeat=4,mark phase=1,
        mark options={solid,fill=red}},
    singer/.style={green!55!black,line width=1pt,solid,mark=*,mark size=1.7pt,
        mark repeat=4,mark phase=1,
        mark options={solid,fill=green!55!black}},
    rm/.style={orange!85!black,line width=1pt,solid,mark=triangle*,mark size=1.8pt,
        mark repeat=4,mark phase=1,
        mark options={solid,fill=orange!85!black}},
    rvq/.style={black,line width=1pt,solid,mark=o,
        mark repeat=4,mark phase=2,
        mark size=2.4pt,
        mark options={solid,fill=white,draw=black,line width=0.85pt}},
    random phase/.style={violet!90!black,line width=1pt,dashed},
    random antenna selection/.style={cyan!80!blue,line width=1pt,
        mark repeat=4,mark phase=4,
        solid,mark=asterisk,mark size=3.0pt,
        mark options={solid,draw=cyan!80!blue,line width=1.0pt}}
}

\begin{figure}[t]
    \centering
    \begin{minipage}{\columnwidth}
        \centering
        \begin{tikzpicture}
        \begin{axis}[
            width=\columnwidth,height=4.5cm,
            xlabel={SNR (dB)},
            ylabel={$\PP(|\widehat f_{\rm md}-f^\star|> \epsilon)$
            },
            xmin=-10,xmax=10,ymin=2e-2,ymax=1,
            ymode=log,
            unbounded coords=jump,
            tick label style={font=\scriptsize},
            grid=both,
        major grid style={line width=0.25pt,draw=gray!55},
        minor grid style={line width=0.18pt,draw=gray!35},
            label style={font=\scriptsize},
            legend columns=1,
            legend style={
                at={(0.02,0.02)},anchor=south west,
                draw=white!60!black,
                fill=white,
                font=\scriptsize,
                /tikz/every even column/.append style={column sep=0.12cm}
            }
        ]
            \addplot[bose chowla] table[x=SNR,y=Plot_BC] {\AThreeCombinedData};
            \addlegendentry{Bose--Chowla}
            
            \addplot[rm] table[x=SNR,y=Plot_RM] {\AThreeCombinedData};
            \addlegendentry{RM (subspace)}
            
            \addplot[rvq]table[x=SNR,y=Plot_RVQ] {\AThreeCombinedData};
            \addlegendentry{RVQ}
            \addplot[random phase] table[x=SNR,y=Plot_RandomPhase] {\AThreeCombinedData};
            \addlegendentry{Random phase}
            
            \ifnum\AThreeCombinedColumnCount>94\relax
                \addplot[random antenna selection]
                    table[x=SNR,y=Plot_RandomAntennaSelection]
                    {\AThreeCombinedData};
                \addlegendentry{Random AS}
            \fi
        \end{axis}
        \end{tikzpicture}

        \par\smallskip
    \end{minipage}
    \caption{Off-grid detection error ($\epsilon\!=\!1/N_g$ for nearest gridpoint) when 
    true DoA is distrbuted as $f^\star\!\sim\!\mathrm{Uniform}([\AThreeCombinedDoAMin,\AThreeCombinedDoAMax))$ ($T\!=\!\AThreeCombinedT$, $N_a\!=\!N_g\!=\!\AThreeCombinedNa$). The performance of 
    CP and RM (subspace) BGCs deteriorates 
    off-grid unlike the 
    Bose--Chowla BGC.
    }
    \label{fig:final_line_packing_detection_error}
\end{figure}
\FinalSimulationPlotEnd

\ifdefined\FinalSimulationPlotsMaster
    \let\FinalSimulationPlotEnd\relax
    \def\AThreeDetectionFigureReference{%
        \cref{fig:final_line_packing_detection_error}}
\else
    \documentclass[class=IEEEtran,crop=false]{standalone}
    \IfFileExists{standalone_plot_preamble.inc}
        {}
        {}
    \begin{document}
    \def\FinalSimulationPlotEnd{\end{document}}
    \def\AThreeDetectionFigureReference{the corresponding detection-error
        experiment}
\fi
\IfFileExists{standalone_data_resolver.inc}
    {}
    {}

\ifdefined\AThreeCombinedCSV
\else
    \FinalSimulationResolveDataFile
        {A3_overcomplete_detection_beamforming_gain.csv}
        {\AThreeCombinedCSV}
\fi
\pgfplotstableread[col sep=comma]
    {\AThreeCombinedCSV}
    \AThreeCombinedData
\pgfplotstablegetcolsof{\AThreeCombinedData}
\edef\AThreeCombinedColumnCount{\pgfplotsretval}

\pgfplotstableread[col sep=comma]
    {\AThreeCombinedCSV}
    \AThreeCombinedData
\pgfplotstablegetelem{0}{T}\of\AThreeCombinedData
\edef\AThreeCombinedT{\pgfplotsretval}
\pgfplotstablegetelem{0}{Na}\of\AThreeCombinedData
\edef\AThreeCombinedNa{\pgfplotsretval}
\pgfplotstablegetelem{0}{Ng}\of\AThreeCombinedData
\edef\AThreeCombinedNg{\pgfplotsretval}
\pgfplotstablegetelem{0}{Nmc}\of\AThreeCombinedData
\edef\AThreeCombinedNmc{\pgfplotsretval}
\pgfplotstablegetelem{0}{epsilon}\of\AThreeCombinedData
\edef\AThreeCombinedEpsilon{\pgfplotsretval}
\pgfplotstablegetelem{0}{alpha_abs}\of\AThreeCombinedData
\edef\AThreeCombinedAlphaAbs{\pgfplotsretval}
\pgfplotstablegetelem{0}{DoA_min}\of\AThreeCombinedData
\edef\AThreeCombinedDoAMin{\pgfplotsretval}
\pgfplotstablegetelem{0}{DoA_max}\of\AThreeCombinedData
\edef\AThreeCombinedDoAMax{\pgfplotsretval}
\pgfplotstablegetcolsof{\AThreeCombinedData}
\edef\AThreeCombinedColumnCount{\pgfplotsretval}

\pgfplotsset{
    cp/.style={blue,line width=1pt,solid,mark=diamond*,mark size=1.7pt,
        mark repeat=4,mark phase=1,
        mark options={solid,fill=blue}},
    bose chowla/.style={red,line width=1pt,solid,mark=square*,mark size=1.7pt,
        mark repeat=4,mark phase=1,
        mark options={solid,fill=red}},
    singer/.style={green!55!black,line width=1pt,solid,mark=*,mark size=1.7pt,
        mark repeat=4,mark phase=1,
        mark options={solid,fill=green!55!black}},
    rm/.style={orange!85!black,line width=1pt,solid,mark=triangle*,mark size=1.8pt,
        mark repeat=4,mark phase=1,
        mark options={solid,fill=orange!85!black}},
    rvq/.style={black,line width=1pt,solid,mark=o,
        mark repeat=4,mark phase=2,
        mark size=2.4pt,
        mark options={solid,fill=white,draw=black,line width=0.85pt}},
    random phase/.style={violet!90!black,line width=1pt,dashed},
    random antenna selection/.style={cyan!80!blue,line width=1pt,
        mark repeat=4,mark phase=4,
        solid,mark=asterisk,mark size=3.0pt,
        mark options={solid,draw=cyan!80!blue,line width=1.0pt}}
}

\begin{figure}[t]
    \centering
    \begin{minipage}{\columnwidth}
        \centering
        \begin{tikzpicture}
        \begin{axis}[
            width=\columnwidth,height=4.5cm,
            xlabel={SNR (dB)},
            ylabel={$\overline{G}_{\rm BF}$},
            xmin=-10,xmax=10,ymin=0,ymax=0.8,
            tick label style={font=\scriptsize},
            label style={font=\scriptsize},
            title style={font=\small},
            grid=both,
        major grid style={line width=0.25pt,draw=gray!55},
        minor grid style={line width=0.18pt,draw=gray!35},
            legend columns=1,
            legend style={
                draw=white!60!black,
                at={(0.98,0.02)},anchor=south east,
                fill=white,
                font=\scriptsize,
                /tikz/every even column/.append style={column sep=0.12cm}
            }
        ]
            \addplot[bose chowla] table[x=SNR,y=BeamGain_BC] {\AThreeCombinedData};
            \addlegendentry{Bose--Chowla}
            
            \addplot[rm] table[x=SNR,y=BeamGain_RM] {\AThreeCombinedData};
            \addlegendentry{RM(subspace)}

            \addplot[rvq] table[x=SNR,y=BeamGain_RVQ] {\AThreeCombinedData};
            \addlegendentry{RVQ}
            
            
            \addplot[random phase] table[x=SNR,y=BeamGain_RandomPhase] {\AThreeCombinedData};
            \addlegendentry{Random phase}
            
            \addplot[random antenna selection] table[x=SNR,y=BeamGain_RandomAntennaSelection] {\AThreeCombinedData};
            \addlegendentry{Random AS}
            \addplot[gray!80!black,thick,dashed] table[x=SNR,y=GridConstrainedCSI_BeamGain] {\AThreeCombinedData};
            \draw[gray!80!black, thick, -stealth]
                (axis cs:-7,0.6) -- (axis cs:-7,0.78)
                node[at start,below] {\small Perfect CSI};
        \end{axis}
        \end{tikzpicture}

        \par\smallskip
    \end{minipage}
    \caption{Average downlink beamforming gain for off-grid true DoA $f^\star\sim \mathrm{Uniform}([\AThreeCombinedDoAMin,\AThreeCombinedDoAMax))$ ($T=\AThreeCombinedT$, $N_a=N_g=\AThreeCombinedNa$). The Bose--Chowla construction comes closest to the perfect CSI 
    upper bound while other methods saturate below this.
    }
    \label{fig:final_line_packing_beamforming_gain}
    \label{fig:final_line_packing_spectral_efficiency}
\end{figure}
\FinalSimulationPlotEnd

In practice, the true DoA need not lie on the designed grid. We continue with BGCs for 
$T=16,\ N_a=N_g=255$, 
simulating 
an off-grid setting where $f^\star\sim\mathrm{Uniform}([-1,1])$ and 
evaluating 
the probability that $\hat{f}_{\rm md}(\y)$ fails to identify the grid point nearest to $f^\star$, namely $\PP(|\hat{f}_{\rm md}-f^\star|>\epsilon)$ where $\epsilon=\frac{1}{N_g}$. 
\cref{fig:final_line_packing_detection_error} shows that, even in the off-grid setting, the Bose--Chowla construction continues to outperform the RM(subspace) BGC, as well as the randomized baselines.


We also study the effect of the improved DoA estimation performance of BGCs, used during the channel sensing phase, on beamforming gain, which is important in the communications phase. A larger beamgain can support, e.g., a higher achievable rate \cite{Love2019ssm,Dai2025}. Specifically, we consider the downlink normalized beamgain
$
    \overline G_{\rm BF}
    \triangleq \EE\!
    [
    |\h^H\w_{\rm c}|^2/
    \|\h\|_2^2
    ]
    =\EE\!
    [
    |\a^H(f^\star)\a(\hat f_{\rm md})|^2
    /
    N_a^2
    ]
    $
\cite{Love2019ssm}, corresponding to using beam $\w_{\rm c}=\a(\hat f_{\rm md})/\sqrt{N_a}$ on channel $\h=\alpha\a(f^\star)$. \cref{fig:final_line_packing_beamforming_gain} shows that the superior off-grid detection performance of the Bose--Chowla construction translates into the largest average beamforming gain among the considered methods.
As a benchmark, we also report the average beamforming gain when perfect CSI is available and the downlink beam direction is selected from 
$\cG$, so that 
$\overline G_{\rm BF, oracle}  \triangleq \EE\!\left[\max_{f_n\in\mathcal G} \frac{|\a^H(f^\star)\a(f_n)|^2}{N_a^2}\right].$
The Bose--Chowla construction approaches this benchmark and effectively attains it beyond 
$\approx 5$~dB SNR. 

\ifdefined\FinalSimulationPlotsMaster
    \let\FinalSimulationPlotEnd\relax
\else
    \documentclass[class=IEEEtran,crop=false]{standalone}
    \IfFileExists{standalone_plot_preamble.inc}
        {}
        {}
    \begin{document}
    \def\FinalSimulationPlotEnd{\end{document}}
\fi
\IfFileExists{standalone_data_resolver.inc}
    {}
    {}
\ifdefined\BOnePerformanceCSV
\else
    \FinalSimulationResolveDataFile{B1_roi_performance.csv}{\BOnePerformanceCSV}
\fi
\pgfplotstableread[col sep=comma]{\BOnePerformanceCSV}\BOneData
\pgfplotstablegetcolsof{\BOneData}
\edef\BOneColumnCount{\pgfplotsretval}
\pgfplotstablegetelem{0}{epsilon}\of\BOneData
\edef\BOneEpsilon{\pgfplotsretval}
\pgfplotstablegetelem{0}{T}\of\BOneData
\edef\BOneT{\pgfplotsretval}
\pgfplotstablegetelem{0}{Na}\of\BOneData
\edef\BOneNa{\pgfplotsretval}
\pgfplotstablegetelem{0}{Ng}\of\BOneData
\edef\BOneNg{\pgfplotsretval}
\pgfplotstablegetelem{0}{Nmc}\of\BOneData
\edef\BOneNmc{\pgfplotsretval}
\pgfplotstablegetelem{0}{ROI_min}\of\BOneData
\edef\BOneROIMin{\pgfplotsretval}
\pgfplotstablegetelem{0}{ROI_max}\of\BOneData
\edef\BOneROIMax{\pgfplotsretval}
\pgfplotstablegetelem{0}{alpha_abs}\of\BOneData
\edef\BOneAlphaAbs{\pgfplotsretval}
\pgfplotstablegetelem{0}{alpha_phase_rad}\of\BOneData
\edef\BOneAlphaPhase{\pgfplotsretval}
\pgfplotstablegetelem{0}{P_short}\of\BOneData
\edef\BOnePShort{\pgfplotsretval}
\pgfplotstablegetelem{0}{P_long}\of\BOneData
\edef\BOnePLong{\pgfplotsretval}

\ifdefined\BOnePerformanceCSV
\else
    \FinalSimulationResolveDataFile{B1_roi_performance.csv}{\BOnePerformanceCSV}
\fi
\ifdefined\BOneBeamgainCSV
\else
    \FinalSimulationResolveDataFile{B1_total_beamgain.csv}{\BOneBeamgainCSV}
\fi
\pgfplotstableread[col sep=comma]{\BOnePerformanceCSV}\BOneData
\pgfplotstableread[col sep=comma]{\BOneBeamgainCSV}\BOneGainData
\pgfplotstablegetcolsof{\BOneData}
\edef\BOneColumnCount{\pgfplotsretval}
\pgfplotstablegetelem{0}{T}\of\BOneData
\edef\BOneT{\pgfplotsretval}
\pgfplotstablegetelem{0}{Na}\of\BOneData
\edef\BOneNa{\pgfplotsretval}
\pgfplotstablegetelem{0}{Ng}\of\BOneData
\edef\BOneNg{\pgfplotsretval}
\pgfplotstablegetelem{0}{ROI_min}\of\BOneData
\edef\BOneROIMin{\pgfplotsretval}
\pgfplotstablegetelem{0}{ROI_max}\of\BOneData
\edef\BOneROIMax{\pgfplotsretval}
\pgfplotstablegetelem{0}{alpha_abs}\of\BOneData
\edef\BOneAlphaAbs{\pgfplotsretval}
\pgfplotstablegetelem{0}{P_short}\of\BOneData
\edef\BOnePShort{\pgfplotsretval}
\pgfplotstablegetelem{0}{P_long}\of\BOneData
\edef\BOnePLong{\pgfplotsretval}

\pgfplotsset{
    cp/.style={blue,line width=1pt,solid,
    mark size=1.7pt,
        mark repeat=4,mark phase=1,
        mark options={solid,fill=blue}},
    bose chowla/.style={red,line width=1pt,solid,mark=square*,mark size=1.7pt,
        mark repeat=4,mark phase=1,
        mark options={solid,fill=red}},
    singer/.style={green!55!black,line width=1pt,solid,mark=*,mark size=1.7pt,
        mark repeat=4,mark phase=1,
        mark options={solid,fill=green!55!black}},
    rm/.style={orange!85!black,line width=1pt,solid,mark=triangle*,mark size=1.8pt,
        mark repeat=4,mark phase=1,
        mark options={solid,fill=orange!85!black}},
    rvq/.style={black,line width=1pt,solid,mark=o,
        mark repeat=4,mark phase=2,
        mark size=2.4pt,
        mark options={solid,fill=white,draw=black,line width=0.85pt}},
    random phase/.style={violet!90!black,line width=1pt,dashed},
    random antenna selection/.style={cyan!80!blue,line width=1pt,
        mark repeat=4,mark phase=4,
        solid,mark=asterisk,mark size=3.0pt,
        mark options={solid,draw=cyan!80!blue,line width=1.0pt}}
}

\begin{figure}[h]
    \centering
    \begin{minipage}{\columnwidth}
    \centering
    \begin{tikzpicture}
    \begin{axis}[
        width=\columnwidth,height=4cm,
        xlabel={Spatial frequency $f$},ylabel={$\|\W\a(f)\|_2^2$},
        xmin=-1,xmax=1,ymin=0,ymax=70,
        title={(a) Total beamforming gain},
        title style={font=\small},
        tick label style={font=\scriptsize},
        label style={font=\scriptsize},
        set layers=axis on top,
        legend to name=CBSBeamGainLegend,
        legend columns=4,
        legend style={
                draw=white!60!black,
                fill=white,
                font=\scriptsize,
                /tikz/every even column/.append style={column sep=0.12cm}
            }
    ]
        \begin{pgfonlayer}{axis background}
            \path[fill=blue!8,draw=none]
                (axis cs:\BOneROIMin,0) rectangle
                (axis cs:\BOneROIMax,70);
        \end{pgfonlayer}
        \node[font=\scriptsize,anchor=north,fill=white,fill opacity=0.75,
            text opacity=1,inner sep=1.5pt]
            at (axis cs:0,58) {ROI};
        
        \addplot[cp, style={dash dot, mark=none}] table[x=f,y=Gain_CBS_P3] {\BOneGainData};
        \addlegendentry{CBS, $P=\BOnePShort$}
        
        \addplot[cp, style={mark=none}] table[x=f,y=Gain_CBS_P56] {\BOneGainData};
        \addlegendentry{CBS, $P=\BOnePLong$}
        
        \addplot[bose chowla, style={mark repeat=50, mark phase = 1}] table[x=f,y=Gain_BC] {\BOneGainData};
        \addlegendentry{Bose--Chowla}
        

        \addplot[rvq, style={mark repeat=50, mark phase = 30}] table[x=f,y=Gain_RVQ] {\BOneGainData};
        \addlegendentry{RVQ}

        \addplot[rm, style={mark repeat=50, mark phase = 15}] table[x=f,y=Gain_RM] {\BOneGainData};
        \addlegendentry{RM (subspace)}
        
        
        \addplot[random phase, style={solid, line width=0.5pt}] table[x=f,y=Gain_RandomPhase] {\BOneGainData};
        \addlegendentry{Random phase}
        
        \addplot[random antenna selection, style={mark repeat=50, mark phase = 45}] table[x=f,y=Gain_RandomAntennaSelection] {\BOneGainData};
        \addlegendentry{Random AS}
        
        \addplot[black,dashed]
            coordinates {(\BOneROIMin,0) (\BOneROIMin,70)};
        \addplot[black,dashed]
            coordinates {(\BOneROIMax,0) (\BOneROIMax,70)};
    \end{axis}
    \end{tikzpicture}
\\
    \begin{tikzpicture}
    \begin{axis}[
        width=0.98\columnwidth,height=4cm,
        xlabel={SNR (dB)},ylabel={$\PP(|\widehat f_{\rm md}-f^\star|>\epsilon)$
        },
        xmin=-10,xmax=10,ymin=1e-2,ymax=1,
        grid=both,
        major grid style={line width=0.25pt,draw=gray!55},
        minor grid style={line width=0.18pt,draw=gray!35},
        ymode=log,unbounded coords=jump,
        title={(b) Detection error vs. SNR},
        title style={font=\small},
        tick label style={font=\scriptsize},
            label style={font=\scriptsize},
        legend to name=CBSResultsLegend,
         legend columns=4,
            legend style={
                draw=white!60!black,
                at={(0.02,0.02)},anchor=south west,
                fill=white,
                font=\scriptsize,
                /tikz/every even column/.append style={column sep=0.12cm}
            }
    ]
        \addplot[cp, style={dash dot}] table[x=SNR,y=Plot_CBS_P3] {\BOneData};
        \addlegendentry{CBS, $P=\BOnePShort$}
        
        \addplot[cp, style={mark phase=2}] table[x=SNR,y=Plot_CBS_P56] {\BOneData};
        \addlegendentry{CBS, $P=\BOnePLong$}
        
        \addplot[bose chowla] table[x=SNR,y=Plot_BC] {\BOneData};
        \addlegendentry{Bose--Chowla}
        

        \addplot[rvq] table[x=SNR,y=Plot_RVQ] {\BOneData};
        \addlegendentry{RVQ}

        \addplot[rm]table[x=SNR,y=Plot_RM] {\BOneData};
        \addlegendentry{RM(subspace)}
        
        
        \addplot[random phase] table[x=SNR,y=Plot_RandomPhase] {\BOneData};
        \addlegendentry{Random phase}
        
        \addplot[random antenna selection] table[x=SNR,y=Plot_RandomAntennaSelection] {\BOneData};
        \addlegendentry{Random AS}
    \end{axis}
    \end{tikzpicture}
    \par\smallskip
    \pgfplotslegendfromname{CBSResultsLegend}
    \end{minipage}
    \caption{ Off-grid DoA $f^\star \sim \mathrm{Uniform}([\BOneROIMin,\BOneROIMax])$ with $T=\BOneT$, $N_a=N_g=\BOneNg$. (a)Total beamforming gain for spatial frequency on grid $\cG$ for the ROI $[\BOneROIMin,\BOneROIMax]$. CBS 
    allows a high, controlled, beamgain within the ROI.  (b) Detection error probability ($\epsilon=1/N_g$) vs SNR. 
    The larger total beamforming gain obtained by CBS leads to a better performance in detection error probability.
    }
    \label{fig:cbs_results}
\end{figure}
\FinalSimulationPlotEnd

\subsection{Spatially Selective Case with Prior DoA Knowledge}\label{sec:sim_spatially_selective} 

Next, we consider the same setting ($T=\BOneT$, $N_a=\BOneNa$, $N_g=\BOneNg$) with the exception that the off-grid true DoA $f^\star$ is known to lie within a region of interest (ROI) $[\BOneROIMin,\BOneROIMax]$. Hence, we can use this prior knowledge to improve sensing performance by beamforming in the ROI. Convolutional Beamspace (CBS) enables flexibly choosing the beamformer without affecting the minimum subspace distance of the induced BGCs. We consider CBS with shifts given by the Bose--Chowla construction with minimum aperture. 

\cref{fig:cbs_results}(a) shows the total beamforming gain for CBS with a length $P=\BOnePShort$ sinc filter $\w = [1,1,1]$, and $P=\BOnePLong$ filter designed using the Parks--McClellan algorithm, along with total beamgain of Bose--Chowla and RM(subspace) BGCs. For comparison, we also show the total beamgain of a realization 
of the three 
random  beamformer baselines. 
By design,  CBS allows flexibility in focusing the beamgain within the ROI while maintaining the minimum subspace distance of the 
Bose--Chowla construction.
\cref{fig:cbs_results}(b) shows the improvement in off-grid probability of detection error (i.e. $\PP(|\hat{f}_{\rm md}-f^\star|>\epsilon)$ where $\epsilon=1/N_g$) performance due to the increase in beamforming gain over the ROI. Specifically, the difference in sensing performance of Bose--Chowla CBS with filter lengths $P=\BOnePShort$ and $P=\BOnePLong$ is due to the different beamforming gains of the two cases, since both employ the same Bose--Chowla shifts and thereby have the same $d^{(s)}_{\min}$.



\subsection{Non-Left-Invertible Array Manifolds}
\ifdefined\FinalSimulationPlotsMaster
    \let\FinalSimulationPlotEnd\relax
\else
    \documentclass[class=IEEEtran,crop=false]{standalone}
    \IfFileExists{standalone_plot_preamble.inc}
        {}
        {}
    \begin{document}
    \def\FinalSimulationPlotEnd{\end{document}}
\fi

\def\AFiveNgTSeventeenResolveDataPath#1{%
    \IfFileExists{data_final/#1}{%
        \def\AFiveNgTSeventeenData{data_final/#1}%
    }{%
        \IfFileExists{../data_final/#1}{%
            \def\AFiveNgTSeventeenData{../data_final/#1}%
        }{%
            \def\AFiveNgTSeventeenData{}%
        }%
    }%
}
\AFiveNgTSeventeenResolveDataPath{fig_A5_linePacking_NgGreaterNa_diagnostic_T17.csv}
\ifx\AFiveNgTSeventeenData\empty
    \PackageError{fig_A5_linePacking_NgGreaterNa_diagnostic_T17}{
        Required T=17 simulation CSV was not found}{
        Compile from JSAIT_2025_paper or final_simulation_plots.}
\fi

\pgfplotstableread[col sep=comma]
    {\AFiveNgTSeventeenData}\AFiveNgTSeventeenTable
\pgfplotstablegetelem{0}{T}\of\AFiveNgTSeventeenTable
\edef\AFiveNgTSeventeenT{\pgfplotsretval}
\pgfplotstablegetelem{0}{Nmc}\of\AFiveNgTSeventeenTable
\edef\AFiveNgTSeventeenNmc{\pgfplotsretval}
\pgfplotstablegetelem{0}{N_random_codebooks}\of\AFiveNgTSeventeenTable
\edef\AFiveNgTSeventeenNRandom{\pgfplotsretval}
\pgfplotstablegetelem{0}{Ng_A_Na288_Ng288}\of\AFiveNgTSeventeenTable
\edef\AFiveNgTSeventeenNaturalNg{\pgfplotsretval}
\pgfplotstablegetelem{0}{Na_A_Na226_Ng288}\of\AFiveNgTSeventeenTable
\edef\AFiveNgTSeventeenBCNa{\pgfplotsretval}
\pgfplotstablegetelem{0}{Na_A_Na144_Ng288}\of\AFiveNgTSeventeenTable
\edef\AFiveNgTSeventeenHalfNa{\pgfplotsretval}
\pgfplotstablegetelem{0}{Ng_B_Na288_Ng576}\of\AFiveNgTSeventeenTable
\edef\AFiveNgTSeventeenDoubleNg{\pgfplotsretval}
\pgfplotstablegetelem{0}{Ng_B_Na288_Ng864}\of\AFiveNgTSeventeenTable
\edef\AFiveNgTSeventeenTripleNg{\pgfplotsretval}
\pgfplotstablegetelem{0}{bc_minimum_aperture_span}\of\AFiveNgTSeventeenTable
\edef\AFiveNgTSeventeenBCSpan{\pgfplotsretval}
\pgfplotstablegetelem{0}{error_epsilon_panel_C}\of\AFiveNgTSeventeenTable
\edef\AFiveNgTSeventeenDetectionEpsilon{\pgfplotsretval}

\pgfplotsset{
    cp/.style={blue,line width=1pt,solid,mark=diamond*,mark size=1.7pt,
        mark repeat=4,mark phase=1,
        mark options={solid,fill=blue}},
    bose chowla/.style={red,line width=1pt,solid,mark=square*,mark size=1.7pt,
        mark repeat=4,mark phase=1,
        mark options={solid,fill=red}},
    singer/.style={green!55!black,line width=1pt,solid,mark=*,mark size=1.7pt,
        mark repeat=4,mark phase=1,
        mark options={solid,fill=green!55!black}},
    rm/.style={orange!85!black,line width=1pt,solid,mark=triangle*,mark size=1.8pt,
        mark repeat=4,mark phase=1,
        mark options={solid,fill=orange!85!black}},
    rvq/.style={black,line width=1pt,solid,mark=o,
        mark repeat=4,mark phase=2,
        mark size=2.4pt,
        mark options={solid,fill=white,draw=black,line width=0.85pt}},
    random phase/.style={violet!90!black,line width=1pt,dashed},
    random antenna selection/.style={cyan!80!blue,line width=1pt,
        mark repeat=4,mark phase=4,
        solid,mark=asterisk,mark size=3.0pt,
        mark options={solid,draw=cyan!80!blue,line width=1.0pt}}
}

\begin{figure}[t]
\centering

\begin{tikzpicture}
\begin{axis}[
    width=\columnwidth,
    height=4.5cm,
    xlabel={SNR (dB)},
    ylabel={$\PP(|\widehat f_{\rm md}-f^\star|>\epsilon)$},
    xmin=-5,xmax=10,
    ymin=1e-4,ymax=1,
    ymode=log,
    ytick={1e-4,1e-3,1e-2,1e-1,1},
    minor y tick num=9,
    grid=both,
    major grid style={line width=0.25pt,draw=gray!55},
    minor grid style={line width=0.18pt,draw=gray!35},
    unbounded coords=jump,
    tick label style={font=\scriptsize},
    label style={font=\scriptsize},
    title style={font=\small},
    legend columns=4,
    legend style={
        at={(0.5,-0.34)},anchor=north,
        draw=white!60!black,fill=white,
        font=\tiny,cells={anchor=west},
        legend image post style={xscale=0.75},
        /tikz/every even column/.append style={column sep=0.04cm}
    }
]
     \addlegendimage{empty legend}
    \addlegendentry{$N_g=\AFiveNgTSeventeenNaturalNg$:}
    
    

 \addplot[bose chowla, mark phase=1]table[x=SNR,y=Plot_BC_C_Na288_Ng288] {\AFiveNgTSeventeenTable};
    \addlegendentry{BC}

    \addplot[cp, mark phase=2]table[x=SNR,y=Plot_CP_C_Na288_Ng288]{\AFiveNgTSeventeenTable};
    \addlegendentry{CP}
     
     \addplot[rvq, mark phase=3] table[x=SNR,y=Plot_RVQ_C_Na288_Ng288]{\AFiveNgTSeventeenTable};
    \addlegendentry{RVQ}

    \addlegendimage{empty legend}
    \addlegendentry{$N_g=\AFiveNgTSeventeenDoubleNg$:}

    \addplot[bose chowla, style={dashed, mark phase=1}] table[x=SNR,y=Plot_BC_C_Na288_Ng576] {\AFiveNgTSeventeenTable};
    \addlegendentry{BC}
    
    \addplot[cp, style={dashed, mark phase=2}] table[x=SNR,y=Plot_CP_C_Na288_Ng576] {\AFiveNgTSeventeenTable};
     \addlegendentry{CP}
   
    \addplot[rvq, style={dashed, mark phase=3}] table[x=SNR,y=Plot_RVQ_C_Na288_Ng576] {\AFiveNgTSeventeenTable};
    \addlegendentry{RVQ}

    \addlegendimage{empty legend}
    \addlegendentry{$N_g=\AFiveNgTSeventeenTripleNg$}

        \addplot[bose chowla, style={dotted, mark phase=1}] table[x=SNR,y=Plot_BC_C_Na288_Ng864] {\AFiveNgTSeventeenTable};
        \addlegendentry{BC}

        \addplot[cp, style={dotted, mark phase=2}] table[x=SNR,y=Plot_CP_C_Na288_Ng864] {\AFiveNgTSeventeenTable};
        \addlegendentry{CP}
        
        \addplot[rvq, style={dotted, mark phase=3}] table[x=SNR,y=Plot_RVQ_C_Na288_Ng864] {\AFiveNgTSeventeenTable};
        \addlegendentry{RVQ}
    
\end{axis}
\end{tikzpicture}

\caption{Effect of increasing 
grid size $N_g$ for $T=\AFiveNgTSeventeenT$ and fixed $N_a=T^2-1=\AFiveNgTSeventeenNaturalNg$ 
. True DoA $f^\star$ is chosen uniformly at random from grid $\cG$. 
Even for $N_g>N_a$, the estimates of Bose--Chowla construction are located within an  $\epsilon=1/N_a$ neighborhood of  $f^\star$ 
with high probability. 
}
\label{fig:line_packing_Ng_greater_Na_diagnostic}
\end{figure}
\FinalSimulationPlotEnd

\ifdefined\FinalSimulationPlotsMaster
    \let\FinalSimulationPlotEnd\relax
\else
    \documentclass[class=IEEEtran,crop=false]{standalone}
    \IfFileExists{standalone_plot_preamble.inc}
        {}
        {}
    \begin{document}
    \def\FinalSimulationPlotEnd{\end{document}}
\fi
\IfFileExists{standalone_data_resolver.inc}
    {}
    {}
\FinalSimulationResolveDataFile{fig_C1_alpha_estimation_with_CP_settings.csv}
    {\COneCPSettingsCSV}
\FinalSimulationResolveDataFile{fig_C1_alpha_estimation_with_CP_panel_a.csv}
    {\COneCPPanelACSV}
\FinalSimulationResolveDataFile{fig_C1_alpha_estimation_with_CP_panel_b.csv}
    {\COneCPPanelBCSV}
\pgfplotstableread[col sep=comma]{\COneCPSettingsCSV}\COneCPSettings
\pgfplotstablegetelem{0}{T_doa}\of\COneCPSettings
\edef\COneCPTDoA{\pgfplotsretval}
\pgfplotstablegetelem{0}{Na}\of\COneCPSettings
\edef\COneCPNa{\pgfplotsretval}
\pgfplotstablegetelem{0}{Ng}\of\COneCPSettings
\edef\COneCPNg{\pgfplotsretval}
\pgfplotstablegetelem{0}{Nmc}\of\COneCPSettings
\edef\COneCPNmc{\pgfplotsretval}
\pgfplotstablegetelem{0}{T_est_1}\of\COneCPSettings
\edef\COneCPTEstOne{\pgfplotsretval}
\pgfplotstablegetelem{0}{T_total_1}\of\COneCPSettings
\edef\COneCPTotalOne{\pgfplotsretval}
\pgfplotstablegetelem{0}{T_est_2}\of\COneCPSettings
\edef\COneCPTEstTwo{\pgfplotsretval}
\pgfplotstablegetelem{0}{T_total_2}\of\COneCPSettings
\edef\COneCPTotalTwo{\pgfplotsretval}
\pgfplotstablegetelem{0}{FixedAlphaMagnitude}\of\COneCPSettings
\edef\COneCPAlphaAbs{\pgfplotsretval}
\pgfplotstablegetelem{0}{GaussianAlphaVariance}\of\COneCPSettings
\edef\COneCPAlphaVar{\pgfplotsretval}

\pgfplotsset{
    cOneCPPlot/.style={line width=1.2pt},
    cOneCPLongDash/.style={dash pattern=on 7pt off 3pt}
}

\begin{figure*}[t]
\centering
\begin{tikzpicture}
\begin{axis}[
    width=\columnwidth,height=4cm,
    title={(a) Deterministic unknown setting},
    xlabel={SNR (dB)},
    ylabel={$\PP(\hat{f}_{\rm md} \neq f^\star)$},
    xmin=-10,xmax=10,ymin=8e-5,ymax=1,
    ymode=log,unbounded coords=jump,
    ytick={1e-4,1e-3,1e-2,1e-1,1},
    minor y tick num=9,
    grid=both,
        major grid style={line width=0.25pt,draw=gray!55},
        minor grid style={line width=0.18pt,draw=gray!35},
    tick label style={font=\scriptsize},
    label style={font=\scriptsize},title style={font=\small},
    legend to name=COneCPAlphaLegend,
    legend columns=4,
    legend style={draw=white!60!black,fill=white,font=\tiny,
        /tikz/every even column/.append style={column sep=0.08cm}}
]
    
    \addlegendimage{empty legend}
    \addlegendentry{RM(channel code):}

    \addplot[black,very thick,solid,mark=star,mark size=2.5pt,
        mark options={solid}]
        table[col sep=comma,x=SNR_dB,
            y expr={max(\thisrow{Pe_RM_Tdoa16_known_alpha},1e-6)}]
        {\COneCPPanelACSV};
    \addlegendentry{
    $T_{\text{doa}}=\COneCPTDoA,T_{\text{est}}=0$
    }
    
    \addplot[blue,cOneCPPlot,dotted]
        table[col sep=comma,x=SNR_dB,
            y expr={max(\thisrow{Pe_RM_Tdoa16_Test1},1e-6)}]
        {\COneCPPanelACSV};
    \addlegendentry{$T_{\text{doa}}=\COneCPTDoA,T_{\text{est}}=\COneCPTEstOne$}

    \addplot[blue,cOneCPPlot,cOneCPLongDash]
        table[col sep=comma,x=SNR_dB,
            y expr={max(\thisrow{Pe_RM_Tdoa16_Test7},1e-6)}]
        {\COneCPPanelACSV};
    \addlegendentry{
    $T_{\text{doa}}=\COneCPTDoA,T_{\text{est}}=\COneCPTEstTwo$
    }

    \addlegendimage{empty legend}
    \addlegendentry{CP (subspace code):}

    \addplot[red,cOneCPPlot,solid,mark=star,mark size=2.5pt,
        mark options={solid}]
        table[col sep=comma,x=SNR_dB,
            y expr={max(\thisrow{Pe_CP_T16},1e-6)}]
        {\COneCPPanelACSV};
    \addlegendentry{$T_{\text{total}}=16$}
    
    \addplot[red,cOneCPPlot,dotted]
        table[col sep=comma,x=SNR_dB,
            y expr={max(\thisrow{Pe_CP_T17},1e-6)}]
        {\COneCPPanelACSV};
    \addlegendentry{
    $T_{\text{total}}=\COneCPTotalOne$}
    
    \addplot[red,cOneCPPlot,cOneCPLongDash]
        table[col sep=comma,x=SNR_dB,
            y expr={max(\thisrow{Pe_CP_T23},1e-6)}]
        {\COneCPPanelACSV};
    \addlegendentry{$T_{\text{total}}=\COneCPTotalTwo$}

\end{axis}
\end{tikzpicture}
\begin{tikzpicture}
\begin{axis}[
    width=\columnwidth,height=4cm,
    title={(b) Stochastic unknown setting},
    xlabel={SNR (dB)},
    ylabel={$\PP(\hat{f}_{\rm md} \neq f^\star)$},
    xmin=-10,xmax=10,ymin=5e-2,ymax=1,
    ymode=log,unbounded coords=jump,
    ytick={1e-2,1e-1,1},
    minor y tick num=9,
    grid=both,
        major grid style={line width=0.25pt,draw=gray!55},
        minor grid style={line width=0.18pt,draw=gray!35},
    tick label style={font=\scriptsize},
    label style={font=\scriptsize},title style={font=\small}
]
   
     \addplot[blue,cOneCPPlot,cOneCPLongDash]
        table[col sep=comma,x=SNR_dB,y=Pe_RM_Tdoa16_Test7]
        {\COneCPPanelBCSV};
      \addplot[blue,cOneCPPlot,dotted]
        table[col sep=comma,x=SNR_dB,y=Pe_RM_Tdoa16_Test1]
        {\COneCPPanelBCSV};
        
    \addplot[red,cOneCPPlot,solid,mark=star,mark size=2.5pt,
        mark options={solid}]
        table[col sep=comma,x=SNR_dB,y=Pe_CP_T16]
        {\COneCPPanelBCSV};
    \addplot[red,cOneCPPlot,dotted]
        table[col sep=comma,x=SNR_dB,y=Pe_CP_T17]
        {\COneCPPanelBCSV};

    \addplot[red,cOneCPPlot,cOneCPLongDash]
        table[col sep=comma,x=SNR_dB,y=Pe_CP_T23]
        {\COneCPPanelBCSV};

     \addplot[black,very thick,solid,mark=star,mark size=2.5pt,
        mark options={solid}]
        table[col sep=comma,x=SNR_dB,y=Pe_RM_Tdoa16_known_alpha]
        {\COneCPPanelBCSV};
    
\end{axis}
\end{tikzpicture}
\par\smallskip
\ref{COneCPAlphaLegend}
\caption{
Comparison of channel- 
and 
subspace-coding approaches for on-grid DoA sensing 
($N_a\!=\!N_g\!=\!2048$). 
RM-based channel coding uses $T_{\rm est}$ measurements to estimate $\alpha$ and $T_{\rm doa}$ for DoA sensing. CP-based subspace coding uses all $T_{\rm total}\!=\!T_{\rm est}\!+\!T_{\rm doa}$ measurements for DoA sensing. 
Allocating all measurements to DoA sensing 
reduces 
error probability for both (a) deterministic and (b) stochastic $(\alpha,f)$.
}
\label{fig:t_est_sweep_with_cp}
\end{figure*}
\FinalSimulationPlotEnd

When $N_g>N_a$, the array manifold is non-left-invertible, so an arbitrary target response $\B_G$ is generally not 
realizable. However, as established in \cref{lem:realizability_NgGreaterThanNa},  
Bose--Chowla and Singer constructions  take the form $\B_{\sf X}=\S_{\Omega_{\sf X}}\A_{\UU}$ and therefore remain realizable in this regime by antenna selection. The same row selections extend to denser grids using the corresponding $N_g\!\times\!N_g$ DFT matrix (cf. \cref{sec:BGC_Ng_greaterthan_Na}), although their minimum distance 
may no longer be near-optimal.


\Cref{fig:line_packing_Ng_greater_Na_diagnostic} studies their performance for \(T=17\) and fixed \(N_a=288 =T^2-1\), with \(N_g\in\{288,576,864\}\).
Similar to 
{\cite{Javidi2019ie8n},}
we consider an $\epsilon$-error probability 
given by \(\PP(|\widehat f_{\rm md}-f^\star|>\epsilon)\), with a tolerance level of \(\epsilon=1/N_a\). 
This corresponds to the natural resolution (inversely proportional to the aperture $N_a$) of the underlying 
ULA, independent of the grid size (and applicable even in a gridless setting). It is striking that the (extended) Bose--Chowla construction retains favorable performance under this criterion even when \(N_g>N_a\).  
This implies that the erroneous DoA estimates produced by Bose--Chowla remain clustered within an $\epsilon = \frac{1}{N_a}$ neighborhood around the true DoA as the grid size grows, instead of being distributed farther away from it. This 
suggests 
an interesting direction for future 
work 
on 
$\epsilon-$recovery performance with increasing grid size, 
in the spirit of rate-distortion theory. 


In contrast, the CP and RVQ target response is not guaranteed to satisfy row-space condition \eqref{eqn:SuffProj} when \(N_g>N_a\), so \(\W_{\rm CP}=\B_{\rm CP}\A_{\UU}^{\dagger}\) and \(\W_{\rm RVQ}=\B_{\rm RVQ}\A_{\UU}^{\dagger}\) generally realize only the projection of corresponding target responses onto the row space of \(\A_{\UU}\). Case \(N_g=288\) uses a subset of \(\mathrm{CP}(2,17)\) with codebook size \(17^2=289\) and \(d_{\min}^{(s)}\geq1 - \frac{1}{17}\) (similar to Bose--Chowla), while the denser cases require subset selection from higher order CP codes $\mathrm{CP}(3,17)$ (with \(17^3=4913\) codewords) so that a minimum distance of \(d_{\min}^{(s)}\geq1 - \frac{4}{17}\) is guaranteed \cite{Mahdavifar2022}. 
The combination of projection distortion and the reduced minimum distance with increasing $N_g$ contribute to the sharper degradation of CP.

\subsection{Should Channel Coefficient $\alpha$ be Estimated?}
\label{subsec:alpha_estimation}

\cref{sec:relation_ML_MAP_minimumdist} showed that the MD estimator of DoA $f$ for unknown $\alpha$ (which is also the ML estimator) does \emph{not} require estimating $\alpha$. However, one may ask whether it is still beneficial to first estimate $\alpha$ and then use it for estimating $f$. 
We compare two ways of handling the unknown channel coefficient. 
``\emph{CP(subspace)}'' 
employs a CP-code-based BGC
and applies the MD 
DoA decoder as given in \eqref{eq:min_dist_subspace_decoder}, 
thereby avoiding explicit estimation of $\alpha$. On the other hand, the coherent channel-coding method of \cite{Yu2025ongrid} utilizes a BPSK-mapped $\mathrm{RM}$ codebook 
and applies the minimum-$\ell_2$-distance decoder 
\eqref{eq:ML_estimator_known_alpha}. We refer to this method as ``\emph{RM(known $\alpha$)}" when the decoder knows $\alpha$ exactly, and as ``\emph{RM(unknown $\alpha$)}" when $\alpha$ is first estimated using additional $T_{\rm est}$ measurements and then substituted into the coherent decoder \eqref{eq:ML_estimator_known_alpha}. Following \cite{Yu2025ongrid}, we use a beamformer with a constant beampattern given by $\widetilde{\w} = (\A^{\dagger}_{\UU})^H\mathbf{1}_{N_g\times 1}$ to collect measurements $\tilde{y}_t = \alpha + z_t,\ t=1,\dots,T_{\rm est}$ where $z_t = \widetilde{\w}^H\n_t$ for estimating $\alpha$.



We consider an on-grid true DoA with $N_a=N_g=2048$ to ensure exact realizability of both RM and CP code-based beamformers.
\footnote{Note that in this $(T,N_g)$ regime, optimal Bose Chowla or Singer codes do not exist, but CP codes with good subspace distance can be designed \cite{Mahdavifar2022}.} 
All RM variants use $T_{\rm doa}=16$ coded measurements for DoA estimation. \emph{RM(known $\alpha$)} requires no additional measurements (i.e. $T_{\rm est}=0$), whereas \emph{RM(unknown $\alpha$)} uses $T_{\rm est}\in\{1,7\}$ additional measurements to estimate $\alpha$, giving total sensing budgets $T_{\rm total}=T_{\rm doa} + T_{\rm est}$ of $17$ and $23$, respectively. For a fair comparison, \emph{CP(subspace)} uses the same total budgets $T_{\rm total} \in\{16,17,23\}$, but devotes every measurement to subspace-coded sensing.

In \cref{fig:t_est_sweep_with_cp}(a), the DoA and $\alpha$ are deterministic unknowns. The true DoA is set to the gridpoint closest to $f=0$. 
\emph{RM(unknown $\alpha$)} uses the ML estimate of $\alpha$ given by $\hat{\alpha}_{\rm ML} = \frac{1}{T_{\rm est}} \sum_{t=1}^{T_{\rm est}}\widetilde{y}_t$. 
For $T_{\rm total}=16$ budget, \emph{RM(known $\alpha$)} outperforms \emph{CP(subspace)}, confirming the benefit of exact channel knowledge. When $\alpha$ is unknown, however, \emph{CP(subspace)} outperforms \emph{RM(unknown $\alpha$)} for an equal total measurement budget. 
Although increasing $T_{\rm est}$ moves coherent RM decoding toward its known-$\alpha$ benchmark, using these measurements for subspace coding improves the DoA estimation performance more. 
\Cref{fig:t_est_sweep_with_cp}(b) 
reaches the same conclusion  for a uniformly random on-grid DoA and $\alpha\sim\mathcal{CN}(0,1)$, where \emph{RM(unknown $\alpha$)} uses the MMSE estimate of $\alpha$ given by $\hat{\alpha}_{\rm MMSE} = \frac{\sigma_{s}^2}{T_{\rm est}\sigma_{s}^2 + \sigma_n^2}\sum_{t=1}^{T_{\rm est}} \widetilde{y}_t$. 
The plots decay slower due to deep fades, but CP(subspace) again gives the lower error probability at fixed total measurement budget. 

Hence, the subspace coding approach 
outperforms coherent decoding that estimates $\alpha$ using different sample budgets. This aligns with the fact 
the employed BGCs have guaranteed large subspace distance \cite{Mahdavifar2022} (important for noncoherent decoding), 
whereas BPSK-mapped Reed–Muller codes are not guaranteed to be optimal in Euclidean distance (important for coherent decoding). We also note that when good BGCs are unavailable, the performance gap may narrow, and coherent strategies that estimate $\alpha$ may become more competitive if they are optimized to ensure favorable Euclidean distance geometry.


\section{Conclusions}\label{sec:conclusions}
This work developed beamspace Grassmannian codes (BGCs) for nonadaptive mmWave channel sensing with unknown complex path gain, where DoA estimation naturally becomes a subspace decoding problem. 
We showed that sensing error is jointly governed by beamforming gain and the minimum subspace distance induced by the beamformer-array pair. We identified two array-dependent regimes, determined by the rank of the array-manifold matrix: one in which arbitrary Grassmannian line packings are realizable, and another in which realizability is constrained by compatibility with the array manifold. We further studied two settings depending on available information about the DoA. Without prior DoA information, we constructed spatially isotropic BGCs with large minimum subspace distance and characterized the subspace distance of BPSK-mapped binary linear codes, clarifying the role of Hamming distance in sensing. With prior angular information, sparse-array and convolutional-beamspace constructions jointly provide directional gain and favorable subspace distance. Our findings also motivate several new directions, including array-manifold-aware beamformer design in partially realizable regimes, pruning of sensing codebooks, and modulation design. Progress along these directions can enable resource-efficient sensing with rigorous guarantees in future large-scale MIMO systems.

\bibliographystyle{ieeetr}
\bibliography{references,references_external,references_v2}


\appendices\crefalias{section}{appendix}
\renewcommand{\thesectiondis}[2]{\Alph{section}:}
\clearpage
\pagenumbering{arabic}


\section{Proof of \Cref{thm:ML_decoder_error_exp}}\label{apdx:error_exponent_proof}
{
Let \(\b_n=\W\a_{\SS}(f_n)\) and \(\widehat{\b}_n=\b_n/\|\b_n\|_2\). Since \(f^\star=f_k\), the received vector can be written as
    \begin{align}
        \y
        =
        \alpha \b_k+\z
        =
        \alpha \|\b_k\|_2 \widehat{\b}_k+\z,    
    \end{align}
    where $\z\sim\mathcal{CN}(0,\sigma^2\I_T)$ due to design choice \labelcref{d:unitnorm}.
    The MD decoder makes an error if for 
    some 
    \(f_\ell\), \(\ell\neq k\), we have $ |\y^H\widehat{\b}_\ell|^2 > |\y^H\widehat{\b}_k|^2$. Hence, the probability of error can be upper bounded using the union bound as
    \begin{align}
        &\PP(\widehat f_{\rm md}(\y)\neq f_k \mid f^\star=f_k) \nonumber \\
        &=
        \PP\!\left(
        \bigcup_{\ell\neq k}
        \left\{
        |\y^H\widehat{\b}_\ell|^2
        >
        |\y^H\widehat{\b}_k|^2
        \right\}
        \,\middle|\, f^\star=f_k
        \right) \nonumber \\
        &\leq
        \sum_{\ell\neq k}
        \PP\!\left(
        |\y^H\widehat{\b}_\ell|^2
        >
        |\y^H\widehat{\b}_k|^2
        \,\middle|\, f^\star=f_k
        \right). \label{eq:Pe_union_bound}
    \end{align}
    It remains to evaluate the pairwise term. For each pair \(k\neq \ell\),
    the event $\{|\y^H\widehat{\b}_\ell|^2 > |\y^H\widehat{\b}_k|^2\}$
    is the standard single-source deterministic ML pairwise error event where $f^\star$ and $\alpha$ are deterministic unknowns \cite{Athley2005}\cite[Sec 9.3.1]{athley2003space}. Denote the subspace distance between $\Span{\widehat{\b}_{\ell}}$ and $\Span{\widehat{\b}_{k}}$ by 
    \[d_{k,\ell}^{(s)}\triangleq 1-|\widehat{\b}_k^H\widehat{\b}_\ell|^2.\]
    Therefore, applying the closed-form expression for pairwise error probability (PEP) of single-source deterministic unknown case with effective SNR $= \frac{|\alpha|^2\|\b_k\|^2_2}{\sigma^2}$ yields \cite[Eq. (9.38)]{athley2003space}
    \begin{align}
        P_{k\to l} &\triangleq \PP\!\left(
        |\y^H\widehat{\b}_\ell|^2
        >
        |\y^H\widehat{\b}_k|^2
        \,\middle|\, f^\star=f_k
        \right) \nonumber \\
        &= Q_1(u_{k,\ell},v_{k,\ell}) - \frac{1}{2}\exp\left(-\frac{|\alpha|^2\|\b_k\|_2^2}{2\sigma^2}\right) I_0(u_{k,\ell}v_{k,\ell}),
        \label{eq:exact_pairwise_pep}
    \end{align}
    where $Q_1(.,.)$ is the first-order Marcum Q-function, $I_0(.)$ is the zeroth-order modified Bessel function of the first kind, and 
    \begin{align*}
        u_{k,\ell}
        &=
        \sqrt{
        \frac{|\alpha|^2\|\b_k\|_2^2}{2\sigma^2}
        \left(1-\sqrt{d_{k,\ell}^{(s)}}\right)
        },\\
        v_{k,\ell}
        &=
        \sqrt{
        \frac{|\alpha|^2\|\b_k\|_2^2}{2\sigma^2}
        \left(1+\sqrt{d_{k,\ell}^{(s)}}\right)
        }.
    \end{align*}
    Since \(I_0(x) = \frac{1}{\pi}\int_{0}^{\pi}e^{x\cos \theta} d\theta >0\) for \(x\geq 0\), the second term in \eqref{eq:exact_pairwise_pep} can be removed to obtain the upper bound
    \begin{align*}
        P_{k\to \ell}
        \leq
        Q_1(u_{k,\ell},v_{k,\ell}).
    \end{align*}
    Since \(d_{k,\ell}^{(s)}\geq 0\), \(u_{k,\ell}/v_{k,\ell}\leq1\), we can use the exponential upper bound \(Q_1(a,b)\leq \exp(-(b-a)^2/2)\) \cite{corazza2002new} to obtain
    \begin{align}
        P_{k\to \ell}
        &\leq
        \exp\!\left( -\frac{(v_{k,\ell}-u_{k,\ell})^2}{2} \right) \nonumber  \\
        &= \exp\left(- \frac{|\alpha|^2\|\b_k\|_2^2}{2\sigma^2}
        \left(
        1-\sqrt{1-d_{k,\ell}^{(s)}}\right)\right).\label{eq:pep_exponential_bound_pairwisedist}
    \end{align}
    Finally, since \(d_{k,\ell}^{(s)}\geq d^{(s)}_{\min}(\cC)\) for all \(k\neq \ell\), and the function \(d\mapsto 1-\sqrt{1-d}\) is increasing on \([0,1]\), we obtain
    \begin{align}
        P_{k\to \ell}
        &\leq \exp\left(- \frac{|\alpha|^2\|\b_k\|_2^2}{2\sigma^2}
        \left( 1-\sqrt{1-d^{(s)}_{\min}(\cC)}\right)\right). \label{eq:pep_exponential_bound}
    \end{align}
    Substituting \eqref{eq:pep_exponential_bound} into the union bound \eqref{eq:Pe_union_bound} gives \eqref{eq:ml_ub}, 
    proving \cref{thm:ML_decoder_error_exp}.\hfill$\blacksquare$}

\section{Upper Bound on Minimum subspace distance of subsets of $\mathrm{RM}(2,4)$-induced BGC}

\label{apdx:rm_dmin_proof}

\begin{proposition}
\label{prop:rm24_subset_upper_bound}
Let $\cR=\mathrm{RM}(2,4)$ and let $\cC\subseteq\cR$ be a subset of size $|\cC|=M$. Define its BPSK-mapped codebook as $\widetilde{\cC}=\{\mathbf{1}-2\c:\c\in\cC\}$. Then
\begin{align}
d_{\min}^{(s)}(\widetilde{\cC}) \leq
\begin{cases}
1, & 2\leq M\leq16,\\
\frac{15}{16}, & 17\leq M\leq239,\\
\frac{3}{4}, & 240\leq M\leq1024,\\
0, & 1025\leq M\leq2048.
\end{cases}
\label{eq:rm24_subset_upper_bound}
\end{align}
\end{proposition}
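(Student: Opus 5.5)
The plan is to reduce everything to the weight distribution of $\cR=\mathrm{RM}(2,4)$ via \cref{lem:bpsk_inner_prod}. Since $\cR$ is linear with $T=16$, every pairwise Hamming distance $\dham(\c,\c')$ with $\c\neq\c'$ equals the weight of the nonzero codeword $\c\oplus\c'$. The nonzero weights of $\mathrm{RM}(2,4)$ are $\{4,6,8,10,12,16\}$ \cite{sloane1970weight}.

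By \cref{lem:bpsk_inner_prod}, the normalized inner products $|T-2d|/T$ of the BPSK images are then $\tfrac12,\tfrac14,0,\tfrac14,\tfrac12,1$. Hence every pairwise subspace distance lies in $\{\tfrac34,\tfrac{15}{16},1,\tfrac{15}{16},\tfrac34,0\}$:
\begin{itemize}
\item distances $4$ or $12$ give $\tfrac34$;
\item distances $6$ or $10$ give $\tfrac{15}{16}$;
\item distance $8$ gives $1$;
\item distance $16$ (complementary codewords) gives $0$.
\end{itemize}
So $d^{(s)}_{\min}(\widetilde{\cC})$ is determined by which of these distances occur inside $\cC$, and each threshold amounts to showing that a large enough $\cC$ is forced to contain a pair of the next worse type.

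The first two regimes are direct. For $M\le16$ the bound $1$ is trivial. For $M\ge17$, having $d^{(s)}_{\min}=1$ would require $M$ mutually orthogonal nonzero vectors in $\CC^{16}$, which is impossible. Hence some pair has distance $\neq 8$, and the next best value gives $d^{(s)}_{\min}\le \tfrac{15}{16}$.

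The last regime is a pigeonhole argument. Since $\mathbf{1}\in\cR$, the map $\c\mapsto\c\oplus\mathbf{1}$ pairs the $2048$ codewords into $1024$ complementary pairs. Any $\cC$ with $M\ge1025$ contains both members of some pair, i.e.\ two BPSK codewords $\tilde\u=-\tilde\v$. These span the same line, so $d^{(s)}_{\min}=0$ (cf.\ \cref{rem:min_subspace_0}).

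The main step is the threshold $240$. Suppose $d^{(s)}_{\min}(\widetilde{\cC})>\tfrac34$. Then all pairwise distances in $\cC$ lie in $\{6,8,10\}$, which implicitly excludes complements. Form $\cC'=\cC\cup(\cC\oplus\mathbf{1})$, which has size $2M$.
\begin{itemize}
\item For distinct $\c,\c'\in\cC$, the cross distance is $\dham(\c,\c'\oplus\mathbf{1})=16-\dham(\c,\c')\in\{6,8,10\}$.
\item Within the complemented copy, distances are unchanged.
\item The distance $\dham(\c,\c\oplus\mathbf{1})$ equals $16$.
\end{itemize}
Thus $\cC'$ is a binary code of length $16$ with minimum distance at least $6\ge5$, so it corrects $2$ errors. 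The binary Hamming bound \cite{macwilliams1977theory} then gives
\[
2M\le \frac{2^{16}}{1+16+\binom{16}{2}}=\frac{65536}{137}<479,
\]
so $M\le239$. Hence for $M\ge240$ some pair has distance $4$ or $12$, and $d^{(s)}_{\min}\le\tfrac34$.

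The delicate point is the doubling trick. Applying the Hamming bound to $\cC$ alone gives only $M\le478$; the restriction that distances stay at most $10$ must be used through the complements to reach $239$. One must also check the cross pairs carefully, so that the doubled set is genuinely a distance-$6$ code.
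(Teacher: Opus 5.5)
Your proof is correct and follows the paper's argument essentially step for step. The weight distribution of $\mathrm{RM}(2,4)$ fixes the possible subspace distances to $\{0,\tfrac34,\tfrac{15}{16},1\}$, and orthogonality caps the $d^{(s)}_{\min}=1$ case at $M\le 16$. The complement-closure doubling together with the binary Hamming bound for $t=2$ gives $M\le 239$, and pigeonhole over the $1024$ complementary pairs handles $M\ge 1025$. The only cosmetic difference is that you keep distance $8$ in the doubled code, whereas the paper first uses the $M\ge17$ step to restrict to $\{6,10\}$. This changes nothing, since the doubled code's minimum distance is still at least $6$.
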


\begin{proof}
The weight of a binary vector $\c$ is the number of non-zero entries in $\c$ denoted as $\mathrm{wt}(\c)\triangleq \|\c\|_0$. The weight enumerator of $\cR$ is a polynomial defined as 
\begin{align*}
    W_{\cR}(z) \triangleq \sum_{\c\in \cR} z^{\mathrm{wt}(\c)} = \sum_{n=0}^{16} |A_{\cR}(n)| z^n
\end{align*}
where $A_{\cR}(n) \triangleq \{\c\in \cR \text{ s.t. } \mathrm{wt}(\c)=n\} \subset\cR$. 
The weight enumerator of $\cR$ is given by \cite{sloane1970weight} (see \cite[eq. (49)]{hu2021mitigating}) 
\begin{align*}
W_{\cR}(z)\!=\!1+140z^4+448z^6+870z^8 +448z^{10}+140z^{12}+z^{16}.
\end{align*}
This implies the weight of $\c\in \cR$ only takes values in $\mathrm{wt}(\c)\in \{0,4,6,8,10,12,16\}$. Since $\cR$ is a linear codebook, the set of possible weights also corresponds to the set of possible Hamming distances between any two codewords in $\cR$. Hence, the Hamming distance between any two distinct codewords $\c_i,\c_j\in \cR$ takes value in
\begin{align}
d^{(h)}(\c_i,\c_j) \in \{4,6,8,10,12,16\}. \label{eq:dham_values_RM}
\end{align}
Let $\tilde{\c}_i=1-2\c_i,\tilde{\c}_j=1-2\c_j$ be the BPSK-mapped codewords. Using \cref{lem:bpsk_inner_prod}, the subspace distance between $\Span{\tilde{\c}_i}$ and $\Span{\tilde{\c}_j}$ is given by
\begin{align}
d^{(s)}(\tilde{\c}_i,\tilde{\c}_j) &= 1-\left( 1-\frac{2\dham(\c_i,\c_j)}{16} \right)^2 . \label{eq:ds_dham_relation}
\end{align}
Based on the values taken by $\dham(\c_i,\c_j)$ in \eqref{eq:dham_values_RM}, the subspace distance between $\Span{\tilde{\c}_i}$ and $\Span{\tilde{\c}_j}$ takes values 
\begin{align}
 \ds(\tilde{\c}_i,\tilde{\c}_j) \in \left\{0, \frac{3}{4},\frac{15}{16},1 \right\}   \label{eq:rm24_possible_subspace_dist}
\end{align}


Now, $d_{\min}^{(s)}(\widetilde{\cC})=1$ implies all codewords in $\widetilde{\cC}$ are mutually orthogonal. Since $\widetilde{\cC}\subset \mathbb{R}^{16}$, there can be at most $16$ mutually orthogonal vectors. 
Therefore
\begin{align}
d_{\min}^{(s)}(\widetilde{\cC}) \leq \frac{15}{16}, \qquad \forall M\geq17. \label{eq:ds_bound_M17}
\end{align}

We next show that no subset of size $M\geq240$ can have minimum subspace distance strictly larger than $3/4$. Assume there exists $\cC'\subset \cR$ such that $|\cC'|=M\geq 240$ and $d_{\min}^{(s)}(\widetilde{\cC'}) > \frac{3}{4}.$ By \cref{eq:rm24_possible_subspace_dist,eq:ds_bound_M17}, every pair $\widetilde{\c}'_i,\widetilde{\c}'_j\in\widetilde{\cC'}$, $i\neq j$, must satisfy
$\ds(\widetilde{\c}'_i,\widetilde{\c}'_j) = \frac{15}{16}$. Using \cref{eq:ds_dham_relation}, this implies 
\begin{align}
 \dham(\c'_i,\c'_j)\in \{6,10\}\ \forall \c'_i,\c'_j \in \cC',\ \c'_i\neq \c'_j. \label{eq:dham_bounds}
\end{align}

Define the complement closure of $\cC'$ as $\cD' \triangleq \cC'\cup(\cC'\oplus\mathbf{1}).$
Since $\cC'$ contains no complementary pair (otherwise $d_{\min}^{(s)}(\widetilde{\cC'}) = 0$; see \cref{rem:min_subspace_0}), the size of the complement closure is $|\cD'|=2M$ and $\cD' \subseteq \cR$ since $\cR$ is linear and $\mathbf{1}\in \cR$. 

Any element of $\u'\in\cD'$ can be written as
$\u'=\c'_i\oplus a\mathbf{1}$ for some $\c'_i\in\cC'$ and
$a\in\{0,1\}$. For any two binary vector $\x,\y\in \{0,1\}^{16}$
and $a,b\in\{0,1\}$, it can be verified that
\begin{align} 
\dham \left( \x\oplus a\mathbf{1}, \y\oplus b\mathbf{1} \right) =
\begin{cases}
\dham(\x,\y), & a=b,\\
16-\dham(\x,\y), & a\neq b.
\end{cases}
\label{eq:complement_hamming_distance}
\end{align}
Combining \eqref{eq:dham_bounds} with \eqref{eq:complement_hamming_distance} yields $ \dham(\u'_i,\u'_j)\!\in\!\{6,10,16\}\ \forall \u'_i,\u'_j \in \cD',\ \u'_i\neq \u'_j$ \footnote{Hamming distance $\dham(\u'_i,\u'_j) = 16$ for $\u'_i,\u'_j\in \cD'$ occurs when $\u'_i = \c'\in\cC'$ and $\u'_j = \c'\oplus \mathbf{1}$ due to complement closure.}, proving that $\dham_{\min}(\cD') \geq 6$. 

The binary Hamming bound states that for $t=\lfloor(\dham_{\min}(\cD')-1)/2\rfloor$, we have \cite[eq. (29)]{macwilliams1977theory}
\begin{align}
|\cD'| \sum_{r=0}^{t}\binom{16}{r} &\leq 2^{16}. \label{eq:binary_hamming_bound}
\end{align}
Now, $\dham_{\min}(\cD') \geq 6$ implies $t\geq 2$ which, 
per 
\eqref{eq:binary_hamming_bound} implies
\begin{align*}
|\cD'| \sum_{r=0}^{2}\binom{16}{r} = (2M)(137)\leq |\cD'| \sum_{r=0}^{t}\binom{16}{r}  &\leq 2^{16}.
\end{align*}
This gives the upper bound
\begin{align*}
M
\leq \left\lfloor \frac{65536}{2(137)} \right\rfloor = 239,
\end{align*}
which contradicts 
the 
assumption $M\geq240$. Therefore,
\begin{align*}
d_{\min}^{(s)}(\widetilde{\cC}) \leq \frac{3}{4}, \qquad \forall M\geq240.
\end{align*}

Finally, the $2048$ codewords of $\mathrm{RM}(2,4)$ form
$1024$ complementary pairs. Hence, if $M>1024$, the
pigeonhole principle guarantees that $\cC$ contains at least
one complementary pair which implies $\widetilde{\cC}$ contains atleast two codewords that span the same Grassmannian line, so
\begin{align*}
d_{\min}^{(s)}(\widetilde{\cC}) = 0, \qquad \forall M\geq 1025.
\end{align*}
This proves \cref{eq:rm24_subset_upper_bound}.
\end{proof}


\end{document}